\theoremstyle{plain}
\newtheorem{theorem}{Theorem}[section]
\newtheorem{lemma}{Lemma}[section]
\newtheorem{definition}{Definition} [section]
\newtheorem{corollary}{Corollary} [section]
\theoremstyle{definition}
\newtheorem{example}{Example}[section]
\newtheorem{remark}{Remark}[section]
\begin{document}	
\title{Bounds and Optimal Constructions of Generalized Merge-Convertible Codes for Code Conversion into LRCs}
\author{Haoming Shi,~\IEEEmembership{}
        Weijun~Fang,~\IEEEmembership{}
        Yuan~Gao~\IEEEmembership{}
\IEEEcompsocitemizethanks{\IEEEcompsocthanksitem Haoming Shi, Weijun Fang and Yuan Gao are with State Key Laboratory of Cryptography and Digital Economy Security, Shandong University, Qingdao, 266237, China, Key Laboratory of Cryptologic Technology and Information Security, Ministry of Education, Shandong University, Qingdao, 266237, China and School of Cyber Science and Technology, Shandong University, Qingdao, 266237, China (emails: 202421328@mail.sdu.edu.cn, fwj@sdu.edu.cn, gaoyuan862023@163.com).
}
\thanks{The work is supported in part
by the National Key Research and Development Program of China under Grant Nos. 2022YFA1004900 and 2021YFA1001000, the National Natural Science Foundation of China under Grant No. 62201322, the Natural Science Foundation of Shandong Province under
Grant No. ZR2022QA031, and the Taishan Scholar Program of Shandong Province. {\it (Corresponding Author: Weijun Fang)}.}
\thanks{Manuscript submitted }}

\maketitle
\begin{abstract}
Error-correcting codes are essential for ensuring fault tolerance in modern distributed data storage systems. However, in practice, factors such as the failure rates of storage devices can vary significantly over time, resulting in changes to the optimal code parameters. To reduce storage cost while maintaining efficiency, Maturana and Rashmi introduced a theoretical framework known as code conversion, which enables dynamic adjustment of code parameters according to device performance. In this paper, we focus exclusively on the bounds and constructions of generalized merge-convertible codes. First, we establish a new lower bound on the access cost when the final code is an $(r,\delta)$-LRC. This bound unifies and generalizes all previously known bounds for merge conversion, where the initial and final codes are either LRCs or MDS codes.  We then construct a family of access-optimal  MDS convertible codes by leveraging subgroups of the automorphism group of a rational function field. It is worth noting that our construction is also per-symbol read access-optimal. Next, we further extend our MDS-based construction to design access-optimal convertible codes for the conversion between $(r,\delta)$-LRCs with parameters that have not been previously reported. Finally, using the parity-check matrix approach, we present a construction of access-optimal convertible codes that enable merge conversion from MDS codes to an $(r,\delta)$-LRC. To the best of our knowledge, this is the first explicit optimal construction of code conversion between MDS codes and LRCs. All of our constructions are performed over finite fields whose sizes grow linearly with the code length.

\end{abstract}

\begin{IEEEkeywords}
Generalized convertible codes, access cost, MDS codes, $(r,\delta)$-locally repairable codes, algebraic geometry codes
\end{IEEEkeywords}

\IEEEpeerreviewmaketitle

\section{Introduction}
\label{se1}

Erasure codes have gained significant attention in recent years for their application in distributed storage systems, where they protect against node failures with minimal overhead \cite{sathiamoorthy2013xoring}. In these systems, \( k \) data symbols are encoded into \( n \) codeword symbols and distributed across \( n \) distinct storage nodes. The code rate, denoted as \( \frac{k}{n} \), is selected based on the failure rate of the storage devices. In practical settings, the failure rate of storage devices in large-scale distributed systems can vary significantly over time. However, the erasure code rate typically remains fixed, which can lead to either excessive resource consumption or increased risk as time progresses. To mitigate this issue, \cite{kadekodi2019cluster} demonstrated that dynamically adjusting the code rate in response to fluctuating failure rates results in considerable savings in both storage space and operational costs. To adjust the code rate, one needs to modify the parameters of the code. This process, formally introduced in \cite{maturana2022convertible}, is referred to as code conversion.

\subsection{Related work}

Realizing code conversion is essentially equivalent to constructing a convertible code (see Definition \ref{concod}). Maturana and Rashmi initiated the theoretical study of convertible codes in \cite{maturana2022convertible}, focusing on merge conversion from MDS codes to an MDS code and they used access cost as a metric for conversion efficiency. They derived a lower bound on the access cost when both the initial codes and the final code are MDS codes, and presented two constructions that are optimal with respect to this bound.

In a follow-up study, Maturana et al. \cite{maturana2020access} extended the framework of MDS convertible codes to a more general set of parameters, including the split regime. Later, Maturana and Rashmi investigated the conversion problem for MDS array codes in \cite{maturana2023bandwidth}, where they established lower bounds on bandwidth cost for both merge and split scenarios. By leveraging the piggybacking framework, they also proposed explicit constructions of MDS convertible array codes that achieve optimal bandwidth cost with respect to these bounds. 

Other related work includes \cite{rashmi2017piggybacking} and \cite{mousavi2018delayed}, where researchers explored bandwidth-efficient schemes by introducing extra parities to MDS array codes. In the language of convertible codes, they considered the regime where \( t = 1 \) and \( n_I < n_F \). In another direction,  Su et al. explored the code conversion of polynomial codes in the context of distributed matrix multiplication in \cite{su2022local}. Meanwhile, Wu et al. proposed a redistribution-based method for converting RS codes with low access cost in \cite{wu2020enabling}. Recently, Ge et al. \cite{ge2024mds} generalized the definition of convertible codes by allowing each initial and final code to have different parameters. They analyzed the lower bounds on access cost for both merge and split conversions when all the involved codes are MDS codes and further provided constructions that achieve these bounds.

 To address dynamic workload patterns in real-world storage systems, Xia et al. \cite{xia2015tale} proposed a storage architecture that employs two different LRCs to separately encode frequently accessed data and less frequently accessed data. A key component of their system is a conversion mechanism between the two LRCs with different information-symbol locality parameters, which reduces the number of nodes accessed during conversion, thereby minimizing the read access cost. More recently, Maturana and Rashmi \cite{maturana2023locally} investigated the code conversion problem for locally repairable array codes with information locality. By using the piggybacking techniques, they introduced a construction that enables more efficient code conversion (in terms of bandwidth cost) compared to the conventional approach. In addition, Kong \cite{kong2024locally} introduced a construction of LRC convertible codes with optimal access cost based on the Tamo-Barg code.

Nonetheless, to the best of our knowledge, prior work on read access cost has typically focused on the total number of reads, while the per-symbol read access cost (see the definition in Remark \ref{perred}) has not been extensively studied. Additionally, research on the conversion between LRCs is limited, especially for variants such as $(r,\delta)$-LRCs, whose conversion has not been addressed in the literature (but see the Remark \ref{remark49}). Furthermore, previous studies have only considered conversions within the same class of codes, whereas the conversion from MDS codes to $(r,\delta)$-LRCs has not been explored.

\subsection{Main Contributions and Comparison}
In this paper, we focus exclusively on constructing access-optimal generalized merge-convertible codes. Below, we summarize the main contributions of this paper.
\begin{itemize}
    \item First, we derive a new lower bound on the access cost when the final code is an $(r,\delta)$-LRC, which answers an open problem proposed in \cite{kong2024locally}. Our lower bound applies to a wide range of scenarios, and it subsumes all the currently known bounds as special cases.
    \item Second, by leveraging the structure of the automorphism group of a rational function field and examining the action of carefully selected automorphisms on its rational places, we propose a new construction of access-optimal convertible codes from MDS codes to an MDS code, which is also per-symbol read access-optimal.  
    \item Third, by selecting two nested subgroups $\mathcal{H} \subset \mathcal{G}$ within the automorphism group of a rational function field, we use $\mathcal{H}$ to ensure that the constructed code satisfies the $(r, \delta)$ local property. Then, by carefully choosing elements from $\mathcal{G} \setminus \mathcal{H}$ and applying them to specific rational places of the function field, we are able to construct convertible codes that merge $(r,\delta)$-LRCs into an $(r,\delta)$-LRC. Such a construction is access-optimal with respect to the bound we established in Section \ref{sec3}.
    \item Finally, by employing GRS codes and the parity-check matrix approach to construct optimal $(r, \delta)$-LRCs, we present a construction of an access-optimal convertible code that converts MDS codes to an optimal $(r,\delta)$-LRC. To the best of our knowledge, this is the first explicit optimal construction of code conversion between MDS codes and LRCs.
\end{itemize}

All of our constructions are performed over finite fields whose sizes grow linearly with the code length. Moreover, the range of our parameters is quite broad, making our constructions highly applicable to practical storage systems.

We compare our main results with the existing related results in Table \ref{tab:table1} and Table \ref{tab:table2}.
Since we only consider merge regime, in the table, we use code $\mathcal{C}^{I_i}$ with parameter $[n_{I_i},k_{I_i}]$ to denote the initial codes for $i \in [t]$,  and use code $\mathcal{C}^{F}$ with parameter $[n_{F},k_{F}]$ to denote the final code, we also use $l_{I_i} \triangleq n_{I_{i}}-k_{I_i} $ for $i \in [t]$ and $l_{F} \triangleq n_{F}-k_{F}$ to denote the redundancy of these codes.

\renewcommand{\arraystretch}{3}
\begin{table}[tbhp] 
\scriptsize
\caption{LOWER BOUNDS ON ACCESS COST FOR GENERALIZED MERGE-CONVERTIBLE CODE}
\label{tab:table1}
\begin{center}
    \begin{tabular}{|c|c|c|c|p{2.7cm}|} \hline
        \textbf{Condition On Codes} & \textbf{ Read Access Cost} & \textbf{ Write Access Cost}& \textbf{ Reference} \\ \hline
        
        \makecell{$\mathcal{C}^{I_1},\cdots, \mathcal{C}^{I_t} \text{~and~} \mathcal{C}^{F}$ \\ $\text{~are MDS codes }$} & \makecell{$\sum\limits_{i\in [t],l_F \leq \text{~min~} \{k_{I_i},l_{I_i}\}}l_F $ \\ + $\sum\limits_{i\in [t],l_F > \text{~min~} \{k_{I_i},l_{I_i}\}}k_{I_i} $} & $l_{F}$ & \cite[Theorem 1]{ge2024mds} \\ \hline
         \makecell{$\mathcal{C}^{I_1}=\cdots =\mathcal{C}^{I_t}$, \\ $\mathcal{C}^{F}\text{~is an $r$-LRC }$} &   $\makecell{ \begin{cases}
             k_F \text{,~if~} \Delta \leq 0 \text{~or~} d_F > n_{ I_1} - k_{I_1} +1; \\ k_F-t\lceil \frac{r\Delta}{r+1} \rceil, \text{~otherwise.}
         \end{cases}  \\ \text{~where~} \Delta = k_{I_1}+  n_F-k_F \\ -2d_F+3- \lceil \frac{(t-1)k_{I_1}}{r} \rceil  }$   &  $ \makecell{-(t-1)n_F+(t-1)k_F\\+td_F-2t+  t \lceil \frac{(t-1)k_{I_1}}{r} \rceil  } $   & \cite[Theorem IV.4]{kong2024locally} \\ \hline
        $\mathcal{C}^{F}\text{~is an $(r,\delta)$-LRC  }$  & $\makecell{ k_F - \sum_{ \substack{i \in [t], ~\widetilde{\Delta_i} >0 \\ and~ d_F \leq n_{I_{i}}-k_{I_i}+1} }  [\widetilde{\Delta_i} - (\delta-1) \lfloor \frac{\widetilde{\Delta_i}}{r+\delta-1}  \rfloor] \\ \text{~where~} \widetilde{\Delta_i} = k_{I_i}+  n_F-k_F-2d_F \\ +2-( \lceil \frac{k_F-k_{I_i}}{r} \rceil -1)(\delta-1) }$ & $ \makecell{-(t-1)n_F+(t-1)k_F+td_F\\-t+ \sum_{i=1}^{t}( \lceil \frac{k_F-k_{I_i}}{r} \rceil -1)(\delta-1) } $ & Theorem \ref{bouthm} \\ \hline
        
         \makecell{$\mathcal{C}^{F} \text{~is an}$ $(r,\delta)$-LRC   with \\ $k_F=\sum\limits_{i=1}^{t}k_ir$  and $k_{I_i}=k_ir$ } & $k_F - \sum_{ \substack{i \in [t], ~-\frac{n_F}{r
          +\delta-1} + k_i+ \frac{k_F}{r} >0 \\~and ~ d_F \leq n_{I_{i}}-k_{I_i}+1} }    [k_F+ rk_i-\frac{rn_F}{r+\delta-1}] $ & $n_F-(r+\delta-1)\sum\limits_{i=1}^{t}k_i$ & Corollary \ref{rdelcor} \\ \hline
        
    \end{tabular}
\end{center}
\end{table}
\renewcommand{\arraystretch}{1.0}

\begin{table}[tbhp]
\scriptsize
\renewcommand{\arraystretch}{2.5}  
\caption{SOME KNOWN CONSTRUCTIONS OF ACCESS-OPTIMAL GENERALIZED MERGE-CONVERTIBLE CODES}
\label{tab:table2}
\begin{center}
\begin{tabular}{|c|p{3.6cm}|p{4.6cm}|p{3.1cm}|}
\hline
\textbf{Construction and code type} & \textbf{Restrictions} & \textbf{Field size $q$ requirement} & \textbf{Reference} \\ \hline

\makecell{General \\ all codes are MDS codes} &
$C^{I_1} = C^{I_2} = \cdots = C^{I_t}$, \newline $l_F \leq \min\{k_{I_1}, l_{I_1}\}$ &
$q \geq \max\left\{2^{\mathcal{O}\left((n_F)^3\right)},\ n_{I_1} - 1\right\}$ &
\cite[Theorem 21]{maturana2022convertible} \\ \hline

\makecell{Hankel-I \\ all codes are MDS codes} &
$C^{I_1} = C^{I_2} = \cdots = C^{I_t}$, \newline $l_F \leq \lfloor l_{I_1}/t \rfloor$ &
$q \geq \max\{n_F, n_{I_1}\} - 1$ &
\cite[Example 22]{maturana2022convertible}  \\ \hline

\makecell{Hankel-II \\ all codes are MDS codes} &
$C^{I_1} = C^{I_2} = \cdots = C^{I_t}$, \newline $l_F \leq l_{I_1} - t + 1$ &
$q \geq \max\{k_{I_1}l_{I_1}, n_{I_1} - 1\}$ &
\cite[Example 23]{maturana2022convertible} \\ \hline

\makecell{ $\mathrm{Hankel}_s$, for $t \leq s \leq l_{I_1}$ \\ all codes are MDS codes} &
$C^{I_1} = C^{I_2} = \cdots = C^{I_t}$, \newline $l_F \leq (s-t+1)\lfloor l_{I_1}/s \rfloor + \max\{(l_{I_1} \bmod s) - t + 1, 0\}$ &
$q \geq \max\left\{sk_{I_1} + \lfloor r_{I_1}/s \rfloor -1,\ n_{I_1} - 1\right\}$ &
\cite[Theorem 24]{maturana2022convertible} \\ \hline

\makecell{GRS \\ all codes are MDS codes} &
$C^{I_1} = C^{I_2} = \cdots = C^{I_t}$,  \newline  $l_F \leq \min\{k_{I_1}, l_{I_1}\}$ & $ \max\{k_{I_1}, l_{I_1}\} |(q-1)$,  \newline 
$q \geq (t+1)\max\{k_{I_1}, l_{I_1}\} + 1$ &
\cite[Corollary II.2]{kong2024locally} \\ \hline

\makecell{Extended GRS \\ all codes are MDS codes} &
None &
$q \geq \max\{n_{I_1}, \cdots, n_{I_t}, n_F\} - 1$ &
\cite[Theorem 4]{ge2024mds} \\ \hline

\makecell{Tamo-Barg codes \\ all codes are optimal LRCs} &
$C^{I_1} = C^{I_2} = \cdots = C^{I_t}$, \newline $l_F \leq \text{min~} \{ k_{I_1},l_{I_1} \}, r|k_{I_1}$ &
$ \frac{k_{I_1}}{r}(r+1)  |(q-1) $, \newline $ q \geq \frac{k_{I_1}}{r}(r+1) \text{max}\{ t+1, \lceil \frac{l_{I_1}r}{k_{I_1}}\rceil +2 \}+1$ &
\cite[Corollary III.2]{kong2024locally} \\ \hline

\makecell{Construction in section \ref{sec4} \\ all codes are MDS codes} &
 $ t \leq l_F $ &
 $\exists$ a subgroup of order \( l_F  \) in  $\mathrm{PGL}_2(q)$, \newline  $ q \geq \text{max~}\{ n_{I_1},\cdots, n_{I_t},n_F\}-1 $ &
Theorem \ref{mdsthm} \\ \hline

\makecell{Construction in section \ref{sec5} \\ all codes are optimal $(r,\delta)$-LRCs} &
 $ t \leq l_F/(r+\delta-1
)$ \newline $r |k_{I_i}$ $\text{~for~} i \in [t]$ &
 $\exists$ nested subgroups of order \( l_F  \) and order \( r+\delta-1  \) in  $\mathrm{PGL}_2(q)$, \newline  $ q \geq \text{max~}\{ n_{I_1},\cdots, n_{I_t},n_F\}-1 $ &
Theorem \ref{rdelthm} \\ \hline

\makecell{Construction in section \ref{sec6} \\ MDS codes to an optimal $(r,\delta)$-LRC} &  $k_{I_1} = k_{I_2} = \cdots = k_{I_t}=k_I$, \newline $   k_{I} + 1 \leq \langle tk_I\rangle_r , d_F \leq k_{I}, d_F \leq n_{I_{i}}-k_{I} +1 $ for $ i \in[t]$
 & $q \geq n_{I_{i}}$ for $i \in [t]$, \newline $q \geq n_F-(t'-1)(\delta-1)$ for $t=st'$
  & Theorem \ref{thmVI.2}  \\ \hline

\end{tabular}
\end{center}
\end{table}

\renewcommand{\arraystretch}{1}

\subsection{Organization of the Paper}
In Section \ref{sec2}, we provide the necessary background for this paper, including the definitions of locally repairable codes, generalized merge-convertible codes, algebraic geometry codes, rational function field and its automorphism group. Section \ref{sec3} establishes a new lower bound on the access cost of generalized merge-convertible codes when the final code is an $(r,\delta)$-LRC. In Section \ref{sec4}, we propose a general method for constructing access-optimal MDS generalized merge-convertible codes from general subgroups of  $\mathrm{PGL}_2(q)$, and we provide explicit constructions based on cyclic subgroups of order dividing $q+1$. Section \ref{sec5} extends this framework to design new $(r,\delta)$ merge-convertible codes and demonstrates that our constructions attain the access cost lower bound established in Section \ref{sec3}. We also provide explicit examples using subgroups of the affine linear group and a dihedral group which can be viewed as a subgroup of $\mathrm{PGL}_2(q)$. In Section \ref{sec6}, we present the first construction of access-optimal generalized merge-convertible codes that convert MDS codes into an $(r,\delta)$-LRC. Finally, we conclude the paper by summarizing our main results in Section \ref{sec7}.

\section{Preliminaries}
\label{sec2}

\subsection{Locally Repairable Codes}
Throughout this paper, we use $[a,b]$ to denote the set $\{a,a+1,\cdots,b\}$ for any two integers $a \leq b$ and $[n]$ to represent the set $\{1,2,\cdots,n \}$.

The concept of locally repairable codes (LRCs) is essential for enhancing the reliability and efficiency of distributed storage systems. Informally, a block code $\mathcal{C} \subseteq \mathbb{F}_q^n$ is said to have locality $r$ if its value of any coordinate can be recovered by accessing at most $r$ other coordinates.  The formal definition of a locally repairable code with locality $r$ is given as follows.

\begin{definition}[Locally Repairable Codes,\cite{gopalan2012locality}] \label{lrcdf}
    A code $\mathcal{C}$ is called an $\mathrm{LRC}$ with locality $r$, if for every $i \in [n]$, there exists a subset $A_i \subseteq [n]\setminus \{i\}$ with $|A_i| \leq r$ and a function $\chi_i$, such that for each codeword $x \in \mathcal{C}$, the value $x_i$ is given by   \[ x_i = \chi_i(\{x_j : j \in A_i\}).  \]

This can also be expressed as follows. For any $a \in \mathbb{F}_q$, consider the set of codewords
\[
\mathcal{C}(i, a) = \{ x \in \mathcal{C} : x_i = a\}, i\in [n].
\]
The code $\mathcal{C}$ is said to have locality $r$ if for every $i$, there exists a subset $A_i \subseteq [n]\setminus \{i\}$ of cardinality at most $r$ such that the restrictions of  $\mathcal{C}(i, a)$ to the coordinates in $A_i$  are disjoint for different values of $a$, i.e.
\[
\mathcal{C}_{A_i}(i, a) \cap \mathcal{C}_{A_i}(i, a')=\emptyset, \forall a \neq a'.
\]
\end{definition}

As a natural generalization of the classical Singleton bound, \cite{papailiopoulos2014locally} established a bound on the minimum distance of an arbitrary $(n,k;r)$-LRC. We refer to the following version from \cite{tamo2014family} here.

\begin{theorem} [Singleton-type Bound, \cite{tamo2014family}] 
    Let $\mathcal{C}$ be an $(n,k,d; r)$-$\mathrm{LRC}$ of cardinality $|\Sigma|^k$ over an alphabet $\Sigma$, then the Singleton-type bound for an $\mathrm{LRC}$ is given by:
    \begin{equation}\label{Sbound1}
     d \leq n- k - \lceil \frac{k}{r} \rceil +2.
    \end{equation}
\end{theorem}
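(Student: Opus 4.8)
The plan is a standard puncturing argument: I will produce a large set of coordinates on which $\mathcal C$ is not yet ``full rank'', and read off the distance bound from it. Regard $\mathcal C$ as linear with a full-rank $k\times n$ generator matrix $G$ (the general case runs the same way with $\log_{|\Sigma|}|\mathcal C|_S|$ in place of the rank, using the disjointness formulation of Definition~\ref{lrcdf}); for $S\subseteq[n]$ let $\mathrm{rk}(S)$ denote the rank of the columns of $G$ indexed by $S$. The reduction is: if $\mathrm{rk}(S)\le k-1$, the left kernel of $G_S$ is nonzero, giving a nonzero codeword supported on $[n]\setminus S$, hence $d\le n-|S|$. So it suffices to exhibit a set $S$ with $\mathrm{rk}(S)=k-1$ and $|S|\ge k+\lceil k/r\rceil-2$. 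I will also use two harmless normalizations: assume $r<k$ (otherwise $\lceil k/r\rceil=1$ and the claim reduces to the ordinary Singleton bound $d\le n-k+1$), and assume every coordinate $i$ has a repair set $A_i$ of size \emph{exactly} $r$, writing $R_i:=A_i\cup\{i\}$ --- enlarging a smaller repair set to size $r$ preserves the locality property of Definition~\ref{lrcdf}.

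Next, grow $S$ greedily, one repair set at a time. Start from $S_0=\varnothing$; while $\mathrm{rk}(S_\ell)\le k-2$, pick $j\notin S_\ell$ with $G_j\notin\mathrm{span}(G_{S_\ell})$ (possible since $\mathrm{rk}(S_\ell)<k$) and set $S_{\ell+1}=S_\ell\cup R_j$; stop at the first index $L$ with $\mathrm{rk}(S_L)\ge k-1$. The crucial per-round estimate is $t_\ell\ge s_\ell+1$, where $t_\ell=|S_\ell|-|S_{\ell-1}|$ and $s_\ell=\mathrm{rk}(S_\ell)-\mathrm{rk}(S_{\ell-1})\in\{1,\dots,r\}$: since $x_j$ is a linear function of $x_{A_j}$ we have $G_j\in\mathrm{span}(G_{A_j})$, hence $\mathrm{span}(G_{S_\ell\cup R_j})=\mathrm{span}(G_{S_\ell}\cup G_{A_j\setminus S_\ell})$, whence $s_\ell\le|A_j\setminus S_\ell|=t_\ell-1$ --- each round buys at least one coordinate beyond the rank it spends. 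If $\mathrm{rk}(S_L)=k-1$ take $S=S_L$; otherwise delete from $S_L$ exactly $\mathrm{rk}(S_L)-(k-1)$ pivot columns to drop the rank to $k-1$. Summing $t_\ell\ge s_\ell+1$ over the $L$ rounds and using $L\ge\lceil(\sum_\ell s_\ell)/r\rceil$ together with $\sum_\ell s_\ell\ge k-1$, a one-line ceiling computation gives $|S|\ge k+\lceil k/r\rceil-2$ in both cases, and then $d\le n-|S|\le n-k-\lceil k/r\rceil+2$.

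The one step that genuinely requires care, rather than routine bookkeeping, is the per-round inequality $t_\ell\ge s_\ell+1$: a carelessly chosen $j$ could absorb a repair set already almost entirely contained in $S_\ell$ and yield no surplus coordinate, so both normalizations (repair sets of size exactly $r$, and picking $j$ so that its \emph{own} column --- the one dependent on $A_j$ --- is the genuinely new one) are essential here. The remaining ingredients are routine: the existence of a rank-increasing coordinate while $\mathrm{rk}(S_\ell)<k$, the existence of pivot columns to delete, the small ceiling arithmetic distinguishing the cases $\mathrm{rk}(S_L)=k-1$ and $\mathrm{rk}(S_L)\ge k$, and the passage to possibly nonlinear $\mathcal C$ via $\log_{|\Sigma|}|\mathcal C|_S|$ in place of $\mathrm{rk}(S)$ and the bound $|\mathcal C|_{R_i}|\le|\Sigma|^{r}$ coming from the disjointness property.
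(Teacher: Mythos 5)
The paper itself does \emph{not} prove this theorem; it cites it directly from \cite{tamo2014family} (and the result ultimately goes back to Gopalan, Huang, Simitci and Yekhanin), so there is no in-paper proof to compare against --- I will just assess your argument on its own. Your greedy covering construction is the standard proof of this bound and it is correct as written for linear codes: the per-round surplus $t_\ell\ge s_\ell+1$ follows exactly as you argue (padding each repair set to size exactly $r$, choosing $j$ with $G_j\notin\mathrm{span}(G_{S_\ell})$ so the rank genuinely increases, and $G_j\in\mathrm{span}(G_{A_j})$ together with $j\notin A_j$ supplying the free coordinate); telescoping gives $|S_L|\ge L+\mathrm{rk}(S_L)$ with $L\ge\big\lceil \mathrm{rk}(S_L)/r\big\rceil$; and the ceiling inequality $\lceil(k-1)/r\rceil\ge\lceil k/r\rceil-1$ handles the case $\mathrm{rk}(S_L)=k-1$, while the pivot-deletion case $\mathrm{rk}(S_L)=k$ gives an even better count. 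The one place you are glib is the parenthetical that the general (possibly nonlinear) case ``runs the same way'': once $\mathrm{rk}(S)$ means $\log_{|\Sigma|}\lvert\mathcal C|_S\rvert$ it need not be an integer and there are no pivot columns, so the overshoot repair has to be rephrased --- e.g.\ add the coordinates of the last recovery set one at a time, using that each single coordinate raises the entropy by at most one, and stop the first time the entropy reaches $k-1$, taking care that the ``free'' coordinate $j$ only becomes free once all of $A_j$ is present. This is routine and is how the original reference handles it, but it is not literally the same bookkeeping as the linear case; your core idea and all the key estimates are nonetheless correct.
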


The concept of LRCs can be further generalized to accommodate multiple erasures.
\begin{definition}[$(r,\delta)$-Locally Repairable Codes,\cite{kamath2014codes}]\label{rdeldf}
    A code $\mathcal{C} \subseteq \mathbb{F}_q^n $ of size $q^k$ is said to have the $(r,\delta)$-locality where $\delta \geq 2$, if each coordinate $i \in [n]$ belongs to a subset $J_i \subset [n]$ of size at most $r+\delta-1$, such that the restriction $\mathcal{C}|_{J_i}$ to the coordinates in $J_i$ forms a code of minimum distance at least $\delta$.
\end{definition}

The subset $J_i$ in this definition is called the recovery set of $i$. For general $(r,\delta)$-LRCs, the Singleton-type bound is given as follows:

\begin{theorem} [Singleton-type Bound, \cite{kamath2014codes}]
    Let $\mathcal{C}$ be an $(n,k,d; (r, \delta))$-$\mathrm{LRC}$ of cardinality $|\Sigma|^k$ over alphabet $\Sigma$. Then, we have  
    \begin{equation} \label{Sbound2}
     d \leq n- k +1 - (\lceil \frac{k}{r} \rceil -1)(\delta -1).
    \end{equation}
\end{theorem}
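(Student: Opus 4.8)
The plan is to deduce the bound from the ordinary Singleton bound by repeatedly \emph{shortening} $\mathcal{C}$ along local recovery sets. I would use three elementary facts. First, the ordinary Singleton bound $d\le n-k+1$ holds for every code over $\Sigma$, linear or not (here $k$ denotes $\log_{|\Sigma|}$ of the cardinality), and in particular for every local code $\mathcal{C}|_{J_i}$, which by Definition~\ref{rdeldf} has minimum distance at least $\delta$. Second, shortening a code at a coordinate set $T$ — keeping the codewords that agree with a fixed most-popular pattern on $T$ and then deleting $T$ — never decreases the minimum distance and decreases the log-cardinality by at most $\log_{|\Sigma|}\bigl|\mathcal{C}|_T\bigr|$ (for linear codes both are equalities, via the zero pattern). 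Third, such a shortening of an $(r,\delta)$-LRC is again an $(r,\delta)$-LRC on its surviving coordinates: for a surviving coordinate $i$, the set $J_i\setminus T$ still contains $i$, has size at most $r+\delta-1$, and its local code is a shortening of $\mathcal{C}|_{J_i}$, hence still has minimum distance at least $\delta$.

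With these in hand, I would set $m:=\lceil k/r\rceil-1$ and build a chain $\mathcal{C}=\mathcal{C}^{(0)},\mathcal{C}^{(1)},\dots,\mathcal{C}^{(m)}$. Given $\mathcal{C}^{(j-1)}$ of log-cardinality $k_{j-1}$, I would pick a coordinate $p$ of $\mathcal{C}^{(j-1)}$ that is not ``frozen'' (not constant over $\mathcal{C}^{(j-1)}$) — such $p$ exists whenever $k_{j-1}\ge 1$ — let $J'_j$ be a recovery set of $p$ in $\mathcal{C}^{(j-1)}$, and let $\mathcal{C}^{(j)}$ be the shortening of $\mathcal{C}^{(j-1)}$ at $J'_j$. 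Because $J'_j$ contains the non-frozen coordinate $p$, the local code $\mathcal{C}^{(j-1)}|_{J'_j}$ is nontrivial with minimum distance at least $\delta$, so the ordinary Singleton bound applied to it gives $\rho_j:=\log_{|\Sigma|}\bigl|\mathcal{C}^{(j-1)}|_{J'_j}\bigr|\le |J'_j|-(\delta-1)$, i.e.\ $|J'_j|-\rho_j\ge\delta-1$; moreover $\rho_j\le r$, so $k_j\ge k_{j-1}-r$ and inductively $k_j\ge k-jr$. The choice $m=\lceil k/r\rceil-1$ is exactly what guarantees $k_{j-1}\ge k-(j-1)r\ge 1$ for every $j\le m$ (the construction never stalls) and $k_m\ge k-mr\ge 1$ ($\mathcal{C}^{(m)}$ is still a genuine code); the sets $J'_1,\dots,J'_m$ are automatically pairwise disjoint, since $J'_j$ lives on the coordinates remaining after deleting $J'_1,\dots,J'_{j-1}$.

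Finally I would apply the ordinary Singleton bound to $\mathcal{C}^{(m)}$, which by the above has length $n-\sum_{j=1}^m|J'_j|$, log-cardinality at least $k-\sum_{j=1}^m\rho_j$, and minimum distance at least $d$:
\[
d\;\le\;\Bigl(n-\sum_{j=1}^m|J'_j|\Bigr)-\Bigl(k-\sum_{j=1}^m\rho_j\Bigr)+1\;=\;n-k+1-\sum_{j=1}^m\bigl(|J'_j|-\rho_j\bigr)\;\le\;n-k+1-m(\delta-1),
\]
and substituting $m=\lceil k/r\rceil-1$ yields \eqref{Sbound2}. The degenerate case $k\le r$ (where $m=0$) is just the ordinary Singleton bound for $\mathcal{C}$ itself.

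The step I expect to require the most care is the bookkeeping of the middle paragraph together with the third fact above: one must verify that shortening genuinely transports the $(r,\delta)$-locality to \emph{all} surviving coordinates (so the induction can proceed), that insisting $J'_j$ contain a non-frozen coordinate forces the local Singleton slack $|J'_j|-\rho_j\ge\delta-1$ (ruling out degenerate, all-constant local codes that would otherwise break the count), and that the log-cardinality of $\mathcal{C}^{(j)}$ stays at least $1$ throughout — which is precisely why $m$ cannot be pushed beyond $\lceil k/r\rceil-1$. Working with log-cardinalities rather than dimensions is what makes the argument cover the general, possibly nonlinear, alphabet $\Sigma$ in the statement.
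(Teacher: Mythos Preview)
Your argument is correct and is essentially the standard shortening proof that underlies the cited result. The paper itself does not prove this theorem—it is quoted from \cite{kamath2014codes} without proof—so there is no in-paper argument to compare your proposal against.
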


When $\delta =2$, an $(r,\delta)$-LRC reduces to a standard LRC as in Definition \ref{lrcdf}. Since the minimum distance of $\mathcal{C}|_{J_i}$ is at least  $\delta$, any $\delta-1$ coordinates of $J_i$ are determined by the remaining $|J_i|-(\delta-1) \leq r$ coordinates, thereby enabling local recovery.

\subsection{Generalized Convertible Code}

Now we present the formal framework introduced in \cite{maturana2022convertible} and \cite{ge2024mds} for studying code conversions.

Suppose we have $t_1$ linear initial codes $\mathcal{C}^{I_1}$, $\mathcal{C}^{I_2}$, $\cdots$, $\mathcal{C}^{I_{t_1}}$ and $t_2$ linear final codes $\mathcal{C}^{F_1}$, $\mathcal{C}^{F_2}$, $\cdots$, $\mathcal{C}^{F_{t_2}}$, where for each $i\in[t_1]$, $\mathcal{C}^{I_i}$ is an $[n_{I_i}, k_{I_i}]_q$ code with redundancy $l_{I_i} \triangleq n_{I_{i}}-k_{I_i}$, and for each $j\in[t_2]$, $\mathcal{C}^{F_j}$ is an $[n_{F_j}, k_{F_j}]_q$ code with redundancy $l_{F_j} \triangleq n_{F_{j}}-k_{F_j}$. 

\begin{definition}
    Using the above notation, we say that the codes $\mathcal{C}^{F_1} , \cdots, \mathcal{C}^{F_{t_2}}$ are generated by the codes $\mathcal{C}^{I_1} , \cdots, \mathcal{C}^{I_{t_1}}$ if there exists a surjective map $\phi$: $ \mathcal{C}^{I_1} \times \cdots \times \mathcal{C}^{I_{t_1}} \longrightarrow \mathcal{C}^{F_1} \times \cdots \times \mathcal{C}^{F_{t_2}} $.
\end{definition}

Informally, in code conversion, we need to generate the final codes from the initial codes. Moreover, since both the initial and final codes should encode the same information, we assume that 
\begin{equation} \label{dimeq}
\sum_{i=1}^{t_1} k_{I_i} = \sum_{j=1}^{t_2} k_{F_j}.
\end{equation}

Mathematically, the definition of convertible codes is very simple.
\begin{definition}[Generalized Convertible Code] \label{concod}
    Let $t_1,t_2$ be two positive integers. A $(t_1,t_2)_q$ generalized convertible code $\mathscr{C}$ over $\mathbb{F}_q$ consists of:
  \begin{itemize}
      \item $t_1$ initial codes $\mathcal{C}^{I_1}$, $\mathcal{C}^{I_2}$, $\cdots$, $\mathcal{C}^{I_{t_1}}$ and $t_2$ final codes $\mathcal{C}^{F_1}$, $\mathcal{C}^{F_2}$, $\cdots$, $\mathcal{C}^{F_{t_2}}$ defined as above satisfying Eq. \eqref{dimeq}.
       \item A surjective map $\phi$ from the Cartesian product of the initial codes onto the Cartesian product of the final codes, i.e.,  \[\exists~\phi: \mathcal{C}^{I_1} \times \cdots \times \mathcal{C}^{I_{t_1}} \twoheadrightarrow \mathcal{C}^{F_1} \times \cdots \times \mathcal{C}^{F_{t_2}}.\]
  \end{itemize}
\end{definition}
\begin{remark}
    Since the dimensions of the spaces on both sides are equal, any surjective map between them is necessarily bijective.
\end{remark}

To better characterize the access cost (see Definition~\ref{acccost}) during the conversion process from a systems perspective, we provide a more detailed discussion below. For clarity, we assume that the symbols of the initial code $\mathcal{C}^{I_i}$ are indexed by $\{i\} \times [n_{I_i}]$, whereas those of the final code $\mathcal{C}^{F_j}$ are indexed by $\{t_1 + j\} \times [n_{F_j}]$.

Specifically, each symbol in any final code must be represented as a function of some symbols from the initial codes,  symbols in the initial code $\mathcal{C}^{I_i}$ for $i \in [t_1]$ can be classified into three categories based on their roles in the conversion process:
\begin{itemize}
    \item {\bf Retired symbols}:  Symbols in $\mathcal{C}^{I_i}$ that are not used in the conversion.
    \item {\bf Unchanged symbols} for $ j \in [t_2]$: Symbols in $\mathcal{C}^{I_i}$ that remain unchanged correspond to a certain symbol in $\mathcal{C}^{F_j}$. These symbols are indexed by $\mathcal{U}_{i,j} \subseteq \{i\}\times [n_{I_i}] $, and are denoted by $\mathcal{C}^{I_i}|_{\mathcal{U}_{i,j}}$.
    \item {\bf Read symbols} for $j \in [t_2] $: Symbols in $\mathcal{C}^{I_i}$ that participate in the conversion process to generate new symbols in $\mathcal{C}^{F_j}$, but are not merely unchanged symbols. They are indexed by $\mathcal{R}_{i,j} \subseteq \{i\}\times [n_{I_i}] $, and denoted by $\mathcal{C}^{I_i}|_{\mathcal{R}_{i,j}}$.
\end{itemize}

Similarly, symbols in the final code $\mathcal{C}^{F_j}$ for $j \in [t_2] $ fall into two categories:
\begin{itemize}
    \item {\bf Unchanged symbols} for $i \in [t_1]$:  Symbols in $\mathcal{C}^{F_j}$ that remain identical to their corresponding symbols in $\mathcal{C}^{I_i}$ are indexed by a subset of $\{t_1+j\} \times [n_{F_j}]$. However, to clarify their origin, we use the same index set $\mathcal{U}_{i,j} \subseteq \{i\} \times [n_{I_i}]$ to denote these unchanged symbols, which are written as $\mathcal{C}^{F_j}|_{\mathcal{U}_{i,j}}$. Clearly, we have $ \mathcal{C}^{I_i}|_{{ \mathcal{U}_{i,j}}} = \mathcal{C}^{F_j}|_{{ \mathcal{U}_{i,j}}}$.
    \item {\bf Written symbols}: Symbols in $\mathcal{C}^{F_j}$ that are not unchanged symbols from any initial code. These are indexed by $ \mathcal{W}_j \subseteq \{t_1+j\}\times [n_{F_j}]$ and written as $\mathcal{C}^{F_j}|_{\mathcal{W}_j}$. In code conversion, these symbols are generated by the read symbols in the initial codes.
\end{itemize}

In the storage systems, maintaining unchanged symbols has numerous practical benefits because such symbols can remain in the same physical location during the conversion procedure. This greatly improves the conversion efficiency.

Graphically, to simplify notation, we uniformly use $\operatorname{Res}$ to denote the obvious restriction map from the original code to a specified domain, then we have $t_2$ surjective maps $\phi_1, \phi_2, \cdots, \phi_{t_2}$ along with the identity map $\operatorname{Id}$ ensuring that the following diagrams commute:

\begin{equation}\label{diag1}
    \begin{tikzcd}
    \mathcal{C}^{I_1} \times \cdots \times \mathcal{C}^{I_{t_1}} \arrow[r, "\phi"] \arrow[d, "\mathrm{Res}"] & \mathcal{C}^{F_1} \times \cdots \times \mathcal{C}^{F_{t_2}} \arrow[d, "\mathrm{Res}"] \\
    \mathcal{C}^{I_1}|_{\mathcal{R}_{1,j}} \times \cdots \times \mathcal{C}^{I_{t_1}}|_{\mathcal{R}_{t_{1},j} } \arrow[r,twoheadrightarrow,"\phi_j"] & \mathcal{C}^{F_{j}}|_{\mathcal{W}_j}
\end{tikzcd}
\end{equation}
and
\begin{equation}\label{diag2}
    \begin{tikzcd}
    \mathcal{C}^{I_1} \times \cdots \times \mathcal{C}^{I_{t_1}} \arrow[r, "\phi"] \arrow[d, "\mathrm{Res}"] & \mathcal{C}^{F_1} \times \cdots \times \mathcal{C}^{F_{t_2}} \arrow[d, "\mathrm{Res}"] \\
    \mathcal{C}^{I_1}|_{\mathcal{U}_{1,j}} \times \cdots \times \mathcal{C}^{I_{t_1}}|_{\mathcal{U}_{t_{1},j} } \arrow[r,"\mathrm{Id}"] & \mathcal{C}^{F_{j}}|_{\bigcup\limits_{i \in [t_1]} \mathcal{U}_{i,j}}
\end{tikzcd}
\end{equation}
for $j \in [t_2]$.

Throughout this paper, we focus exclusively on {\bf linear} initial codes and the final codes with a {\bf linear} conversion map, which means that $\phi$ is a linear map. In particular, $\mathscr{C}$ is said to be an MDS (resp. $(r,\delta)$) generalized convertible code if all initial and final codes are MDS codes (resp. optimal $(r,\delta)$-LRCs).

Given a generalized convertible code, we use the access cost as a metric to evaluate the efficiency of the conversion process.

\begin{definition}[Access Cost,\cite{maturana2022convertible}]\label{acccost}
    Let $\mathscr{C}$ be a $(t_1,t_2)_q$ generalized convertible code. We define the read access cost as 
    \[\rho_r=\sum_{i \in [t_1] }|\cup_{j \in [t_2]} \mathcal{R}_{i,j} |,\] which represents the total number of symbols read during the conversion process.  Similarly, the write access cost is given by 
    \[\rho_w=\sum_{j \in [t_2] }(n_{F_j}-\sum_{i\in [t_1]}|\mathcal{U}_{i,j}|)\]
    which represents the total number of symbols written during the conversion process. The access cost of a generalized convertible code $\rho$ is the sum of these two values, i.e.,
    \[\rho=\rho_r+\rho_w.\]
    A convertible code is said to be access-optimal if it minimizes the total access cost among all convertible codes with the same parameters.
\end{definition}

Two categories of generalized convertible codes are of particular interest.
\begin{definition} [\cite{ge2024mds}]
    Let $\mathscr{C}$ be a $(t_1,t_2)_q$ generalized convertible code. Then \\
    1) $\mathscr{C}$ is called a generalized merge-convertible code if $t_1 > 1$ and $t_2 = 1$; \\
    2) $\mathscr{C}$ is called a generalized split-convertible code if $t_1 = 1$ and $t_2 > 1$.
\end{definition}

For the access cost, several results have already been established, we list them here.
\begin{theorem} [\cite{ge2024mds}] \label{mdsbou}
    Let $\mathscr{C}$ be a $(t,1)_q$ generalized merge-convertible code  in which all the initial codes and the final code are $\mathrm{MDS}$ codes, then the access cost satisfies
    \[
    \rho \geq l_{F_1} + \sum\limits_{i \in [t],~l_{F_{1}} \leq \text{~min~} \{ k_{I_i},l_{I_{i}} \} } l_{F_1} + \sum\limits_{i \in [t],~l_{F_{1}} > \text{~min~} \{ k_{I_i},l_{I_{i}} \} } k_{I_i}.
    \]
\end{theorem}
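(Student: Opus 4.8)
## Proof Proposal

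The plan is to establish the lower bound $\rho \geq l_{F_1} + \sum_{i \in [t],~l_{F_1} \leq \min\{k_{I_i}, l_{I_i}\}} l_{F_1} + \sum_{i \in [t],~l_{F_1} > \min\{k_{I_i}, l_{I_i}\}} k_{I_i}$ by treating the write cost and the read cost separately, since the access cost decomposes as $\rho = \rho_r + \rho_w$. For the write cost, I would first argue that $\rho_w \geq l_{F_1}$. The intuition is that the final MDS code $\mathcal{C}^{F_1}$ has $n_{F_1}$ symbols, of which at most $\sum_{i \in [t]} |\mathcal{U}_{i,1}| \leq \sum_{i \in [t]} k_{I_i} = k_{F_1}$ can be unchanged symbols inherited from the initial codes; the unchanged symbols of a single initial MDS code $\mathcal{C}^{I_i}$ carry at most $k_{I_i}$ independent coordinates, and the total number of unchanged coordinates cannot exceed the dimension $k_{F_1}$ of the final code (otherwise they would over-determine the final codeword). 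Hence at least $n_{F_1} - k_{F_1} = l_{F_1}$ symbols must be written, giving $\rho_w \geq l_{F_1}$.

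The main work is the read cost. Fix an initial code $\mathcal{C}^{I_i}$ and set $R_i = |\cup_{j} \mathcal{R}_{i,1}| = |\mathcal{R}_{i,1}|$ (here $t_2 = 1$ so there is only one final code). I would show that for each $i$,
\[
R_i \geq \min\{l_{F_1},\, k_{I_i},\, l_{I_i}\} \quad\text{ (with a sharper split when $l_{F_1} > \min\{k_{I_i}, l_{I_i}\}$).}
\]
The key structural observation is the commutative diagram \eqref{diag1}: the written symbols $\mathcal{C}^{F_1}|_{\mathcal{W}_1}$ are a linear image $\phi_1$ of $\prod_i \mathcal{C}^{I_i}|_{\mathcal{R}_{i,1}}$. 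Because the map $\phi$ is a bijection and all codes are MDS, one can use an information-theoretic / dimension-counting argument: the read symbols of $\mathcal{C}^{I_i}$ together with the unchanged symbols of $\mathcal{C}^{I_i}$ must be enough to "reconstruct" the part of $\mathcal{C}^{I_i}$ that is not retired, and simultaneously the read symbols from all $i$ together must determine all $l_{F_1}$ written symbols. Concretely, I would look at a fixed initial code and argue that if $R_i < l_{F_1}$, then the contribution of $\mathcal{C}^{I_i}$ to the written symbols is confined to an $R_i$-dimensional space, forcing the other initial codes to supply the remaining $l_{F_1} - R_i$ dimensions; combining this across all $i$ and using the MDS property (any $k_{I_i}$ coordinates of $\mathcal{C}^{I_i}$ are independent, and no more than $k_{I_i}$ can be) yields $R_i \geq \min\{l_{F_1}, k_{I_i}, l_{I_i}\}$. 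The $l_{I_i}$ term appears because the unchanged symbols already fix $|\mathcal{U}_{i,1}|$ coordinates, and only $n_{I_i} - |\mathcal{U}_{i,1}| - (\text{retired})$ symbols remain available to read — when $l_{I_i}$ is small, the read set is correspondingly limited but still must cover the "new information" needed. Summing $\rho_r = \sum_i R_i \geq \sum_i \min\{l_{F_1}, k_{I_i}, l_{I_i}\}$ and splitting this sum according to whether $l_{F_1} \leq \min\{k_{I_i}, l_{I_i}\}$ or not, then adding $\rho_w \geq l_{F_1}$, gives exactly the claimed bound.

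The hard part, as usual in these converse arguments, will be making the dimension-counting rigorous — specifically, showing that the read symbols of $\mathcal{C}^{I_i}$ must span a subspace of dimension at least $\min\{l_{F_1}, k_{I_i}, l_{I_i}\}$ when projected appropriately onto the written symbols. This requires carefully setting up the right linear-algebraic or entropy inequalities: one typically considers the generator matrices in systematic form, partitions columns into retired/unchanged/read blocks, and exploits that every square submatrix of the MDS generator matrix of full size is invertible, so that "too few" read symbols would make some written symbol a function of fewer than $k_{F_1}$ symbols overall — contradicting the MDS property of $\mathcal{C}^{F_1}$. I expect the argument to mirror the proof of \cite[Theorem 1]{ge2024mds} (equivalently Theorem~\ref{mdsbou} itself, if this statement is a restatement), so the "proof" may simply be a citation; if an independent argument is wanted, the obstacle is handling the case $l_{F_1} > k_{I_i}$ cleanly, where the read set must exhaust essentially all of $\mathcal{C}^{I_i}$'s available non-unchanged coordinates.
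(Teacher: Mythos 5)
Note first that the paper does not prove this statement at all: it is quoted directly from \cite{ge2024mds}, and the paper's own contribution is the strictly more general Theorem~\ref{bouthm}, which (per Remark~\ref{remIII.2}) recovers this bound by setting $r = k_F$ and $\delta = 2$ in Theorems~\ref{III.1} and~\ref{III.2}. So there is no in-paper proof to compare against; the relevant comparison is with the proof machinery of Section~\ref{sec3}.

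With that said, there is a genuine gap in your read-cost argument. You propose the per-code bound $R_i \geq \min\{l_{F_1}, k_{I_i}, l_{I_i}\}$ and claim that summing it and splitting on whether $l_{F_1} \leq \min\{k_{I_i}, l_{I_i}\}$ recovers the theorem. It does not. The theorem requires $R_i \geq k_{I_i}$ whenever $l_{F_1} > \min\{k_{I_i}, l_{I_i}\}$, and this includes the case $l_{I_i} < \min\{k_{I_i}, l_{F_1}\}$ (a high-rate initial code being merged into a more redundant final code). In that case your bound only gives $R_i \geq l_{I_i}$, which is strictly smaller than $k_{I_i}$, so the sum falls short of the stated lower bound. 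The correct mechanism for this case is not visible in your sketch: what forces $R_i \geq k_{I_i}$ when $l_{I_i} < l_{F_1}$ is precisely that $d_F > n_{I_i} - k_{I_i} + 1$, i.e.\ the final code's erasure-correction radius exceeds that of $\mathcal{C}^{I_i}$, and then a puncturing/surjectivity argument (as in the ``Moreover'' clause of Theorem~\ref{III.2}) shows that fewer than $k_{I_i}$ reads would let you drop $d_F - 1$ unchanged positions and still reconstruct $\mathcal{C}^F$, contradicting $d_{I_i} < d_F$. Your heuristic that ``$l_{I_i}$ appears because only $n_{I_i} - |\mathcal{U}_{i,1}|$ symbols remain available to read'' points in the wrong direction — availability bounds $R_i$ from \emph{above}, not below — and the suggestion that too few reads would ``make some written symbol a function of fewer than $k_{F_1}$ symbols, contradicting MDS'' is not a valid deduction, since a read symbol of $\mathcal{C}^{I_i}$ already aggregates many message symbols. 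The write-cost step $\rho_w \geq l_{F_1}$ is in the right spirit, though the rigorous justification of $|\mathcal{U}_{i,1}| \leq k_{I_i}$ is a Singleton-bound argument on the punctured subcode $\phi(\{0\}\times\cdots\times\mathcal{C}^{I_j}\times\cdots)$ (exactly the argument of Theorem~\ref{III.1} specialized to MDS), not the ``over-determination'' heuristic you give.
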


\begin{theorem} [\cite{kong2024locally}] \label{lrcbou}
    Let $\mathscr{C}$ be a $(t,1)_q$ generalized merge-convertible code in which all the initial codes have the same parameters $(n_I,k_I=k)$ and the final code is an $\mathrm{LRC}$ with locality $r$ and minimum distance $d$. Then, the write access cost is at least
\[
t\left( d + (t - 1)k + \left\lceil \frac{(t - 1)k}{r} \right\rceil - 2 \right) - (t - 1)n^F. 
\]

Write $\Delta = n_F - 2d - \left( (t - 1)k + \left\lceil \frac{(t - 1)k}{r} \right\rceil \right) + 3$. Then, the read access cost  is at least
\[
\left\{
\begin{array}{ll}
t k, & \text{if } \Delta \leq 0; \\
t \left(k - \left\lceil \frac{r\Delta}{r+1} \right\rceil \right), & \text{otherwise}.
\end{array}
\right. 
\]

Moreover, when $d > n_I - k + 1$, the read access cost is at least $t k$.
\end{theorem}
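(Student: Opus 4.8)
The plan is to reduce everything to two observations and then do elementary bookkeeping with the Singleton-type bound~\eqref{Sbound1}. Since $t_2=1$, I abbreviate the index sets of Definition~\ref{concod} to $\mathcal{U}_i,\mathcal{R}_i,\mathcal{W}$; recall that in the final code the unchanged sets $\mathcal{U}_i$ are pairwise disjoint, $\mathcal{W}=[n_F]\setminus\bigcup_{i\in[t]}\mathcal{U}_i$, and by Definition~\ref{acccost} one has $\rho_w=|\mathcal{W}|$, $\rho_r=\sum_{i\in[t]}|\mathcal{R}_i|$, while $k_F=tk$ by~\eqref{dimeq}. The first observation I would isolate is a \emph{locality-inheritance} lemma: any subcode $\mathcal{D}\subseteq\mathcal{C}^F$ that is supported on a fixed subset $T\subseteq[n_F]$ again has locality $r$ when viewed as a length-$|T|$ code; indeed, if $A_j\cup\{j\}$ (with $|A_j|\le r$) is a repair set of $j\in T$ in $\mathcal{C}^F$ as in Definition~\ref{lrcdf}, then, since every codeword of $\mathcal{D}$ vanishes off $T$, the coordinates of $\mathcal{D}$ in $A_j\cap T$ already determine the $j$-th coordinate, and $|A_j\cap T|\le r$. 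The second observation is simply~\eqref{Sbound1}. All that remains is to apply these to a few subcodes of $\mathcal{C}^F$ obtained by ``switching off'' some initial codes.

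For the write bound, fix $i$ and set $\mathcal{E}_i=\phi\big(\{(c_1,\dots,c_t):c_i=0\}\big)$. The commuting square~\eqref{diag2} forces every codeword of $\mathcal{E}_i$ to vanish on $\mathcal{U}_i$, so $\mathcal{E}_i$ is supported on $[n_F]\setminus\mathcal{U}_i$; bijectivity of $\phi$ gives $\dim\mathcal{E}_i=(t-1)k$, and as a subcode of $\mathcal{C}^F$ it has minimum distance $\ge d$ and, by the lemma, locality $r$. Hence~\eqref{Sbound1} applied to $\mathcal{E}_i$ gives $d\le(n_F-|\mathcal{U}_i|)-(t-1)k-\lceil(t-1)k/r\rceil+2$, i.e. $|\mathcal{U}_i|\le U:=n_F-(t-1)k-\lceil(t-1)k/r\rceil+2-d$. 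Summing over $i$ and substituting into $\rho_w=n_F-\sum_i|\mathcal{U}_i|$ produces exactly $\rho_w\ge t\big(d+(t-1)k+\lceil(t-1)k/r\rceil-2\big)-(t-1)n_F$.

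For the read bound, fix $i$ and set $\mathcal{D}_i=\phi\big(\{(c_1,\dots,c_t):c_{i'}=0\text{ for all }i'\neq i\}\big)$; then $\dim\mathcal{D}_i=k$, $\mathcal{D}_i$ is supported on $\mathcal{U}_i\cup\mathcal{W}$, and by~\eqref{diag1} its projection onto $\mathcal{W}$ equals $\phi_1$ evaluated on $\mathcal{C}^{I_i}|_{\mathcal{R}_i}$ with zeros in the other slots, which has dimension $\le|\mathcal{R}_i|$. Consequently the subcode $\mathcal{D}_i^{0}:=\{c\in\mathcal{D}_i:c|_{\mathcal{W}}=0\}$, supported on $\mathcal{U}_i$, satisfies $a_i:=\dim\mathcal{D}_i^{0}\ge k-|\mathcal{R}_i|$, so it suffices to bound $a_i$ from above. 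If $a_i=0$ then $|\mathcal{R}_i|\ge k$; otherwise $\mathcal{D}_i^{0}$ is an LRC of length $|\mathcal{U}_i|\le U$, dimension $a_i$, distance $\ge d$ and locality $r$, so~\eqref{Sbound1} together with $|\mathcal{U}_i|\le U$ gives $a_i+\lceil a_i/r\rceil\le U-d+2=\Delta+1$. When $\Delta\le0$ this is impossible for $a_i\ge1$, forcing $a_i=0$, $|\mathcal{R}_i|\ge k$ and hence $\rho_r\ge tk$; when $\Delta>0$, the chain $a_i(1+1/r)\le a_i+\lceil a_i/r\rceil\le\Delta+1$ together with integrality yields $a_i\le\lceil r\Delta/(r+1)\rceil$, hence $\rho_r\ge t\big(k-\lceil r\Delta/(r+1)\rceil\big)$. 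For the final assertion, assume $d>n_I-k+1$ but $|\mathcal{R}_i|\le k-1$ for some $i$; then $V:=\{c\in\mathcal{C}^{I_i}:c|_{\mathcal{R}_i}=0\}$ has dimension $\ge k-|\mathcal{R}_i|\ge1$, and for any nonzero $c\in V$ the $\mathcal{D}_i$-codeword generated by $c$ vanishes on $\mathcal{W}$, so it equals $c|_{\mathcal{U}_i}$ and, being a nonzero word of $\mathcal{C}^F$, has weight $\ge d$; thus $c\mapsto c|_{\mathcal{U}_i}$ is injective on $V$ and $V|_{\mathcal{U}_i}$ is an LRC of length $|\mathcal{U}_i|\le n_I-|\mathcal{R}_i|$, dimension $\ge k-|\mathcal{R}_i|$, distance $\ge d$ and locality $r$; now~\eqref{Sbound1} and the monotonicity of $m\mapsto-m-\lceil m/r\rceil$ force $d\le n_I-k+2-\lceil(k-|\mathcal{R}_i|)/r\rceil\le n_I-k+1$, a contradiction. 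Hence $|\mathcal{R}_i|\ge k$ for every $i$ and $\rho_r\ge tk$.

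The genuinely non-routine part is spotting the locality-inheritance lemma and the right subcodes to plug into it — $\mathcal{E}_i$ for the write bound and $\mathcal{D}_i$, $\mathcal{D}_i^{0}$, $V|_{\mathcal{U}_i}$ for the read bound — after which only manipulations of~\eqref{Sbound1} and elementary inequalities remain. The step I would be most careful about is reading off, from the commuting diagrams~\eqref{diag1}--\eqref{diag2}, that each of these subcodes really has the stated dimension, minimum distance, and (via the lemma) locality; these are exactly the places where the definition of a convertible code enters. I would expect the same template to lift to the $(r,\delta)$-LRC setting of Theorem~\ref{bouthm}: replace~\eqref{Sbound1} by~\eqref{Sbound2} and establish the $(r,\delta)$-analogue of the lemma (a local group $J$ with $|J|\le r+\delta-1$ and $\mathcal{C}^F|_J$ of minimum distance $\ge\delta$ still restricts, after intersecting with $T$, to something of minimum distance $\ge\delta$), which should be what yields the per-$i$ quantities $\widetilde{\Delta_i}$ appearing there.
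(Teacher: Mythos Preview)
Your write-cost argument is exactly the paper's: Theorem~\ref{III.1} puts the Singleton-type bound on the subcode $\phi(\{c_i=0\})$ and reads off $|\mathcal{U}_i|\le U$. (The paper does not prove Theorem~\ref{lrcbou} directly; it is cited from~\cite{kong2024locally}, but the paper's own Theorem~\ref{bouthm} specialises to it via Theorems~\ref{III.1}--\ref{III.2}.)

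Your read-cost argument, however, takes a genuinely different route from the paper's Theorem~\ref{III.2}. You bound $|\mathcal{R}_i|$ by applying the Singleton-type bound to the \emph{subcode} $\mathcal{D}_i^{0}$ supported on $\mathcal{U}_i$, using a locality-inheritance lemma. The paper instead works with \emph{punctured} codes and dimension counts: it combines Lemma~\ref{n-d+1} (any $n_F-d_F+1$ coordinates carry full dimension) with the combinatorial Lemma~\ref{AT}, which locates inside $\mathcal{U}_i\setminus\mathcal{R}_i$ a set $\mathcal{A}$ of size $(\delta-1)\lfloor\Delta_i/(r+\delta-1)\rfloor$ that is redundant within some $\mathcal{T}$, and then chains surjections to force $|\mathcal{R}_i|+|(\mathcal{U}_i\setminus\mathcal{R}_i)\setminus(\mathcal{D}\cup\mathcal{A})|\ge k_{I_i}$. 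Your approach is shorter and more conceptual for $\delta=2$; the paper's approach is built to scale to general $(r,\delta)$ and produces the floor term $(\delta-1)\lfloor\Delta_i/(r+\delta-1)\rfloor$ that appears in Theorem~\ref{bouthm}. Your closing speculation that the $(r,\delta)$ version of the inheritance lemma plus~\eqref{Sbound2} would reproduce Theorem~\ref{bouthm} is plausible but not immediate: the two routes yield bounds of different algebraic shape, and matching the $\lfloor\cdot/(r+\delta-1)\rfloor$ term is exactly what Lemma~\ref{AT} is engineered for.

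One concrete slip in the ``moreover'' clause: you write $|\mathcal{U}_i|\le n_I-|\mathcal{R}_i|$, which presumes $\mathcal{U}_i\cap\mathcal{R}_i=\emptyset$, and the framework does not guarantee this. The fix is already implicit in your own setup: since every $c\in V$ satisfies $c|_{\mathcal{R}_i}=0$, the code $V|_{\mathcal{U}_i}$ is in fact supported on $\mathcal{U}_i\setminus\mathcal{R}_i$, and $|\mathcal{U}_i\setminus\mathcal{R}_i|\le n_I-|\mathcal{R}_i|$ holds because $(\mathcal{U}_i\setminus\mathcal{R}_i)$ and $\mathcal{R}_i$ are disjoint subsets of $\{i\}\times[n_I]$. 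Apply the inheritance lemma and~\eqref{Sbound1} with length $|\mathcal{U}_i\setminus\mathcal{R}_i|$ rather than $|\mathcal{U}_i|$ and your contradiction $d\le n_I-k+1$ goes through. A smaller point: the step ``$a_i(1+1/r)\le\Delta+1$ and integrality give $a_i\le\lceil r\Delta/(r+1)\rceil$'' is correct, but the clean way to see it is the identity $\lfloor r(\Delta+1)/(r+1)\rfloor=\lceil r\Delta/(r+1)\rceil$, which you might state explicitly.
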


\begin{remark} \label{perred}
    For the read access cost, we can evaluate the efficiency of the conversion process in another way. For each $w \in \mathcal{W}_j$, the symbol indexed by $w$ in $\mathcal{C}^{F_j}$ can be expressed as a function of certain components in $\mathcal{C}^{I_1}|_{\mathcal{R}_{1,j}} \times \cdots \times \mathcal{C}^{I_{t_1}}|_{\mathcal{R}_{t_{1},j}}$, we denote these components as $\mathcal{C}^{I_1}|_{\mathcal{R}_{1,j,w}} \times \cdots \times \mathcal{C}^{I_{t_1}}|_{\mathcal{R}_{t_{1},j,w}}$, where $\mathcal{R}_{1,j,w}$ is a subset of $\mathcal{R}_{1,j}$. The per-symbol read access cost is then defined as $\rho'_r=\sum_{i \in [t_1] } \sum_{j \in [t_2]} \sum_{w \in \mathcal{W}_j} |\mathcal{R}_{i,j,w} |$. In practical storage applications, per-symbol read access cost also receives significant attention. A convertible code is said to be per-symbol read access-optimal if it minimizes the total per-symbol read access cost among all convertible codes with the same parameters. Trivially, the lower bound for per-symbol read access cost is always no less than the lower bound of read access cost.
\end{remark}

Since each initial code $\mathcal{C}^{I_i}$ has dimension $k_{I_i}$, reading $k_{I_i}$ information symbols suffices to recover all information in $\mathcal{C}^{I_i}$. Thus, we can always recover the entire message of $\mathcal{C}^{I_i}$ by reading only $k_{I_i}$ information symbols. In this case, the read access cost $\left|\bigcup_{j \in [t_2]} \mathcal{R}_{i,j}\right| = k_{I_i}$, and we refer to this approach of conversion as the {\bf default approach}.


\subsection{Rational Function Fields and Their Automorphism Groups}

We first introduce some fundamental facts about the rational function field. For more details, the reader may refer to \cite{stichtenoth2009algebraic}.

Let \( F \) be the rational function field \( \mathbb{F}_q(x) \), where \( x \) is transcendental over \( \mathbb{F}_q \). The field \( F \) has exactly \( q+1 \) rational places: the finite places \( P_{x-\alpha} \) corresponding to \( x-\alpha \) for each \( \alpha \in \mathbb{F}_q \), and the infinite place \( P_{\infty} \) with a uniformizer \( 1/x \). The set of all places of \( F \) is denoted by \( \mathbb{P}_F \).

For each place \( P \in \mathbb{P}_F \), let \( v_P \) be the normalized discrete valuation associated with \( P \). Given a nonzero function \( z \in F \), its zero divisor is defined as 
\[
(z)_0 = \sum_{P \in \mathbb{P}_F, v_P(z) > 0} v_P(z) P,
\]
and its pole divisor is given by  
\[
(z)_\infty = -\sum_{P \in \mathbb{P}_F, v_P(z) < 0} v_P(z) P.
\]
The principal divisor of \( z \) is then written as
\begin{equation} \label{prindiv} 
(z) = \sum_{P \in \mathbb{P}_F} v_P(z) P.
\end{equation}

We denote by \( \operatorname{Aut}(F/\mathbb{F}_q) \) the automorphism group of \( F \) over \( \mathbb{F}_q \), that is,
\[
\operatorname{Aut}(F/\mathbb{F}_q) = \{ \sigma: F \to F \mid \sigma \text{ is an } \mathbb{F}_q \text{-automorphism of } F \}.
\]

It is well known that any automorphism \( \sigma \in \operatorname{Aut}(F/\mathbb{F}_q) \) is uniquely determined by its action on \( x \), which takes the form  
\begin{equation}\label{action}
\sigma(x) = \frac{ax + b}{cx + d},
\end{equation}
where \( a, b, c, d \in \mathbb{F}_q \) and satisfy \( ad - bc \neq 0 \).

Let \( \text{GL}_2(q) \) denote the general linear group of \( 2 \times 2 \) invertible matrices over \( \mathbb{F}_q \). Each matrix  
\[
\begin{bmatrix} a & b \\ c & d \end{bmatrix} \in \text{GL}_2(q)
\]
induces an automorphism of \( F \) which acts on $x$ by the above action \eqref{action}. Two matrices in \( \text{GL}_2(q) \) define the same automorphism of \( F \) if and only if they lie in the same coset of \( Z(\text{GL}_2(q)) \), which consists of all scalar matrices of the form \( \{ a \text{I}_2 : a \in \mathbb{F}_q^* \} \). This leads to the isomorphism  
\[
\operatorname{Aut}(F/\mathbb{F}_q) \cong \text{PGL}_2(q) = \text{GL}_2(q)/Z(\text{GL}_2(q)).
\]
Thus, we can identify \( \operatorname{Aut}(F/\mathbb{F}_q) \) with the projective general linear group \( \text{PGL}_2(q) \). In our paper, we sometimes represent an automorphism in $\operatorname{Aut}(F/\mathbb{F}_q)$  by its corresponding matrix in \( \text{PGL}_2(q) \), via this isomorphism between the two groups.

Let \(\mathcal{G}\) be a subgroup of \(\operatorname{Aut}(F/\mathbb{F}_q)\). The fixed field of \(\mathcal{G}\) in $F$ is defined as
\[
F^{\mathcal{G}} = \{ z \in F : \sigma(z)=z \text{ for all } \sigma \in \mathcal{G} \}.
\]
By the Galois theory,  \(F/F^{\mathcal{G}}\) is a Galois extension with \(\operatorname{Gal}(F/F^{\mathcal{G}}) = \mathcal{G}\). Moreover, for any automorphism \(\sigma \in \operatorname{Gal}(F/F^{\mathcal{G}})\) and any place \(P \in \mathbb{P}_F\), the set $\sigma(P) = \{\sigma(z): z \in P \}$ is also a place of \(F\).

Let \(g(F^{\mathcal{G}})\) denote the genus of the field \(F^{\mathcal{G}}\). Then, the Hurwitz genus formula \cite[Theorem 3.4.13]{stichtenoth2009algebraic} states that
\begin{equation} \label{hurw}
2g(F)-2 = [F:F^{\mathcal{G}}](2g(F^{\mathcal{G}})-2) + \deg \operatorname{Diff}(F/F^{\mathcal{G}}),
\end{equation}
where \(\operatorname{Diff}(F/F^{\mathcal{G}})\) denotes the different of the extension \(F/F^{\mathcal{G}}\).

We will construct convertible codes using subgroups of $\text{PGL}_2(q)$. Therefore, it is essential to understand the structure of these subgroups. Fortunately, this structure is well established in the literature (see \cite{leemans2009polytopes} for details). In this paper, we first describe a general method to construct convertible codes based on arbitrary subgroups and then focus on specific subgroups to compute concrete examples. Thus, we list only a few particular subgroups of $\text{PGL}_2(q)$ here.

\begin{lemma}[\cite{de2010rank}] \label{subgps}
    Let $q=p^s$ for some prime $p$, the following groups are isomorphic to certain subgroups of $\mathrm{PGL}_2(q)$.
    \begin{itemize}
        \item  A cyclic group of order $d$, where $d$ is a divisor of $q+1$.
        \item  A semidirect product of an elementary abelian group of order $p^v$ and a cyclic group of order $u$ for every pair $(u,v)$ satisfying  $v\leq s$ and $u \mid \text{gcd~}(q-1,p^v-1)$.
        \item  A dihedral group of order $2d$, where $d$ is a divisor of either $q+1$ or $q-1$.
    \end{itemize}
\end{lemma}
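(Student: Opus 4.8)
The plan is to exhibit, for each of the three families, an explicit subgroup of $\operatorname{Aut}(F/\mathbb{F}_q)\cong\mathrm{PGL}_2(q)$ of the \emph{largest} admissible order, and then recover every stated order by passing to subgroups (a subgroup of a cyclic group is cyclic; a subgroup of a dihedral group is cyclic or dihedral with the order and the governing divisor shrinking accordingly; and an $\mathbb{F}_p$-subspace of the translation part together with a smaller-order torus again yields a group of the required shape). Concretely, the three ingredients are the stabilizer of a rational place, the split torus, and the non-split torus of $\mathrm{PGL}_2(q)$. For the cyclic group of order $d\mid q+1$, I would regard $\mathbb{F}_{q^2}$ as a two-dimensional $\mathbb{F}_q$-vector space: multiplication by a generator $\beta$ of $\mathbb{F}_{q^2}^*$ is an $\mathbb{F}_q$-linear automorphism, hence an element $M\in\mathrm{GL}_2(q)$ of order $q^2-1$; the scalar matrices inside $\langle M\rangle$ are precisely those realizing multiplication by $\mathbb{F}_q^*$, so the image of $\langle M\rangle$ in $\mathrm{PGL}_2(q)$ is cyclic of order $(q^2-1)/(q-1)=q+1$, and its subgroup of order $d$ is the desired group (equivalently, this is a Möbius automorphism $x\mapsto(ax+b)/(cx+d)$ whose matrix has an irreducible characteristic polynomial with a root of order $q^2-1$).

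For the semidirect products, I would work inside the stabilizer of $P_\infty$, which modulo scalars is $\{x\mapsto ax+b : a\in\mathbb{F}_q^*,\ b\in\mathbb{F}_q\}\cong\mathbb{F}_q\rtimes\mathbb{F}_q^*$, with the translations forming an elementary abelian $p$-group $T\cong(\mathbb{F}_q,+)$ on which $a\in\mathbb{F}_q^*$ acts by multiplication. Let $\zeta\in\mathbb{F}_q^*$ have order $u$; since $u\mid\gcd(q-1,p^v-1)$ we have $\zeta\in\mathbb{F}_q$ and the subfield $\mathbb{F}_p(\zeta)=\mathbb{F}_{p^m}$ with $m=\operatorname{ord}_u(p)$ dividing both $v$ and $s$. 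Every $\mathbb{F}_{p^m}$-subspace of $\mathbb{F}_q$ is then invariant under multiplication by $\zeta$, and since $m\mid v\le s$ we may choose such a subspace $V$ of $\mathbb{F}_p$-dimension exactly $v$. Then $V\rtimes\langle\zeta\rangle$ is a subgroup of $\mathbb{F}_q\rtimes\mathbb{F}_q^*$ of order $p^v u$, with $\langle\zeta\rangle$ acting on $V$ by multiplication, which is the claimed group; the order does not collapse in $\mathrm{PGL}_2(q)$ because the chosen representatives $x\mapsto ax+b$ already have normalized denominator.

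For the dihedral groups, if $d\mid q-1$ I would take the split torus $\{x\mapsto ax:a\in\mathbb{F}_q^*\}$, cyclic of order $q-1$, together with the involution $\iota:x\mapsto 1/x$; since $\iota$ conjugates $x\mapsto ax$ to $x\mapsto a^{-1}x$, the group $\langle\text{torus of order }d,\ \iota\rangle$ is dihedral of order $2d$. If $d\mid q+1$, I would use the cyclic group $C$ of order $q+1$ from the first paragraph, realized on $\mathbb{F}_{q^2}$; the $q$-power Frobenius $\phi:\alpha\mapsto\alpha^q$ is an $\mathbb{F}_q$-linear involution of $\mathbb{F}_{q^2}$, hence lies in $\mathrm{GL}_2(q)$, and conjugation by $\phi$ sends multiplication-by-$\gamma$ to multiplication-by-$\gamma^q=\gamma^{-1}$ for every $\gamma$ in the norm-one subgroup; thus $C$ together with the image of $\phi$ generates a dihedral group of order $2(q+1)$, from which the subgroup of order $2d$ is extracted.

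The order counts and the two commutation identities are routine $2\times 2$ matrix computations. The one point that genuinely needs care is the semidirect-product case: one must verify that $u\mid\gcd(q-1,p^v-1)$ is exactly the hypothesis guaranteeing both that $\zeta$ lies in $\mathbb{F}_q$ and that $[\mathbb{F}_p(\zeta):\mathbb{F}_p]$ divides $v$, so that a $\zeta$-invariant $\mathbb{F}_p$-subspace of $\mathbb{F}_q$ of dimension precisely $v$ indeed exists inside $\mathbb{F}_q$. I expect this verification, rather than any of the other steps, to be the main obstacle.
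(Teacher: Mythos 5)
Your proof is correct, and it is worth noting first that the paper itself does not prove this lemma at all: it is imported verbatim from \cite{de2010rank}, so there is no in-paper proof to compare against. That said, your affine-case construction $V\rtimes\langle\zeta\rangle$ inside the stabilizer of $P_\infty$ is essentially identical to the explicit construction the paper later carries out when it actually needs nested groups of this shape, in the proof of Corollary \ref{rdelcor1} (subgroup $H\le\mathbb{F}_q^*$ of order $u$, subfield $\mathbb{F}_{p^l}=\mathbb{F}_p(H)$ with $l\mid\gcd(v,s)$, an $\mathbb{F}_{p^l}$-subspace $W$ of $\mathbb{F}_p$-dimension $v$, and the matrices $\bigl[\begin{smallmatrix}a&b\\0&1\end{smallmatrix}\bigr]$). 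So you have independently reconstructed the exact argument the paper relies on downstream.

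All three cases go through. For the non-split torus you correctly identify that $\mathbb{F}_{q^2}^*$ maps to a cyclic group of order $(q^2-1)/(q-1)=q+1$ in $\mathrm{PGL}_2(q)$. In the semidirect-product case, the point you flag as the main obstacle is handled correctly: $u\mid q-1$ gives $\zeta\in\mathbb{F}_q^*$; $u\mid p^v-1$ gives $m:=\operatorname{ord}_u(p)\mid v$; $u\mid p^s-1$ gives $m\mid s$; hence $\mathbb{F}_{p^m}\subseteq\mathbb{F}_q$ and an $\mathbb{F}_{p^m}$-subspace of $\mathbb{F}_p$-dimension $v$ exists precisely when $v\le s$, and $\zeta$-stability is automatic. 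The only place your wording is slightly loose is in the dihedral case with $d\mid q+1$: you say $\phi$ inverts multiplication-by-$\gamma$ ``for every $\gamma$ in the norm-one subgroup,'' but the cyclic group $C$ you built is the image of all of $\mathbb{F}_{q^2}^*$, not of the norm-one kernel. The statement is still correct because in $\mathrm{PGL}_2(q)$ one has $[\gamma^q]=[\gamma^{-1}]$ for \emph{every} $\gamma\in\mathbb{F}_{q^2}^*$, since $\gamma^q/\gamma^{-1}=N_{\mathbb{F}_{q^2}/\mathbb{F}_q}(\gamma)\in\mathbb{F}_q^*$ is a scalar; stating it that way avoids the unnecessary restriction to norm one. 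This is a presentational quibble, not a gap.
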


\subsection{Algebraic Geometry Codes}

We now introduce algebraic geometry codes. For further background, the reader is referred to \cite{niederreiter2001rational}. Let \( F/\mathbb{F}_q \) be a function field of genus \( g \) with full constant field \(\mathbb{F}_q\). Suppose that
$\mathcal{D} = \{ P_1, P_2, \ldots, P_n \} \subseteq \mathbb{P}_F$
is a set of \( n \) distinct rational places of \( F \). For a divisor \( G \) on \( F/\mathbb{F}_q \) satisfying \( 2g-2 < \deg(G) < n \) and whose support is disjoint from \( \mathcal{D} \) (i.e., \(\operatorname{supp}(G) \cap \mathcal{D} = \emptyset\)), the algebraic geometry code associated with \( \mathcal{D} \) and \( G \) is defined by
\begin{equation} \label{alcod1}
\mathcal{C}_{\mathcal{L}}(\mathcal{D},G) \triangleq \{ (x(P_1),\ldots,x(P_n)) : x \in \mathcal{L}(G) \} \subseteq \mathbb{F}_q^n,
\end{equation}
where \(\mathcal{L}(G) =\{ z\in F^{*}: (z) \geq -G \} \cup \{ 0\}   \) is the Riemann–Roch space associated to $G$, which, by the Riemann's Theorem\cite[Theorem 1.4.17]{stichtenoth2009algebraic}, has dimension \(\deg(G)+1-g\). The code \(\mathcal{C}_{\mathcal{L}}(\mathcal{D},G)\) is an \([n, k, d]_q\) linear code with dimension \( k = \dim_{\mathbb{F}_q} \mathcal{L}(G) \) and minimum distance satisfying \( d \ge n - \deg(G) \) 
 (\cite[Theorem 2.2.2]{stichtenoth2009algebraic}). Furthermore, if \( V \) is any subspace of \(\mathcal{L}(G)\), we can define a subcode
\begin{equation} \label{alcod2}
\mathcal{C}_{\mathcal{L}}(\mathcal{D},V) \triangleq \{ (x(P_1),\ldots,x(P_n)) : x \in V \} \subseteq \mathbb{F}_q^n.
\end{equation}
In this case, the dimension of \(\mathcal{C}_{\mathcal{L}}(\mathcal{D},V)\) is equal to \(\dim_{\mathbb{F}_q} V\) and its minimum distance is still at least \( n - \deg(G) \).

When \(\operatorname{supp}(G) \cap \mathcal{D} \neq \emptyset\), the definition can be modified. Let \( m_i = v_{P_i}(G) \) for each \( P_i \in \mathcal{D} \), and choose a uniformizer \(\pi_{P_i}\) at \( P_i \). The modified algebraic geometry code is then defined as
\begin{equation} \label{alcod3}
\mathcal{C}_{\mathcal{L}}(\mathcal{D},G) \triangleq \{ ((\pi_{P_1}^{m_1} x)(P_1), \ldots, (\pi_{P_n}^{m_n} x)(P_n)) : x \in \mathcal{L}(G) \} \subseteq \mathbb{F}_q^n.
\end{equation}
This construction also yields an \([n, \deg(G)+1-g, ~\ge n-\deg(G)]_q\) linear code. Similarly, for any subspace \( V \subseteq \mathcal{L}(G) \), the corresponding subcode is defined by
\begin{equation} \label{alcod4}
\mathcal{C}_{\mathcal{L}}(\mathcal{D},V) \triangleq \{ ((\pi_{P_1}^{m_1} x)(P_1), \ldots, (\pi_{P_n}^{m_n} x)(P_n)) : x \in V \} \subseteq \mathbb{F}_q^n.
\end{equation}
The dimension of \(\mathcal{C}_{\mathcal{L}}(\mathcal{D},V)\) is equal to \(\dim_{\mathbb{F}_q} V\), and its minimum distance remains at least \(n-\deg(G)\).

\section{Lower Bound on the Access Cost} \label{sec3}

In this section, we establish a new lower bound on the access cost of generalized merge-convertible codes when the final code is an $(r,\delta)$-LRC, generalizing the known results in \cite{ge2024mds} and \cite{kong2024locally}. Since there is only one final code in the merge case, for clarity of notation, we use the superscript $F$ to denote the final code, and use $\mathcal{U}_i,\mathcal{R}_i$ and $\mathcal{W}$ to represent the index set of unchanged symbols in $\mathcal{C}^{I_i}$, read symbols in $\mathcal{C}^{I_i}$ and written symbols in $\mathcal{C}^F$ respectively.

We first give a upper bound on the number of unchanged symbols.
\begin{theorem}\label{III.1}
    Let $\mathscr{C}$ be a $(t, 1)_q$ generalized merge-convertible code over the finite field $\mathbb{F}_q$. Assume that the final code $\mathcal{C}^F$ has $(r, \delta)$-locality and let $d_F$ denote its minimum distance. Then, for every integer $i$ with $i \in [t]$, the following bound holds: 
    \begin{equation}   \label{wriacc}
    |\mathcal{U}_i| \leq k_{I_i}+  n_F-k_F-d_F+1-\left( \left\lceil \frac{k_F-k_{I_i}}{r} \right\rceil -1\right)(\delta-1).
    \end{equation}

\end{theorem}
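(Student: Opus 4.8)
The plan is to bound $|\mathcal{U}_i|$ by exhibiting a codeword of $\mathcal{C}^F$ of small weight whenever $|\mathcal{U}_i|$ is too large, thereby contradicting the minimum distance $d_F$. The starting observation is the commuting diagram \eqref{diag2}: the unchanged symbols $\mathcal{C}^{I_i}|_{\mathcal{U}_i}$ sit inside $\mathcal{C}^F$ at a fixed set of $|\mathcal{U}_i|$ coordinates, and they are a \emph{projection} of the full initial code $\mathcal{C}^{I_i}$, hence form a code of dimension $\min\{k_{I_i},|\mathcal{U}_i|\}$. First I would consider the set $S$ of $|\mathcal{U}_i|$ coordinate positions in $\mathcal{C}^F$ occupied by these unchanged symbols. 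The key point is that on these positions the final code only ``sees'' information coming from $\mathcal{C}^{I_i}$; information from the other $t-1$ initial codes is invisible there. Concretely, fixing the messages of all $\mathcal{C}^{I_{i'}}$ with $i'\neq i$ to zero, the restriction of $\mathcal{C}^F$ to $S$ still has dimension $\min\{k_{I_i},|\mathcal{U}_i|\}$, while one can show that the remaining $k_F - k_{I_i}$ ``degrees of freedom'' of $\mathcal{C}^F$ must be supported off $S$ — more precisely, there is a subcode $\mathcal{C}' \subseteq \mathcal{C}^F$ of dimension $k_F - k_{I_i}$ whose codewords vanish on all of $\mathcal{U}_i$.

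Next I would apply the Singleton-type bound \eqref{Sbound2} (for $(r,\delta)$-LRCs) not to $\mathcal{C}^F$ itself, but to a suitable \emph{shortened/punctured} code that isolates the contribution of $\mathcal{C}'$. The idea: shortening $\mathcal{C}^F$ at an appropriate subset of the $|\mathcal{U}_i|$ positions produces a code $\widetilde{\mathcal{C}}$ of length $n_F - |\mathcal{U}_i|$ (or something close to it), dimension at least $k_F - k_{I_i}$, minimum distance still at least $d_F$, and — crucially — still possessing $(r,\delta)$-locality, since shortening a coordinate inside a recovery set either removes that set entirely or leaves it with an even stronger local distance property. Plugging the parameters of $\widetilde{\mathcal{C}}$ into \eqref{Sbound2} gives
\[
d_F \;\le\; (n_F - |\mathcal{U}_i|) - (k_F - k_{I_i}) + 1 - \left(\left\lceil \tfrac{k_F-k_{I_i}}{r}\right\rceil - 1\right)(\delta - 1),
\]
and rearranging for $|\mathcal{U}_i|$ yields exactly \eqref{wriacc}. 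The $(\delta-1)$-term survives because the locality penalty $\bigl(\lceil (k_F-k_{I_i})/r\rceil - 1\bigr)(\delta-1)$ for the shortened code is at least this large.

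The main obstacle I anticipate is the bookkeeping in the shortening step: one must argue carefully that after deleting the $|\mathcal{U}_i|$ unchanged coordinates, the residual code genuinely retains dimension $k_F - k_{I_i}$ \emph{and} the $(r,\delta)$-locality structure needed for \eqref{Sbound2}. In particular, some recovery sets $J_\ell$ of $\mathcal{C}^F$ may straddle $S$ and its complement; I would handle this by choosing which coordinates of $S$ to shorten so that each affected recovery set is either wholly removed or reduced to a set that still has local distance $\ge \delta$, using the fact that shortening never decreases the minimum distance of a code restricted to a coordinate subset. A secondary subtlety is verifying the claim that the ``off-$S$'' subcode $\mathcal{C}'$ has dimension exactly $k_F - k_{I_i}$: this follows from \eqref{diag2} together with the bijectivity of $\phi$ (noted in the Remark after Definition \ref{concod}) and a dimension count on the unchanged block, but the linear-algebra details — especially when $|\mathcal{U}_i| > k_{I_i}$, so that $\mathcal{C}^{I_i}|_{\mathcal{U}_i}$ already has some built-in parity — need to be spelled out. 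Once these two points are pinned down, the Singleton-type bound does the rest mechanically.
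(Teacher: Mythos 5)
Your proposal takes essentially the same route as the paper. The paper's proof defines the subcode $\mathcal{C}' = \phi(\{\mathbf{0}\} \times \mathcal{C}^{I_2} \times \cdots \times \mathcal{C}^{I_t}) \subseteq \mathcal{C}^F$, observes via diagram \eqref{diag2} that $\mathcal{C}'|_{\mathcal{U}_i} = \mathbf{0}$ and that $\dim \mathcal{C}' = k_F - k_{I_i}$ with $d(\mathcal{C}') \geq d_F$, punctures $\mathcal{C}'$ at $\mathcal{U}_i$, and applies the Singleton-type bound \eqref{Sbound2} --- exactly your plan. One simplification: the careful selection of which coordinates of $\mathcal{U}_i$ to shorten is unnecessary. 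Since $\mathcal{C}'$ vanishes on \emph{all} of $\mathcal{U}_i$, for any recovery set $J$ of $\mathcal{C}^F$ the subcode $\mathcal{C}'|_J \subseteq \mathcal{C}^F|_J$ has minimum distance at least $\delta$, and deleting the zero positions $J \cap \mathcal{U}_i$ leaves the nonzero codewords' weights unchanged; so $\mathcal{C}'|_{\mathcal{U}_i^c}$ automatically inherits $(r,\delta)$-locality with recovery sets $J \setminus \mathcal{U}_i$, no bookkeeping required. Likewise the $|\mathcal{U}_i| > k_{I_i}$ case poses no subtlety: $\dim \mathcal{C}' = k_F - k_{I_i}$ follows directly from bijectivity of $\phi$, independent of $|\mathcal{U}_i|$.
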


\begin{proof}
     We present the proof for $i=1$, the arguments for the other cases are analogous.
    Let $\phi: \mathcal{C}^{I_1} \times \cdots \times \mathcal{C}^{I_{t}} \twoheadrightarrow \mathcal{C}^{F} $ be the bijective map associated with the code conversion. Let 
    \[\mathcal{C}'=\phi(\{ \mathbf{0} \} \times \mathcal{C}^{I_{2}} \times \cdots \times \mathcal{C}^{I_{t}})\subseteq \mathcal{C}^{F}.\]
    Then $\dim(\mathcal{C}')=k_F-k_{I_1}$ and $d(\mathcal{C^{'}}) \geq d_F$. By the commutative diagram \eqref{diag2}, we have
    \[\mathcal{C}'|_{\mathcal{U}_1}=\mathbf{0}.\]
    Then we can deduce that $\mathcal{C}'|_{\mathcal{U}^c_1}$ is an $(n_F-|\mathcal{U}_1|, k_F-k_{I_1},  \geq d_F;(r,\delta))$-LRC where $\mathcal{U}_1^c=(\{1\}\times[n_{I_1}])\backslash \mathcal{U}_1$. By the Singleton-type bound \eqref{Sbound2}, we have
    \[d_F \leq (n_F-|\mathcal{U}_1|)-(k_F-k_{I_1})+1-(\lceil \frac{k_F-k_{I_1}}{r} \rceil -1)(\delta-1).\]
    Thus,
    \[
  |\mathcal{U}_1| \leq k_{I_1}+  n_F-k_F-d_F+1-( \lceil \frac{k_F-k_{I_1}}{r} \rceil -1)(\delta-1).
    \]
\end{proof}

Before giving the lower bound on read access cost, we need two auxiliary lemmas.
\begin{lemma}[\cite{roth2006introduction}, Problem 2.8]\label{n-d+1}
    Suppose $\mathcal{C}$ is $[n,k,d]$-linear code, then for any $\Gamma \subseteq [n]$ with $|\Gamma|\geq n-d+1$, we have
    \[\dim(\mathcal{C}|_{\Gamma})=k.\]
\end{lemma}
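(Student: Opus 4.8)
The plan is to study the coordinate restriction (puncturing) map $\pi_\Gamma \colon \mathcal{C} \to \mathbb{F}_q^{|\Gamma|}$ sending each codeword $x$ to the subvector of its entries indexed by $\Gamma$. By definition $\mathcal{C}|_\Gamma = \pi_\Gamma(\mathcal{C})$ is the image of this $\mathbb{F}_q$-linear map, so the rank--nullity theorem gives $\dim(\mathcal{C}|_\Gamma) = \dim \mathcal{C} - \dim(\ker \pi_\Gamma) = k - \dim(\ker \pi_\Gamma)$. Hence it is enough to prove that $\ker \pi_\Gamma = \{\mathbf{0}\}$, i.e. that $\pi_\Gamma$ is injective on $\mathcal{C}$.

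The second step is to translate the kernel condition into a support (weight) condition. If $x \in \ker \pi_\Gamma$, then $x$ is zero on every coordinate in $\Gamma$, so $\operatorname{supp}(x) \subseteq [n] \setminus \Gamma =: \Gamma^c$. The hypothesis $|\Gamma| \geq n - d + 1$ is exactly $|\Gamma^c| = n - |\Gamma| \leq d - 1$, so the Hamming weight of $x$ satisfies $\operatorname{wt}(x) \leq d - 1$. Invoking the defining property of the minimum distance of $\mathcal{C}$ --- every nonzero codeword has weight at least $d$ --- we conclude $x = \mathbf{0}$, which gives $\ker \pi_\Gamma = \{\mathbf{0}\}$ and therefore $\dim(\mathcal{C}|_\Gamma) = k$.

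I do not expect a genuine obstacle here: the statement is elementary and its proof is the classical observation that any $n - d + 1$ coordinate positions of an $[n,k,d]$ code determine a codeword uniquely (so puncturing onto them is a bijection from $\mathcal{C}$). The only point requiring minor care is to use the inequality in the correct direction, namely $|\Gamma| \geq n - d + 1 \iff |\Gamma^c| \leq d - 1$, so that a nonzero kernel element would violate the minimum-distance bound rather than the other way around. No genus, automorphism, or locality machinery from the earlier sections is needed for this lemma.
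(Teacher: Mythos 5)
Your proof is correct and is the standard argument for this fact. The paper does not actually supply a proof of Lemma \ref{n-d+1} --- it is cited directly from Roth's textbook (Problem 2.8) --- so there is no in-paper proof to compare against; your rank--nullity-plus-minimum-distance argument is exactly the canonical solution to that exercise.
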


\begin{lemma}\label{AT}
    Suppose $\mathcal{C}$ is an $(r, \delta)$-$\mathrm{LRC}$ of length $n$. Let $\mathcal{S}\subseteq [n]$ and $\Delta$ be a positive integer with $\Delta \leq |\mathcal{S}|$. Then there exist subsets $\mathcal{A}$ and $\mathcal{T}$ of $[n]$ satisfying that
\[ \mathcal{A}  \subseteq \mathcal{S} \cap \mathcal{T},~ |\mathcal{A}|=(\delta-1)\left\lfloor \frac{\Delta}{r+\delta-1}\right\rfloor,~|\mathcal{S}\cap \mathcal{T}| \leq \Delta\]
and $\mathcal{C}|_{\mathcal{T}}$ can be generated by $\mathcal{C}|_{\mathcal{T}\backslash \mathcal{A}}$.
\end{lemma}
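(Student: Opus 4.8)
The plan is to build the sets $\mathcal{A}$ and $\mathcal{T}$ greedily by repeatedly harvesting local recovery sets of the $(r,\delta)$-LRC $\mathcal{C}$ that are "cheap" in the sense that they touch $\mathcal{S}$ enough. First I would set up notation: for each coordinate $i\in[n]$, Definition \ref{rdeldf} gives a recovery set $J_i$ with $|J_i|\le r+\delta-1$ and $\mathcal{C}|_{J_i}$ of minimum distance at least $\delta$, so any $\delta-1$ coordinates of $J_i$ are determined by the remaining at most $r$ coordinates. I would then run an iterative procedure that maintains a current "active" subset of $\mathcal{S}$ (initially all of $\mathcal{S}$), a growing set $\mathcal{T}$, and a growing set $\mathcal{A}\subseteq\mathcal{S}$. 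At each step, pick a coordinate $i$ still in the active part of $\mathcal{S}$, look at its recovery set $J_i$; since $|J_i|\le r+\delta-1$, the part of $J_i$ lying outside the current $\mathcal{T}$ together with $i$ itself gives us a block. Into $\mathcal{A}$ we throw $\delta-1$ coordinates of this block that lie in $\mathcal{S}$ (this is where we need to be careful — see below), and into $\mathcal{T}$ we throw the whole block. The point is that $\mathcal{C}|_{J_i}$ having distance $\ge\delta$ lets us reconstruct those $\delta-1$ coordinates from the rest, so $\mathcal{C}|_{\mathcal{T}}$ is generated by $\mathcal{C}|_{\mathcal{T}\setminus\mathcal{A}}$; and each iteration adds at most $r+\delta-1$ coordinates to $|\mathcal{S}\cap\mathcal{T}|$ while adding exactly $\delta-1$ to $|\mathcal{A}|$, so after $\lfloor\Delta/(r+\delta-1)\rfloor$ iterations we will have $|\mathcal{A}|=(\delta-1)\lfloor\Delta/(r+\delta-1)\rfloor$ and $|\mathcal{S}\cap\mathcal{T}|\le\Delta$.

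The key steps, in order, are: (1) invoke the local structure to get recovery sets $J_i$; (2) describe the iteration precisely, tracking the invariant that at the start of iteration $m$ we have $|\mathcal{S}\cap\mathcal{T}|\le m(r+\delta-1)$ and $|\mathcal{A}|=m(\delta-1)$ with $\mathcal{A}\subseteq\mathcal{S}\cap\mathcal{T}$ and $\mathcal{C}|_{\mathcal{T}}$ generated by $\mathcal{C}|_{\mathcal{T}\setminus\mathcal{A}}$; (3) argue the iteration can run for $\lfloor\Delta/(r+\delta-1)\rfloor$ steps, i.e., that there is always a fresh coordinate of $\mathcal{S}$ available and its block contributes the right number of new $\mathcal{S}$-coordinates; (4) conclude by reading off the final $\mathcal{A}$, $\mathcal{T}$, noting $|\mathcal{S}\cap\mathcal{T}|\le\lfloor\Delta/(r+\delta-1)\rfloor(r+\delta-1)\le\Delta$.

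The main obstacle is step (3): ensuring that at every iteration we can actually find a new recovery block that contributes a full batch of $\delta-1$ fresh coordinates lying in $\mathcal{S}$, and that the accounting $|\mathcal{S}\cap\mathcal{T}|\le\Delta$ is not violated. The subtlety is that recovery sets can overlap heavily, so a naively chosen $J_i$ might have most of its coordinates already in $\mathcal{T}$, or might have fewer than $\delta-1$ coordinates in $\mathcal{S}\setminus\mathcal{T}$. The clean way around this is to only add coordinates that are genuinely new (i.e.\ work with $J_i\setminus\mathcal{T}$ plus $\{i\}$), and to choose $i$ itself from $\mathcal{S}\setminus\mathcal{T}$ each time; since we must still peel off $\delta-1$ coordinates from this block for $\mathcal{A}\subseteq\mathcal{S}$, we need the block's intersection with $\mathcal{S}$ to have size at least $\delta-1$. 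If it does not, we simply enlarge $\mathcal{T}$ by that block anyway (paying at most $r+\delta-1$ new coordinates but most of them outside $\mathcal{S}$, which is harmless since the budget is on $|\mathcal{S}\cap\mathcal{T}|$) and continue — but the cleanest argument absorbs this by choosing, at iteration $m$, a coordinate $i\in\mathcal{S}$ not yet "used," adding $J_i$ to $\mathcal{T}$, and putting into $\mathcal{A}$ any $\min\{\delta-1,\,|(J_i\cup\{i\})\cap\mathcal{S}\setminus\mathcal{A}|\}$ of its $\mathcal{S}$-coordinates; one then shows by the counting inequality $\Delta\le|\mathcal{S}|$ that this min is in fact $\delta-1$ for the first $\lfloor\Delta/(r+\delta-1)\rfloor$ iterations. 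I expect the bookkeeping here to require a short but careful induction; everything else (the generation claim, which is just the definition of $(r,\delta)$-locality, and the final size bounds) is routine.
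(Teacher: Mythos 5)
The core idea --- iteratively harvesting recovery blocks, enlarging $\mathcal{T}$ by one block at a time, and peeling $\delta-1$ coordinates into $\mathcal{A}$ --- is in the same spirit as the paper's argument. However, the resolution you propose for what you correctly identify as ``the main obstacle'' is wrong, and this gap is not cosmetic.

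You claim that ``one then shows by the counting inequality $\Delta\le|\mathcal{S}|$ that this min is in fact $\delta-1$ for the first $\lfloor\Delta/(r+\delta-1)\rfloor$ iterations,'' i.e.\ that each of these blocks can be made to contribute a full batch of $\delta-1$ fresh $\mathcal{S}$-coordinates. This is false. Take $r=1$, $\delta=3$ (so $r+\delta-1=3$, $\delta-1=2$), coordinates partitioned into disjoint recovery sets $J=\{1,2,3\},\{4,5,6\},\{7,8,9\},\dots$, and $\mathcal{S}=\{1,4,7\}$ with $\Delta=3$. Then $\lfloor\Delta/(r+\delta-1)\rfloor=1$, so the lemma asks for $|\mathcal{A}|=2$; but every recovery set meets $\mathcal{S}$ in exactly one coordinate, so for every choice of $i$ you have $|J_i\cap\mathcal{S}\setminus\mathcal{A}|\le 1<\delta-1$, and your invariant breaks at the very first iteration. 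The global inequality $\Delta\le|\mathcal{S}|$ gives no control over the intersection pattern of $\mathcal{S}$ with individual recovery sets. (The lemma is still true here --- one valid witness is $\mathcal{A}=\{1,4\}$, $\mathcal{T}=\{1,2,3,4,5,6\}$ --- but it requires drawing $\mathcal{A}$ from \emph{two} blocks, each contributing fewer than $\delta-1$ elements.)

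Your fallback (``simply enlarge $\mathcal{T}$ by that block anyway\dots and continue'') is the right instinct but is not worked out, and the bookkeeping it needs is exactly where the real content lies. When a block contributes $k<\delta-1$ fresh $\mathcal{S}$-coordinates, the per-block ratio of $(\text{new in }\mathcal{A})/(\text{new in }\mathcal{S}\cap\mathcal{T})$ jumps up to $1$ rather than $(\delta-1)/(r+\delta-1)$; exploiting this favourable ratio, while keeping $|\mathcal{S}\cap\mathcal{T}|\le\Delta$ once some blocks are only partially used, requires a more careful choice of $\mathcal{T}$ than ``throw in the whole recovery set.'' The paper's proof does precisely this: it covers $[n]$ by recovery sets $\mathcal{S}'_1,\dots,\mathcal{S}'_\ell$, greedily builds contribution sets $\mathcal{Q}_j\subseteq\mathcal{S}'_j\cap\mathcal{S}$ of size at most $\delta-1$, and then splits into two cases according to whether there are at least $\lfloor\Delta/(r+\delta-1)\rfloor$ ``full'' blocks with $|\mathcal{Q}_j|=\delta-1$. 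In the deficient case, $\mathcal{A}$ is assembled by aggregating several small $\mathcal{Q}_j$'s, and crucially $\mathcal{T}$ is taken to be the union of the fully used $\mathcal{S}'_j$'s together with only a \emph{trimmed} version $(\mathcal{S}'_{i_{a'}}\setminus\mathcal{Q}_{i_{a'}})\cup\mathcal{Q}$ of the last partially used block; that trimming is what makes the estimate $|\mathcal{S}\cap\mathcal{T}|\le\Delta$ go through. None of this is routine, and your sketch neither produces the two cases nor the trimmed $\mathcal{T}$, so as written the argument does not close.
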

\begin{proof}
    Since $\mathcal{C}$ is an $(r, \delta)$-LRC, we can choose some recovery sets $\mathcal{S}'_1, \mathcal{S}'_2, \cdots, \mathcal{S}'_{\ell} \subseteq [n]$ such that $[n]=\mathcal{S}'_1 \cup \mathcal{S}'_2 \cup \cdots\cup \mathcal{S}'_{\ell}$ with each $|\mathcal{S}'_i| \leq r+\delta-1$. Denote $\mathcal{S}_i=\mathcal{S}'_i\cap \mathcal{S}$. For each $i$, construct subsets $\mathcal{Q}_i$ as follows. Firstly, let
    \[\mathcal{Q}_1=\begin{cases}
           \mathcal{S}_1 , \text{ if } |\mathcal{S}_1|<\delta-1; \\ 
          \text{ any subset of }\mathcal{S}_1 \text{ of size }\delta-1, \text{ if } |\mathcal{S}_1|\geq\delta-1.
     \end{cases}\]
     Suppose $\mathcal{Q}_1, \cdots, \mathcal{Q}_j$ have already been constructed. Then define
     \[\mathcal{Q}_{j+1}=\begin{cases}
           \mathcal{S}_{j+1}\backslash(\mathcal{S}_1\cup \cdots \cup \mathcal{S}_j), \text{ if } |\mathcal{S}_{j+1}\backslash(\mathcal{S}_1\cup \cdots \cup \mathcal{S}_j)|<\delta-1; \\ 
          \text{ any subset of }\mathcal{S}_{j+1}\backslash(\mathcal{S}_1\cup \cdots \cup \mathcal{S}_j) \text{ of size }\delta-1, \text{ if } |\mathcal{S}_{j+1}\backslash(\mathcal{S}_1\cup \cdots \cup \mathcal{S}_j)|\geq\delta-1.
     \end{cases}\]
     By this process, we obtain $\ell$ pairwise disjoint subsets $\mathcal{Q}_1, \cdots, \mathcal{Q}_{\ell}$ with each $\mathcal{Q}_j \subseteq \mathcal{S}_j$ and $|\mathcal{Q}_j| \leq \delta-1$.
     Let $\mathcal{B}=\mathcal{Q}_1\cup \cdots \cup \mathcal{Q}_{\ell}$. We now aim to show that $|\mathcal{B}| \geq \lceil\frac{(\delta-1)|\mathcal{S}|}{r+\delta-1}\rceil$. 
     
     Let $\mathcal{Q}_{i_1}, \cdots, \mathcal{Q}_{i_a}$ be all subsets among $\mathcal{Q}_1, \cdots, \mathcal{Q}_{\ell}$ whose sizes are strictly smaller than $\delta-1$, where $1\leq i_1<\cdots<i_a \leq \ell$. Then $|\mathcal{Q}_j|=\delta-1$ for any $j \in [\ell]\backslash\{i_1,\cdots,i_a\}$. By the construction of $\mathcal{Q}_j$, we have 
     \[\mathcal{S}=\mathcal{Q}_{i_1}\cup \cdots\cup \mathcal{Q}_{i_a}\cup \big(\bigcup\limits_{j \in [\ell]\backslash\{i_1,\cdots,i_a\}} \mathcal{S}_{j}\big).\]
     By counting the sizes of $\mathcal{B}$ and $\mathcal{Q}$, we find that $|\mathcal{B}|=|\mathcal{Q}_{i_1}|+ \cdots+ |\mathcal{Q}_{i_a}|+(\ell-a)(\delta-1)$ and $|\mathcal{S}|\leq|\mathcal{Q}_{i_1}|+ \cdots+ |\mathcal{Q}_{i_a}|+(\ell-a)(r+\delta-1)$.
     Consequently, if we denote by $x$ the sum $|\mathcal{Q}_{i_1}|+ \cdots+ |\mathcal{Q}_{i_a}|$, we have,
     \[|\mathcal{B}|=x+(\ell-a)(\delta-1)\geq x+(\frac{|\mathcal{S}|-x}{r+\delta-1})(\delta-1)\geq \frac{(\delta-1)|\mathcal{S}|}{r+\delta-1}.\]
     Therefore, $|\mathcal{B}| \geq \lceil \frac{(\delta-1)|\mathcal{S}|}{r+\delta-1}\rceil$. 

Now, suppose $ \ell-a \geq \lfloor \frac{\Delta}{r+\delta-1}\rfloor$. Then we can choose a subset $\mathcal{J} \subseteq [\ell]\backslash\{i_1,\cdots,i_a\}$ with $|\mathcal{J}|=\lfloor \frac{\Delta}{r+\delta-1}\rfloor$ and define $\mathcal{A} =\bigcup_{j \in \mathcal{J}}\mathcal{Q}_j$. Clearly, $|\mathcal{A}|=(\delta-1)\lfloor \frac{\Delta}{r+\delta-1}\rfloor$. Let $\mathcal{T}=\bigcup_{j \in \mathcal{J}}\mathcal{S}'_j$. Then $|\mathcal{T}| \leq (r+\delta-1)\lfloor \frac{\Delta}{r+\delta-1}\rfloor \leq \Delta$. Note that each punctured code $\mathcal{C}|_{\mathcal{S}'_j}$ can be generated by $\mathcal{C}|_{\mathcal{S}'_j\backslash \mathcal{Q}_j}$, so by the construction of $\mathcal{Q}_j$, the code $\mathcal{C}|_{\mathcal{T}}$ can be generated by $\mathcal{C}|_{\mathcal{T}\backslash \mathcal{A}}$. 

On the other hand, if $\ell-a < \lfloor \frac{\Delta}{r+\delta-1}\rfloor$, then there exists an integer $a'$ with  $1 \leq a' \leq a$ and a subset $\mathcal{Q} \subseteq \mathcal{Q}_{i_{a'}}$, such that $|\mathcal{A}| =(\delta-1)\lfloor \frac{\Delta}{r+\delta-1}\rfloor$, where  
\[\mathcal{A}=\mathcal{Q}_{i_1}\cup \cdots\cup \mathcal{Q}_{i_{a'-1}} \cup \mathcal{Q} \cup \big(\bigcup\limits_{j \in [\ell]\backslash\{i_1,\cdots,i_a\}} \mathcal{Q}_{j}\big).\]
Define
\[\mathcal{T}=\mathcal{S}'_{i_1}\cup \cdots\cup \mathcal{S}'_{i_{a'-1}} \cup (\mathcal{S}'_{i_{a'}}\backslash \mathcal{Q}_{i_{a'}})\cup \mathcal{Q}\cup \big(\bigcup\limits_{j \in [\ell]\backslash\{i_1,\cdots,i_a\}} \mathcal{S}'_{j}\big).\]
Again, by the same reasoning, $\mathcal{C}|_{\mathcal{T}}$ can be generated by $\mathcal{C}|_{\mathcal{T}\backslash \mathcal{A}}$. Now, observe that
\[\mathcal{T}\cap \mathcal{S}=\mathcal{S}_{i_1}\cup \cdots\cup \mathcal{S}_{i_{a'-1}} \cup (\mathcal{S}_{i_{a'}}\backslash \mathcal{Q}_{i_{a'}})\cup \mathcal{Q}\cup \big(\bigcup\limits_{j \in [\ell]\backslash\{i_1,\cdots,i_a\}} \mathcal{S}_{j}\big).\] By the construction of $\mathcal{Q}_i$, we actually have
\[\mathcal{T}\cap \mathcal{S}=\mathcal{Q}_{i_1}\cup \cdots\cup \mathcal{Q}_{i_{a'-1}} \cup \mathcal{Q}\cup \big(\bigcup\limits_{j \in [\ell]\backslash\{i_1,\cdots,i_a\}} \mathcal{S}_{j}\big).\]
Denote $y=|\mathcal{Q}_{i_1}|+ \cdots+ |\mathcal{Q}_{i_{a'-1}}|+|\mathcal{Q}|$, then
\[(\delta-1)\left\lfloor \frac{\Delta}{r+\delta-1}\right\rfloor=|\mathcal{A}|=y+(\ell-a)(\delta-1),\]
and
\[|\mathcal{T}\cap \mathcal{S}|\leq y+(\ell-a)(r+\delta-1)=y+\frac{|\mathcal{A}|-y}{\delta-1}(r+\delta-1)\leq (r+\delta-1)\left\lfloor \frac{\Delta}{r+\delta-1}\right\rfloor\leq \Delta.\]
This completes the proof.
\end{proof}

Now, we can give a lower bound on the read access cost.

\begin{theorem}\label{III.2}
     Let $\mathscr{C}$ be a $(t, 1)_q$ generalized convertible code over the finite field $\mathbb{F}_q$. Assume that the final code $\mathcal{C}^F$ has $(r, \delta)$-locality and let $d_F$ denote its minimum distance. Define $\Delta_i = |\mathcal{U}_i \backslash \mathcal{R}_i| -d_F +1 $ for $i \in [t]$. Then, we have
     \begin{equation} \label{reaacc}
     |\mathcal{R}_i| \geq \begin{cases}
           k_{I_i} , \mathrm{ ~if ~} \Delta_i \leq 0; \\ k_{I_i} - \Delta_i +  (\delta-1) \lfloor \frac{\Delta_i}{r+\delta-1}  \rfloor, \mathrm{ ~otherwise}.
     \end{cases}
     \end{equation}
     for $i \in [t]$.
     Moreover, if $d_F > n_{I_i} -k_{I_i}+1$ for some $i \in [t]$, then $\Delta_i \leq 0$ and therefore $|\mathcal{R}_i| \geq k_{I_i}$.    
    
\end{theorem}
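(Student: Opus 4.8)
The plan is to fix $i\in[t]$ and reduce everything to a single auxiliary code built from the conversion map. I would set $\mathcal{C}^{(i)}:=\phi(\{\mathbf{0}\}\times\cdots\times\mathcal{C}^{I_i}\times\cdots\times\{\mathbf{0}\})\subseteq\mathcal{C}^F$, obtained by zeroing every initial codeword except the $i$-th. Since $\phi$ is a bijection, $\dim\mathcal{C}^{(i)}=k_{I_i}$, and as a subcode of $\mathcal{C}^F$ it has minimum distance at least $d_F$. Reading off the diagrams \eqref{diag1} and \eqref{diag2}, the codeword of $\mathcal{C}^{(i)}$ coming from $c\in\mathcal{C}^{I_i}$ vanishes on every $\mathcal{U}_j$ with $j\neq i$, equals $c|_{\mathcal{U}_i}$ on $\mathcal{U}_i$, and on $\mathcal{W}$ is a linear function of $c|_{\mathcal{R}_i}$; hence $\mathcal{C}^{(i)}$ is supported on $\mathcal{U}_i\cup\mathcal{W}$, its restriction $D^{(i)}$ there is an $[\,|\mathcal{U}_i|+|\mathcal{W}|,\,k_{I_i},\,\ge d_F\,]$ code, and $D^{(i)}$ restricted to $\mathcal{W}\cup(\mathcal{U}_i\cap\mathcal{R}_i)$ is a linear image of $\mathcal{C}^{I_i}|_{\mathcal{R}_i}$. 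Writing $\mathcal{S}_0:=\mathcal{U}_i\setminus\mathcal{R}_i$ (so $|\mathcal{S}_0|=\Delta_i+d_F-1$), the goal becomes to show that $\mathcal{C}^{I_i}$ is recoverable from $c|_{\mathcal{R}_i}$ together with a suitably small sub-portion of $c|_{\mathcal{S}_0}$.

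The core step would be to discard part of $\mathcal{S}_0$ by two independent mechanisms and then \emph{add} the savings. From the minimum distance, Lemma \ref{n-d+1} says that deleting any $d_F-1$ coordinates from $D^{(i)}$ leaves a code of dimension $k_{I_i}$. From the $(r,\delta)$-locality of $\mathcal{C}^F$, in the case $\Delta_i>0$ I would apply Lemma \ref{AT} to $\mathcal{C}^F$ with the set $\mathcal{S}_0$ and the parameter $\Delta_i$ (legitimate since $\Delta_i\le|\mathcal{S}_0|$), getting $\mathcal{A}\subseteq\mathcal{S}_0\cap\mathcal{T}$ with $|\mathcal{A}|=(\delta-1)\lfloor\Delta_i/(r+\delta-1)\rfloor$, $|\mathcal{S}_0\cap\mathcal{T}|\le\Delta_i$, and a linear rule recovering the $\mathcal{A}$-coordinates of any $\mathcal{C}^F$-codeword (in particular of any $\mathcal{C}^{(i)}$-codeword) from its $(\mathcal{T}\setminus\mathcal{A})$-coordinates. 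The two deletions must be chosen compatibly: since $|\mathcal{S}_0\setminus\mathcal{T}|\ge d_F-1$, I would take the $(d_F-1)$-set $\mathcal{B}$ for the minimum-distance deletion inside $\mathcal{S}_0\setminus\mathcal{T}$, so that $\mathcal{B}$ is disjoint from both $\mathcal{A}$ and $\mathcal{T}\setminus\mathcal{A}$ and the locality rule still applies after $\mathcal{B}$ is removed. Then every codeword of $\mathcal{C}^{(i)}$ is determined by its restriction to $(\mathcal{U}_i\cup\mathcal{W})\setminus(\mathcal{A}\cup\mathcal{B})=\mathcal{W}\cup(\mathcal{U}_i\cap\mathcal{R}_i)\cup(\mathcal{S}_0\setminus(\mathcal{A}\cup\mathcal{B}))$, which by the bijectivity of $\phi$ forces $\dim\mathcal{C}^{I_i}|_{\mathcal{R}_i\cup(\mathcal{S}_0\setminus(\mathcal{A}\cup\mathcal{B}))}=k_{I_i}$; since $\mathcal{R}_i\cap\mathcal{S}_0=\emptyset$ and $|\mathcal{S}_0\setminus(\mathcal{A}\cup\mathcal{B})|=\Delta_i-(\delta-1)\lfloor\Delta_i/(r+\delta-1)\rfloor$, this is exactly the lower bound in \eqref{reaacc}. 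When $\Delta_i\le 0$ one has $|\mathcal{S}_0|\le d_F-1$, and the minimum-distance mechanism alone (delete all of $\mathcal{S}_0$) gives $\dim\mathcal{C}^{I_i}|_{\mathcal{R}_i}=k_{I_i}$, i.e. $|\mathcal{R}_i|\ge k_{I_i}$.

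For the last assertion I would argue by contraposition: if $\Delta_i>0$, the bound already proved (the floor term being nonnegative) gives $|\mathcal{R}_i|\ge k_{I_i}-\Delta_i$, and as $\mathcal{R}_i$ and $\mathcal{S}_0$ are disjoint subsets of $\{i\}\times[n_{I_i}]$, also $|\mathcal{R}_i|+|\mathcal{S}_0|\le n_{I_i}$; substituting $|\mathcal{S}_0|=\Delta_i+d_F-1$ yields $d_F\le n_{I_i}-k_{I_i}+1$. Hence $d_F>n_{I_i}-k_{I_i}+1$ forces $\Delta_i\le 0$, and the $\Delta_i\le 0$ case then gives $|\mathcal{R}_i|\ge k_{I_i}$.

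I expect the main obstacle to be the compatibility of the two deletions in the core step: Lemma \ref{n-d+1} and Lemma \ref{AT} each independently permit discarding coordinates, but their savings only add up if the coordinates removed by the minimum-distance argument are not referenced by the locality recovery rule produced by Lemma \ref{AT}. This is exactly why $\mathcal{B}$ has to be taken inside $\mathcal{S}_0\setminus\mathcal{T}$, and why the guarantee $|\mathcal{S}_0\cap\mathcal{T}|\le\Delta_i$ in Lemma \ref{AT}, combined with $|\mathcal{S}_0|=\Delta_i+d_F-1$, is precisely what leaves room for a $(d_F-1)$-element such set. A minor recurring point is that all the ``can be generated by'' / ``is determined by'' relations here are linear, so each upgrades to an equality of dimensions.
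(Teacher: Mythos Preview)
Your proposal is correct and follows essentially the same approach as the paper: both arguments apply Lemma~\ref{AT} to the set $\mathcal{U}_i\setminus\mathcal{R}_i$ with parameter $\Delta_i$, then place the $(d_F-1)$-element deletion set inside $(\mathcal{U}_i\setminus\mathcal{R}_i)\setminus\mathcal{T}$ so that the locality recovery and Lemma~\ref{n-d+1} can be combined additively. The only cosmetic differences are that you pass to the subcode $\mathcal{C}^{(i)}=\phi(\{\mathbf{0}\}\times\cdots\times\mathcal{C}^{I_i}\times\cdots\times\{\mathbf{0}\})$ (zeroing the other initial codes) whereas the paper keeps them around via information sets $\mathcal{E}_j$, and you derive the ``moreover'' clause by contraposition from the bound plus the count $|\mathcal{R}_i|+|\mathcal{S}_0|\le n_{I_i}$ while the paper gives a short direct argument; both variants are equally valid and neither changes the substance.
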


\begin{proof}
 We present the proof for $i=1$; the arguments for the other cases are analogous. Let $\mathcal{E}_i \subseteq \{i\}\times [n_{I_i}]$ be an information set of $\mathcal{C}^{I_i}$ for any $i \in [2,t]$. Then $\mathcal{C}^{I_i}$ can be generated by $\mathcal{C}^{I_i}|_{\mathcal{E}_i}$. By the two diagrams \eqref{diag1} and \eqref{diag2}, we have two surjective maps:
 \[\mathcal{C}^{I_1}|_{\mathcal{R}_{1}} \times \mathcal{C}^{I_{2}}|_{\mathcal{E}_2}  \times \cdots \times \mathcal{C}^{I_{t}}|_{\mathcal{E}_t} \twoheadrightarrow \mathcal{C}^{I_1}|_{\mathcal{R}_{1}} \times \mathcal{C}^{I_2}|_{\mathcal{R}_{2}} \times \cdots \times \mathcal{C}^{I_{t}}|_{\mathcal{R}_{t} } \twoheadrightarrow \mathcal{C}^{F}|_{\mathcal{W}},\]
 and
  \[\mathcal{C}^{I_1}|_{\mathcal{R}_{1}} \times \mathcal{C}^{I_{2}}|_{\mathcal{E}_2}\times \cdots \times \mathcal{C}^{I_{t}}|_{\mathcal{E}_t} \twoheadrightarrow \mathcal{C}^{I_1}|_{\mathcal{U}_1\cap \mathcal{R}_{1}} \times \mathcal{C}^{I_2}|_{\mathcal{U}_{2}} \times\cdots \times \mathcal{C}^{I_{t}}|_{\mathcal{U}_{t} } \stackrel{\mathrm{Id}}{\longrightarrow} \mathcal{C}^{F}|_{(\mathcal{U}_1\cap \mathcal{R}_{1}) \cup \mathcal{U}_{2} \cup \cdots \cup \mathcal{U}_{t}}.\]
  Thus,
  \[|\mathcal{R}_1|+k_{I_2}+\cdots+k_{I_t} \geq \dim(\mathcal{C}^{F}|_{(\mathcal{U}_1\cap \mathcal{R}_{1}) \cup \mathcal{U}_{2} \cup \cdots \cup \mathcal{U}_{t} \cup \mathcal{W}}).\]
If $\Delta_1 \leq 0$, i.e., $|\mathcal{U}_1 \backslash \mathcal{R}_1| \leq d_F-1$, then $|(\mathcal{U}_1\cap \mathcal{R}_{1}) \cup \mathcal{U}_{2} \cup \cdots \cup \mathcal{U}_{t} \cup \mathcal{W}|=n_F-|\mathcal{U}_1 \backslash \mathcal{R}_1|\geq n-d_F+1$. By Lemma \ref{n-d+1}, we have 
\[\dim(\mathcal{C}^{F}|_{(\mathcal{U}_1\cap \mathcal{R}_{1}) \cup \mathcal{U}_{2} \cup \cdots \cup \mathcal{U}_{t} \cup \mathcal{W}})=k_F=k_{I_1}+k_{I_2}+\cdots+k_{I_t}.\]
Then combining with the above inequality, we obtain that $|\mathcal{R}_1| \geq k_{I_1}$.

If $|\mathcal{U}_1 \backslash \mathcal{R}_1| > d_F-1$. By Lemma \ref{AT}, there exists a subset $\mathcal{A} \subseteq \mathcal{U}_1 \backslash \mathcal{R}_1$ of size $(\delta-1)\lfloor \frac{\Delta_1}{r+\delta-1}\rfloor$ and a subset $\mathcal{T}$ with $\mathcal{A}\subseteq \mathcal{T} \subseteq [n_{F}]$ such that
\[| (\mathcal{U}_1 \backslash \mathcal{R}_1)\cap \mathcal{T}| \leq \Delta_1\]
and $\mathcal{C}^{F}|_{\mathcal{T}}$ can be generated by $\mathcal{C}^{F}|_{\mathcal{T}\backslash \mathcal{A}}$. Then we have $|(\mathcal{U}_1 \backslash \mathcal{R}_1) \backslash \mathcal{T}| \geq d_F-1$. Let $\mathcal{D} \subseteq (\mathcal{U}_1 \backslash \mathcal{R}_1) \backslash \mathcal{T}$ with $|\mathcal{D}|=d_F-1$.  Now we have two surjective maps:
\[\mathcal{C}^{I_1}|_{\mathcal{R}_{1}} \times \mathcal{C}^{I_{2}}|_{\mathcal{E}_2}\times \cdots \times \mathcal{C}^{I_{t}}|_{\mathcal{E}_t} \twoheadrightarrow \mathcal{C}^{I_1}|_{\mathcal{R}_{1}} \times \mathcal{C}^{I_2}|_{\mathcal{R}_{2}} \times\cdots \times \mathcal{C}^{I_{t}}|_{\mathcal{R}_{t} } \twoheadrightarrow \mathcal{C}^{F}|_{\mathcal{W}},\]
 and
  \[\mathcal{C}^{I_1}|_{\mathcal{U}_1 \backslash (\mathcal{D}\cup \mathcal{A})} \times \mathcal{C}^{I_{2}}|_{\mathcal{E}_2}\times \cdots \times \mathcal{C}^{I_{t}}|_{\mathcal{E}_t} \twoheadrightarrow \mathcal{C}^{I_1}|_{\mathcal{U}_1 \backslash (\mathcal{D}\cup \mathcal{A})} \times \mathcal{C}^{I_2}|_{\mathcal{U}_{2}}\times \cdots \times \mathcal{C}^{I_{t}}|_{\mathcal{U}_{t} }  \stackrel{\mathrm{Id}}{\longrightarrow} \mathcal{C}^{F}|_{(\mathcal{U}_1 \backslash (\mathcal{D} \cup \mathcal{A})) \cup \mathcal{U}_{2} \cup \cdots \cup \mathcal{U}_{t}}.\]
  Combining them together, we obtain a surjective map:
\[\mathcal{C}^{I_1}|_{\mathcal{R}_1\cup (\mathcal{U}_1 \backslash (\mathcal{D} \cup \mathcal{A}))} \times \mathcal{C}^{I_{2}}|_{\mathcal{E}_2}\times \cdots \times \mathcal{C}^{I_{t}}|_{\mathcal{E}_t} \twoheadrightarrow \mathcal{C}^{F}|_{(\mathcal{U}_1 \backslash (\mathcal{D} \cup \mathcal{A})) \cup \mathcal{U}_{2} \cup \cdots \cup \mathcal{U}_{t}\cup \mathcal{W}}.\]
Since $\mathcal{T}\cap \mathcal{D}=\emptyset$, hence $\mathcal{T}\backslash \mathcal{A} \subset (\mathcal{U}_1 \backslash (\mathcal{D} \cup \mathcal{A})) \cup \mathcal{U}_{2} \cup \cdots \cup \mathcal{U}_{t}\cup \mathcal{W}$. Combining with the fact that $\mathcal{C}^{F}|_{\mathcal{T}}$ can be generated by $\mathcal{C}^{F}|_{\mathcal{T}\backslash \mathcal{A}}$, we have a surjective map:
\[\mathcal{C}^{F}|_{(\mathcal{U}_1 \backslash (\mathcal{D} \cup \mathcal{A})) \cup \mathcal{U}_{2} \cup \cdots \cup \mathcal{U}_{t}\cup \mathcal{W}} \twoheadrightarrow \mathcal{C}^{F}|_{\mathcal{T}\backslash \mathcal{A}}\twoheadrightarrow \mathcal{C}^{F}|_{\mathcal{T}}.\]
Therefore, we obtain
\[\mathcal{C}^{I_1}|_{\mathcal{R}_1\cup (\mathcal{U}_1 \backslash (\mathcal{D} \cup \mathcal{A}))} \times \mathcal{C}^{I_{2}}|_{\mathcal{E}_2}\times \cdots \times \mathcal{C}^{I_{t}}|_{\mathcal{E}_t} \twoheadrightarrow \mathcal{C}^{F}|_{(\mathcal{U}_1 \backslash (\mathcal{D} \cup \mathcal{A})) \cup \mathcal{U}_{2} \cup \cdots \cup \mathcal{U}_{t}\cup \mathcal{W}\cup \mathcal{T}}=\mathcal{C}^{F}|_{(\mathcal{U}_1 \backslash \mathcal{D}) \cup \mathcal{U}_{2} \cup \cdots \cup \mathcal{U}_{t}\cup \mathcal{W}}.\]

Note that $|(\mathcal{U}_1 \backslash \mathcal{D}) \cup \mathcal{U}_{2} \cup \cdots \cup \mathcal{U}_{t}\cup \mathcal{W}|=n_F-|\mathcal{D}|=n_F-d_F+1$, thus by Lemma \ref{n-d+1}, we have $\dim(\mathcal{C}^{F}|_{(\mathcal{U}_1 \backslash \mathcal{D}) \cup \mathcal{U}_{2} \cup \cdots \cup \mathcal{U}_{t}\cup \mathcal{W}})=k_F$. 
Therefore 
\[|\mathcal{R}_1\cup (\mathcal{U}_1 \backslash (\mathcal{D} \cup \mathcal{A}))|+k_{I_2}+\cdots+k_{I_t}\geq k_F\]
\[\Longrightarrow |\mathcal{R}_1|+|(\mathcal{U}_1  \backslash\mathcal{R}_1)\backslash (\mathcal{D}\cup \mathcal{A})|\geq k_{I_1}\]
\[\Longrightarrow |\mathcal{R}_1|\geq k_{I_1}-|(\mathcal{U}_1  \backslash\mathcal{R}_1)|+|(\mathcal{D}\cup \mathcal{A})|= k_{I_1}-(\Delta_1+d_F-1)+d_F-1+(\delta-1)\lfloor \frac{\Delta_1}{r+\delta-1}  \rfloor=k_{I_1}- \Delta_1 +  (\delta-1) \lfloor \frac{\Delta_1}{r+\delta-1}  \rfloor.\]

Moreover, if $d_F > n_{I_1} -k_{I_1}+1$, we suppose that $\Delta_1>0$, i.e., $|\mathcal{U}_1 \backslash \mathcal{R}_1| > d_F-1$. Let $\mathcal{D} \subseteq \mathcal{U}_1 \backslash \mathcal{R}_1$ with $|\mathcal{D}|=d_F-1$. Then we have a surjective map:
\[\mathcal{C}^{I_1}|_{\mathcal{D}^c} \times \mathcal{C}^{I_{2}}|_{\mathcal{E}_2}\times \cdots \times \mathcal{C}^{I_{t}}|_{\mathcal{E}_t} \twoheadrightarrow \mathcal{C}^{I_1}|_{\mathcal{R}_1\cup(\mathcal{U}_1 \backslash \mathcal{D})} \times \mathcal{C}^{I_{2}}|_{\mathcal{E}_2}\times \cdots \times \mathcal{C}^{I_{t}}|_{\mathcal{E}_t} \twoheadrightarrow \mathcal{C}^{F}|_{(\mathcal{U}_1 \backslash \mathcal{D}) \cup \mathcal{U}_{2} \cup \cdots \cup \mathcal{U}_{t}\cup \mathcal{W}},\]
where $\mathcal{D}^c=(\{1\}\times[n_{I_1}])\backslash \mathcal{D}$.

Note that $|(\mathcal{U}_1 \backslash \mathcal{D}) \cup \mathcal{U}_{2} \cup \cdots \cup \mathcal{U}_{t}\cup \mathcal{W}|=n_F-d_F+1$, then $\dim(\mathcal{C}^{F}|_{(\mathcal{U}_1 \backslash \mathcal{D}) \cup \mathcal{U}_{2} \cup \cdots \cup \mathcal{U}_{t}\cup \mathcal{W}})=k_F$ by Lemma \ref{n-d+1}. Therefore, 
\[|\mathcal{D}^c|+k_{I_2}+\cdots+k_{I_t}\geq k_F \Longrightarrow n_{I_1}-d_F+1\geq k_{I_1},\]  
which contradicts to the assumption.
\end{proof}

Combining the above two theorems, we can obtain a lower bound for access-optimal convertible code when the final code is an $(r, \delta)$-LRC.

\begin{theorem} \label{bouthm}
     Let $\mathscr{C}$ be a $(t, 1)_q$ generalized convertible code over the finite field $\mathbb{F}_q$. Assume that the final code $\mathcal{C}^F$ has $(r, \delta)$-locality and let $d_F$ denote its minimum distance. For each $i \in[t]$, denote $ \widetilde{\Delta_i} = k_{I_i}+  n_F-k_F-2d_F+2-( \lceil \frac{k_F-k_{I_i}}{r} \rceil -1)(\delta-1) $. Then the total access cost satisfies
     \[ \rho \geq \text{min }\rho_w + \text{min }\rho_r, \]
     where
     \begin{equation} \label{minwri1}
         \text{min }\rho_w = -(t-1)n_F+(t-1)k_F+td_F-t+ \sum_{i=1}^{t}( \lceil \frac{k_F-k_{I_i}}{r} \rceil -1)(\delta-1)
     \end{equation}
     and
     \begin{equation} \label{minrea1}
          \text{min }\rho_r = k_F - \sum_{i \in [t], ~\widetilde{\Delta_i} >0~and ~d_F \leq n_{I_{i}}-k_{I_i}+1}[\widetilde{\Delta_i} - (\delta-1) \left\lfloor \frac{\widetilde{\Delta_i}}{r+\delta-1}  \right\rfloor].
     \end{equation}
     
\end{theorem}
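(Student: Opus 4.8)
\emph{Proof proposal.} The plan is to treat the read and write parts of $\rho=\rho_r+\rho_w$ separately and invoke the two preceding theorems. In the merge regime $t_2=1$, so Definition~\ref{acccost} specializes to $\rho_w=n_F-\sum_{i\in[t]}|\mathcal{U}_i|$ and $\rho_r=\sum_{i\in[t]}|\mathcal{R}_i|$.

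For the write cost I would sum the bound of Theorem~\ref{III.1} over all $i\in[t]$:
\[
\rho_w = n_F-\sum_{i\in[t]}|\mathcal{U}_i|\ \ge\ n_F-\sum_{i\in[t]}\Big(k_{I_i}+n_F-k_F-d_F+1-\big(\lceil\tfrac{k_F-k_{I_i}}{r}\rceil-1\big)(\delta-1)\Big),
\]
and then use the dimension identity $\sum_{i\in[t]}k_{I_i}=k_F$ from Eq.~\eqref{dimeq} to cancel the $\sum_i k_{I_i}$ contribution; collecting the remaining terms gives exactly \eqref{minwri1}. This step is a direct computation.

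For the read cost I would apply Theorem~\ref{III.2} for each $i\in[t]$. The key preliminary remark is that, since $\mathcal{U}_i\setminus\mathcal{R}_i\subseteq\mathcal{U}_i$, the quantity $\Delta_i=|\mathcal{U}_i\setminus\mathcal{R}_i|-d_F+1$ from Theorem~\ref{III.2} satisfies $\Delta_i\le|\mathcal{U}_i|-d_F+1\le\widetilde{\Delta_i}$, where the final inequality is precisely the bound of Theorem~\ref{III.1}. Now split on $i$: if $d_F>n_{I_i}-k_{I_i}+1$, or if $\widetilde{\Delta_i}\le 0$ (so $\Delta_i\le 0$ as well), then Theorem~\ref{III.2} already forces $|\mathcal{R}_i|\ge k_{I_i}$, and such $i$ contribute nothing beyond the default cost; the remaining indices are precisely those indexing the sum in \eqref{minrea1}, and for them Theorem~\ref{III.2} gives $|\mathcal{R}_i|\ge k_{I_i}-\big(\Delta_i-(\delta-1)\lfloor\tfrac{\Delta_i}{r+\delta-1}\rfloor\big)$ when $\Delta_i>0$ and the even stronger $|\mathcal{R}_i|\ge k_{I_i}$ when $\Delta_i\le 0$. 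Writing $\rho_r=k_F-\sum_{i\in[t]}\big(k_{I_i}-|\mathcal{R}_i|\big)$ and substituting these per-index estimates, the claim \eqref{minrea1} reduces to bounding each savings term $\Delta_i-(\delta-1)\lfloor\tfrac{\Delta_i}{r+\delta-1}\rfloor$ by $\widetilde{\Delta_i}-(\delta-1)\lfloor\tfrac{\widetilde{\Delta_i}}{r+\delta-1}\rfloor$ using only $0<\Delta_i\le\widetilde{\Delta_i}$.

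I expect this last step to be the main obstacle, since it is the only point that is not formal: the function $x\mapsto x-(\delta-1)\lfloor x/(r+\delta-1)\rfloor$ must be analyzed through the division with remainder $x=a(r+\delta-1)+s$, $0\le s\le r+\delta-2$, so that the savings equals $ar+s$, and one must compare this quantity for $\Delta_i$ against that for $\widetilde{\Delta_i}$; the interaction between the quotient and the remainder as $x$ crosses multiples of $r+\delta-1$ is exactly what needs to be controlled. Once this monotonicity-type inequality is established, summing over $i$ yields \eqref{minrea1}, and adding it to \eqref{minwri1} gives $\rho\ge\text{min }\rho_w+\text{min }\rho_r$, completing the proof.
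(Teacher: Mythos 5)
Your decomposition and per-index plan is exactly the paper's route: sum Theorem~\ref{III.1} over $i$ for the write cost, apply Theorem~\ref{III.2} per index for the read cost, and use $\sum_i k_{I_i}=k_F$ to assemble the totals. The write-cost computation and the accounting are fine, and $\Delta_i\le\widetilde{\Delta_i}$ does follow from Theorem~\ref{III.1}. But the ``monotonicity-type inequality'' you correctly flag as the main obstacle is actually \emph{false} for $\delta\ge 3$, not merely in need of verification. Writing $g(x)=x-(\delta-1)\lfloor x/(r+\delta-1)\rfloor$ and $x=a(r+\delta-1)+s$ with $0\le s\le r+\delta-2$, one gets $g(x)=ar+s$; as $x$ crosses a multiple of $r+\delta-1$, the remainder resets and $g$ drops by $\delta-2$. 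Concretely, with $r=2$, $\delta=3$ one has $g(3)=3>g(4)=2$, so $g(\Delta_i)\le g(\widetilde{\Delta_i})$ fails when $\Delta_i=3$, $\widetilde{\Delta_i}=4$. (For $\delta=2$ the drop is zero and $g$ is non-decreasing, so your argument does go through in that case.)

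Consequently, for $\delta\ge 3$ the per-index bound $|\mathcal{R}_i|\ge k_{I_i}-g(\widetilde{\Delta_i})$ does not follow from Theorem~\ref{III.2} plus the single fact $\Delta_i\le\widetilde{\Delta_i}$; one would need either a structural argument restricting the achievable values of $\Delta_i$ to avoid the bad remainders, or a version of \eqref{minrea1} with $g(\widetilde{\Delta_i})$ replaced by $\max_{0<\Delta\le\widetilde{\Delta_i}}g(\Delta)$. It is worth noting that the paper's own proof of Theorem~\ref{bouthm} is equally silent on this exact point---it records only $\Delta_i\le\widetilde{\Delta_i}$ and the degenerate cases, then asserts that \eqref{minrea1} follows from \eqref{reaacc}---so the gap you are worried about is present there as well; your instinct to single it out as ``the only point that is not formal'' was correct.
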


\begin{proof}
    In \eqref{wriacc}, we give the upper bound of unchanged symbols for each initial code. Therefore \eqref{minwri1} follows by the definition of written symbols.  Notice that $\Delta_i \leq |\mathcal{U}_i|-d_F+1 \leq \widetilde{\Delta_i} $,  when either $\widetilde{\Delta_i} \leq 0  $ or $d_F > n_{I_{i}}-k_{I_i}+1 $, we must read at least $k_{I_i}$ symbols in $\mathcal{C}^{I_i}$ by \eqref{reaacc}, hence \eqref{minrea1} follows from \eqref{reaacc}.
\end{proof}

\begin{remark} \label{remark49}
    Near the completion of this manuscript, we became aware of a similar result obtained independently in the very recent paper \cite{ge2025locally}. They considered the conversion between $(r,\delta)$-LRCs and derived a slightly different bound compared to the one in Theorem \ref{bouthm}. Nevertheless, our approach to the proof is different from theirs, and the constructions they provided also differ from ours.
\end{remark}

With appropriate choices of code parameters, the above access cost bound becomes more elegant. We have:
\begin{corollary} \label{rdelcor}
    Let $\mathscr{C}$ be a $(t, 1)_q$ generalized merge convertible code over the finite field $\mathbb{F}_q$, where $\mathcal{C}^{I_{i}}$ has parameters $[n_{I_i},k_{I_i}=k_ir]$ for each $i \in [t]$ and $\mathcal{C}^F$ is an optimal $(n_F,k_F=\sum\limits_{i=1}^{t}k_ir;(r,\delta))$-$\mathrm{LRC}$. Then the total access cost satisfies
     \[ \rho \geq \text{min }\rho_w + \text{min }\rho_r, \]
     where
     \begin{equation} \label{minwri2}
         \text{min }\rho_w = n_F-(r+\delta-1)\sum\limits_{i=1}^{t}k_i
     \end{equation}
     and
     \begin{equation} \label{minrea2}
          \text{min }\rho_r = k_F - \sum_{i \in [t], ~-\frac{n_F}{r
          +\delta-1} + k_i+ \frac{k_F}{r} >0  ~and ~ d_F \leq n_{I_{i}}-k_{I_i}+1} [k_F+ rk_i-\frac{rn_F}{r+\delta-1}].
     \end{equation}
     More specifically, if $n_F$ can be written as $(r+\delta-1)(\sum\limits_{i=1}^{t}k_i+\ell)$, \eqref{minwri2}~and~ \eqref{minrea2} can be simplified to 
    \begin{equation} \label{minwri3}
         \text{min }\rho_w = (r+\delta-1)\ell
     \end{equation}
     and

     \begin{equation} \label{minrea3}
         \text{min }\rho_r = r \sum\limits_{i=1,~k_i \leq \ell \text{~or~} d_F > n_{I_{i}}-k_{I_i}+1 }^{t}k_{i} + r \sum\limits_{i=1,~k_i > \ell \text{~and~} d_F \leq n_{I_{i}}-k_{I_i}+1 }^{t}\ell,
     \end{equation} 
     respectively. 
\end{corollary}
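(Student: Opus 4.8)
The plan is to derive Corollary \ref{rdelcor} directly from Theorem \ref{bouthm} by specializing the parameters and simplifying; no genuinely new idea is required. First I would record the two arithmetic consequences of the hypotheses. Writing $K = \sum_{i=1}^t k_i$, the assumptions $k_{I_i} = k_i r$ and $k_F = Kr$ give $\lceil k_F/r\rceil = K$ and $\lceil (k_F - k_{I_i})/r\rceil = K - k_i$ for every $i$, so all the ceilings appearing in \eqref{minwri1} and \eqref{minrea1} disappear. Moreover, since $\mathcal{C}^F$ is an optimal $(r,\delta)$-LRC, its minimum distance meets the Singleton-type bound \eqref{Sbound2}, i.e. $d_F = n_F - k_F + 1 - (K-1)(\delta-1)$.

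Next I would substitute into the write-cost expression \eqref{minwri1}. After replacing $\sum_i(\lceil(k_F-k_{I_i})/r\rceil-1)(\delta-1)$ by $(\delta-1)((t-1)K - t)$ and expanding $t\,d_F$, the $n_F$-terms collapse to $n_F$, the $k_F$-terms to $-k_F$, the pure constants cancel, and the $(\delta-1)$-terms collapse to $-K(\delta-1)$; this yields $\mathrm{min}\,\rho_w = n_F - k_F - K(\delta-1) = n_F - (r+\delta-1)K$, which is \eqref{minwri2}. For the read cost I would first simplify $\widetilde{\Delta_i}$ itself: plugging in the value of $d_F$ and the simplified ceiling, the expression collapses to the compact form $\widetilde{\Delta_i} = (r+\delta-1)(K+k_i) - n_F - (\delta-1)$. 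Substituting this (and $k_F = Kr$) into \eqref{minrea1} and simplifying the quantity $\widetilde{\Delta_i} - (\delta-1)\lfloor \widetilde{\Delta_i}/(r+\delta-1)\rfloor$ then gives \eqref{minrea2}.

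For the final ``more specifically'' claim, I would set $n_F = (r+\delta-1)(K+\ell)$, which turns the compact form into $\widetilde{\Delta_i} = (r+\delta-1)(k_i - \ell) - (\delta-1)$. Since $0 < \delta - 1 < r + \delta - 1$ and $k_i - \ell \in \mathbb{Z}$, the inequality $\widetilde{\Delta_i} > 0$ is equivalent to $k_i > \ell$; and when $k_i > \ell$ we have $\widetilde{\Delta_i} = (r+\delta-1)(k_i - \ell - 1) + r$ with $0 \le r < r+\delta-1$, so $\lfloor \widetilde{\Delta_i}/(r+\delta-1)\rfloor = k_i - \ell - 1$ and hence $\widetilde{\Delta_i} - (\delta-1)\lfloor\widetilde{\Delta_i}/(r+\delta-1)\rfloor = r(k_i - \ell)$. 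Feeding this into \eqref{minrea1}, writing $k_F = r\sum_i k_i$, and splitting the index set according to whether both $k_i > \ell$ and $d_F \le n_{I_i} - k_{I_i} + 1$ hold produces \eqref{minrea3}, while \eqref{minwri2} becomes $(r+\delta-1)\ell$, recovering \eqref{minwri3}.

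The only point requiring care --- and thus the ``main obstacle'', such as it is --- is the bookkeeping of the floor and ceiling functions: checking that the divisibility $r \mid k_{I_i}$ and $r \mid k_F$ removes all ceilings cleanly, and in the special regime verifying the integrality argument that $\widetilde{\Delta_i} > 0 \iff k_i > \ell$ together with the fact that $\widetilde{\Delta_i} \equiv r \pmod{r+\delta-1}$, so that the floor term evaluates exactly. All remaining steps are routine algebraic rearrangement of the formulas already furnished by Theorem \ref{bouthm}.
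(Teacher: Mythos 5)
Your proposal is correct and follows essentially the same route as the paper: specialize $\lceil k_F/r\rceil$, $\lceil (k_F-k_{I_i})/r\rceil$ and $d_F$ under the divisibility hypotheses and the Singleton-type equality, simplify $\widetilde{\Delta_i}$ to $(r+\delta-1)(K+k_i)-n_F-(\delta-1)$, and evaluate the floor exactly. The paper's proof simply records the simplified expressions and invokes "direct computation," so your write-up supplies the same derivation in fuller detail (including the integrality argument $\widetilde{\Delta_i}>0\iff k_i>\ell$), rather than a genuinely different argument.
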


\begin{proof}
    We have $d_F =n_F-k_F+1-(\sum\limits_{i=1}^{t}k_i-1)(\delta-1) $ since $\mathcal{C}^F$ is an optimal $(r,\delta)$-LRC. By direct computation, we have 
    \[
    \text{min }\rho_w = n_F-(r+\delta-1)\sum\limits_{i=1}^{t}k_i,
    \]
    \[
    \widetilde{\Delta_i} = -n_F+ (k_i+\sum\limits_{i^{'}=1}^{t}k_{i^{'}})(r+\delta-1)-(\delta-1) 
    \]
    and
    \[
    \lfloor \frac{\widetilde{\Delta_i}}{r+\delta-1}  \rfloor =-\frac{n_F}{r+\delta-1}+(k_i+\sum\limits_{i^{'}=1}^{t}k_{i'} )-1.
    \] Hence all the result follows by straightforward calculation.
\end{proof}

\begin{remark}\label{remIII.2}
    Theorem \ref{bouthm} covers the known bound in \cite{kong2024locally} by setting $\mathcal{C}^{I_1}=\mathcal{C}^{I_2}=\cdots=\mathcal{C}^{I_t}$ and $\delta=2$, it also generalizes the result in \cite{ge2024mds} by considering the final code to be an optimal  $(n_F,k_F;(r=k_F,\delta=2))$ LRC (i.e., an MDS code), without requiring any of the initial codes to be MDS. Therefore, our bound is applicable in a very broad range of scenarios.
\end{remark}

\section{ Construction of Access-Optimal Generalized Merge-Convertible Code Between MDS codes }\label{sec4}
In this section, we focus on the construction of access-optimal $(t,1)_q$ MDS generalized merge-convertible codes. To simplify notation, we first assume that all initial codes have identical parameters. In this case, we use the superscript $I$ to denote the initial code, and the superscript $F$ to represent the final code. In Example \ref{mdsexa}, we will show that the construction remains valid even when the initial codes differ in their parameters.

We begin by introducing a general framework for constructing access-optimal MDS merge-convertible codes and then provide several examples to illustrate the construction. We will extend this framework to design $(r,\delta)$ merge-convertible codes for  $\delta \geq 2$ in the next section and show that our convertible codes achieve the lower bound of access cost established in Theorem \ref{bouthm}, hence they are also access-optimal convertible codes.

By Theorem \ref{mdsbou}, the lower bound of access cost can be obtained by the default approach when $l_{I} < l_F$ or $k_I < l_F$, hence we assume that $l_F \leq \text{min~}\{l_I, k_I\}$ throughout this section.

The idea underlying our construction work as follows. Let $F$ be a rational function field, and let $\mathcal{G}$ be a subgroup of Aut$(F/\mathbb{F}_q)$ with order $l$. By the Galois theory, $F/F^{\mathcal{G}}$ is a Galois extension with Galois group Gal$(F/F^{\mathcal{G}})=\mathcal{G}$. Suppose that $P_0, P_1, P_2, \cdots, P_m$ are rational places of $F^{\mathcal{G}}$ that split completely in the field extension $F/F^{\mathcal{G}}$. Denote by $P_{i,1}, P_{i,2},\cdots, P_{i,l}$ the rational places of $F$ lying over $P_i$ for  $i \in [0,m]$. Since $\mathcal{G}$ acts transitively on the rational places $P_{i,1}, P_{i,2},\cdots, P_{i,l}$ for each $i$, we can reorder these places such that, without loss of generality, there exists a unique element $\sigma_j \in \mathcal{G}$ satisfying $\sigma_j(P_{i,1})=P_{i,j}$ for $i \in [0,m]$ and $j \in [l]$. For any function $f \in F$, by \cite[Equation (8.3)]{stichtenoth2009algebraic}, we have
\[\sigma_j(f)(P_{i,j})=\sigma_j(f)\sigma_j(P_{i,1})=f(P_{i,1}) \text{~if~} v_{P_{i,1}}(f) \geq 0. \]
 
 In essence, we can select one rational place from each orbit under the action of $\mathcal{G}$ and collect them together as the evaluation points for the information symbols of the initial code. By exploiting the aforementioned property, we extend the set of evaluation points and adjust the evaluation functions accordingly so that the values at the newly added points remain consistent with those at the original evaluation points of the initial code. This strategy maximizes the number of unchanged symbols. Furthermore, we choose the redundancy symbols of the final code to form a complete orbit under the action of $\mathcal{G}$, ensuring that these symbols can be determined solely by reading the corresponding redundancy symbols from the initial code. See Theorem \ref{mdsthm} below for further details. 

To construct convertible codes using this idea, we need to analyze the splitting behavior of rational places of $F^{\mathcal{G}}$ in extension $F/F^{\mathcal{G}}$. Given that the automorphism group Aut$(F/\mathbb{F}_q)$ is quite large and contains several well-structured subgroups, we begin by presenting a general framework for constructing access-optimal MDS generalized merge-convertible codes using an arbitrary subgroup of Aut$(F/\mathbb{F}_q)$, relying on an estimate of the number of ramified rational places in $F$. We then demonstrate how to construct convertible codes using specific subgroups explicitly.

We now present the main theorem of this section.
\begin{theorem} \label{mdsthm} 
    Assume that there exists a subgroup $\mathcal{G}$ of $\mathrm{PGL}_2(q)$ of order $l$. Let $m$ be a positive integer satisfying $m+1 \leq \lfloor \frac{q-2l+3}{l}  \rfloor $. For some positive integers $k, t, l^{'}$ satisfying $k\in [m]$, $ t \in [l]$, $k+l^{'} \leq q+1$ and  $l\leq min\{k, l^{'} \}$, define $n_I = k+l^{'}$, $k_I = k $, $n_F= kt + l $ and $k_F=kt$.
    Then there exists an access-optimal  $(n_I, k_I; n_F, k_F )$   $\mathrm{MDS}$ merge-convertible code with read access cost $t l$ and write access cost $l$. Moreover, this construction is per-symbol read access-optimal.
\end{theorem}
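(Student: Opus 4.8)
The plan is to realise the convertible code as a triple of (generalized) Reed--Solomon codes, viewed as algebraic--geometry codes over the rational function field $F=\mathbb{F}_q(x)$, and to use the action of $\mathcal{G}$ on the rational places of $F$ for two separate purposes: to duplicate the information positions of the initial code inside the final code (thereby producing $tk$ unchanged symbols and write cost $l$), and to arrange the $l$ parity positions of the final code into a single $\mathcal{G}$-orbit, so that each of them can be recovered by reading exactly one symbol from each initial code (thereby giving read cost $tl$ and the per-symbol optimality).

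First I would count the rational places of $F^{\mathcal{G}}$ that split completely in $F/F^{\mathcal{G}}$. Since $g(F)=0$, the Hurwitz genus formula \eqref{hurw} forces $g(F^{\mathcal{G}})=0$ and $\deg\operatorname{Diff}(F/F^{\mathcal{G}})=2l-2$. As every rational place of $F$ that ramifies over $F^{\mathcal{G}}$ contributes at least $e-1\ge 1$ to the different, at most $2l-2$ rational places of $F$ ramify, so at least $q+1-(2l-2)=q-2l+3$ of them lie in full $\mathcal{G}$-orbits of length $l$ (their decomposition groups being trivial). Hence $F^{\mathcal{G}}$ has at least $\lfloor(q-2l+3)/l\rfloor\ge m+1$ completely split rational places; fix $m+1$ of them, let $P_{i,1},\dots,P_{i,l}$ (for $i\in[0,m]$) be the places of $F$ above the $i$-th, write $\mathcal{G}=\{\sigma_1=\mathrm{id},\sigma_2,\dots,\sigma_l\}$, and, using that $\mathcal{G}$ acts regularly on each fibre, relabel so that $\sigma_j(P_{i,1})=P_{i,j}$ for all $i$ and $j$ simultaneously. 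Using $k\le m$ I reserve $k$ of these orbits as \emph{information orbits} and one further orbit, with points $R_1,\dots,R_l$ (where $R_j=\sigma_j(R_1)$), as the \emph{parity orbit}.

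Next I would write down the three codes. Take the common initial code $\mathcal{C}^I$ to be a generalized Reed--Solomon code whose $k$ information positions are the ``first'' places $P_{p,1}$ of the information orbits, whose $l$ read positions are $R_1,\dots,R_l$, and whose remaining $l'-l$ positions are arbitrary further rational places; it is MDS of length $k+l'$ (here $k+l'\le q+1$ is the hypothesis). Take $\mathcal{C}^F$ to be a generalized Reed--Solomon code of dimension $kt$ whose $kt$ information positions are $\{P_{p,j}:p\in[k],\ j\in[t]\}$ (these are distinct because $t\le l$) and whose $l$ parity positions are $R_1,\dots,R_l$; it is MDS of length $kt+l\le q-2l+3\le q+1$. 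Define the conversion map by letting $\phi(f_1,\dots,f_t)$ be the unique codeword of $\mathcal{C}^F$ whose value at $P_{p,j}$ equals the value of $\sigma_j(f_j)$ there, which by the identity $\sigma_j(f)(P_{p,j})=f(P_{p,1})$ (cited in the discussion before Theorem~\ref{mdsthm}) is just the $p$-th information symbol of $f_j$. Matching dimensions ($tk=kt$) shows $\phi$ is a linear bijection. Then I would verify the diagrams: by construction the information block of $\mathcal{C}^F$ coming from $\mathcal{C}^{I_j}$ agrees symbol by symbol with $\mathcal{C}^{I_j}$ (the index sets $\{P_{p,1}\}$ and $\{P_{p,j}\}$ differ, but the values are identical, which is exactly what the framework permits), so diagram \eqref{diag2} commutes with $|\mathcal{U}_j|=k$ and $\rho_w=(kt+l)-tk=l$. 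For diagram \eqref{diag1} the decisive point is that the parity positions of $\mathcal{C}^F$ and the read positions of every $\mathcal{C}^{I_j}$ all lie on the single $\mathcal{G}$-orbit $\{R_1,\dots,R_l\}$: since $\sigma_j$ is an automorphism of $F$ carrying $R_1$ to $R_j$, the parity-check relation of $\mathcal{C}^F$ at $R_r$ restricts, on the $j$-th information block, to a scalar multiple of the parity-check relation of $\mathcal{C}^I$ at $\sigma_j^{-1}(R_r)$, so each written symbol of $\mathcal{C}^F$ becomes a fixed linear combination of exactly one read symbol from each $\mathcal{C}^{I_j}$; as $r$ ranges over $[l]$ the point $\sigma_j^{-1}(R_r)$ runs over all $l$ read positions of $\mathcal{C}^{I_j}$, so $\rho_r=tl$ and the per-symbol read cost equals $tl$ as well. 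Finally, since $l_F=l\le\min\{k_I,l_I\}$, Theorem~\ref{mdsbou} gives $\rho\ge l+tl=(t+1)l$, which our code attains, hence it is access-optimal; and by Remark~\ref{perred} the per-symbol read cost is bounded below by the read cost, which is at least $tl$ (this specialization of Theorem~\ref{III.2} gives $|\mathcal{R}_j|\ge l$ for each $j$), so the construction is per-symbol read access-optimal.

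I expect the main obstacle to be the ``restricts to a scalar multiple'' claim: one must produce explicit Reed--Solomon multipliers for $\mathcal{C}^I$ and $\mathcal{C}^F$ (built from the $\mathcal{G}$-action on $R_1,\dots,R_l$; in the cyclic cases of Lemma~\ref{subgps} these are powers of the group parameter) for which all $l$ block-wise conversion identities hold at once \emph{and} $\mathcal{C}^F$ remains MDS. Because the $R_j$ form one $\mathcal{G}$-orbit, this should collapse to a single superregularity (Vandermonde/Cauchy-type) statement about one matrix twisted by the $\sigma_j$, but carrying that out — and checking that the retired positions of the $\mathcal{C}^{I_j}$ can always be chosen so each $\mathcal{C}^{I_j}$ has the right length — is the technical core of the proof; the remaining steps, and the concrete instantiation via specific subgroups, follow routinely.
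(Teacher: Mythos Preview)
Your proposal is correct and takes essentially the same approach as the paper: the Hurwitz count of completely split orbits, placing the $j$-th information block at $\sigma_j(\mathcal{A}_1)$ and the $l$ parity positions on a single $\mathcal{G}$-orbit, and using $\sigma_j(f)(\sigma_j(P))=f(P)$ for both the unchanged-symbol identity and the per-symbol read structure. The ``technical core'' you flag is precisely what the paper supplies explicitly --- the encoding function $T(f_1,\ldots,f_t)=\sum_{j} g_j^{k-1}\,h_{\mathcal{A}\setminus\mathcal{A}_j}\,\sigma_j(f_j)$ (with $(g_j)=\sigma_j(P_\infty)-P_\infty$ and $h_{\mathcal{A}\setminus\mathcal{A}_j}$ the annihilator of the other blocks) together with the coefficient vector $u_P=[g_j^{k-1}h_{\mathcal{A}\setminus\mathcal{A}_j}]^{-1}(P)$ on block $j$ --- which makes both the MDS property (via $T\in\mathcal{L}((kt-1)P_\infty)$) and the decomposition of each written symbol at $P_{0,j}$ immediate, so no separate superregularity argument is needed.
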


\begin{proof}
    Let $F$ denote a rational function field over $\mathbb{F}_q$ and consider the field extension $F/F^{\mathcal{G}}$, denote by $ \mathcal{R} = \{ R_1, R_2, \cdots, R_s \}$ the set of all rational places of $F$ which are ramified in $F/F^{\mathcal{G}}$. By the Hurwitz genus formula \eqref{hurw}, we have
    \[
    2 g(F)-2 \geq l[2g(F^{\mathcal{G}})-2] + \sum^{s}_{i=1} d_{R_i}(F/F^{\mathcal{G}}) \mathrm{deg}(R_i) \geq -2l+s,
    \]
    so that $s \leq 2l-2$.
    
    For any rational place $P \notin \mathcal{R} $ of $F$, let $Q$ be its restriction to $F^{\mathcal{G}}$, since every automorphism in Aut$(F/\mathbb{F}_q)$ fixes $\mathbb{F}_q$, the relative degree $f(P|Q) =[\mathcal{O}_P/P:\mathcal{O}_Q/Q] =1 $, and ramification index $e(P|Q) = 1$. By the fundamental equality \cite[Theorem 3.1.11]{stichtenoth2009algebraic} $e(P|Q)f(P|Q)g(P|Q)=1$, it follows that $Q$ splits completely into $l$ rational places in $F$. Since there are $q+1$ rational places in total over $\mathbb{F}_q$, there exist at least $q-2l+3$ rational places in $F$ lying above completely split rational places in $F^{\mathcal{G}}$. Given that $m+1 \leq \lfloor \frac{q-2l+3}{l}  \rfloor $, we can select a rational place  $P_{\infty}$ of $F$ such that $P_{\infty} \notin P_{i,j}$, for $i \in [0,m]$ and $j \in [l]$ and denote $P_{\infty} \cap F^{\mathcal{G}}$ by $Q_{\infty}$. 
    Since $F$ is a rational function field, there exists a function $x \in F$ such that $(x)_{\infty} = P_{\infty}$ and hence we have $F=\mathbb{F}_q(x)$. 
    
    Next, we choose $k+1$ rational places $\{ P_0, P_1, P_2, \cdots, P_{k} \}$ in $F^{\mathcal{G}}$ that split completely in $F/F^{\mathcal{G}}$. As mentioned before, denote by $P_{i,1}, P_{i,2},\cdots, P_{i,l}$ all rational places of $F$ lying over $P_i$ with particular order for each $i \in [0,k]$, we can then use $\sigma_j$ to denote the unique element in  $\mathcal{G}$ such that $\sigma_j(P_{i,1})=P_{i,j}$ for $i \in [0,k]$ and $j \in [l]$. Let $\mathcal{B}=\{P_{0,1},P_{0,2},\cdots,P_{0,l}\}$, $\mathcal{A}_j=\{P_{1,j},P_{2,j},\cdots,P_{k,j}\}$ for $j \in [t]$ and $\mathcal{A} =\mathcal{A}_1 \cup \mathcal{A}_2 \cup \cdots \mathcal{A}_{t} $. Additionally, select $l^{'}-l$ places outside of $\mathcal{A}_1 \cup \mathcal{B}$ and denote this set by $\mathcal{B}^{'}$, then define the initial code as
    \[
    \mathcal{C}^I=\{ \big(f(P)\big)_{P \in \mathcal{A}_1 \cup \mathcal{B} \cup \mathcal{B}^{'}} : f\in \mathbb{F}^{<k}_q[x] \}.
    \]
    Clearly,  $\mathcal{C}^I$ is an $[n_I, k_I]$ MDS code. 

     Next, for any $f_1,f_2,\cdots, f_{t} \in \mathbb{F}^{<k}_q[x]$, we define the transformation map as
    \[
    T(f_1,f_2,\cdots, f_{t})=   \sum^{t}_{j=1} g^{k-1}_j  h_{\mathcal{A} \backslash \mathcal{A}_j} \sigma_j(f_j),
    \]
    where $h_{\mathcal{A} \backslash \mathcal{A}_i}= \prod\limits_{a \in \mathcal{A} \backslash \mathcal{A}_i }(x-a)$ is the annihilator polynomial of the set $\mathcal{A} \backslash \mathcal{A}_i$, $g_j$ is a fixed function with divisor $(g_j) = \sigma_j(P_{\infty}) - P_\infty$, and such a function always exists in the rational function field. In particular, we choose $g_1=1$.
    Then we define the coefficient vector $ u = (u_P)_{P \in \mathcal{A}_1\cup \mathcal{A}_2 \cup  \cdots \cup  \mathcal{A}_{t} \cup \mathcal{B}}$ by:
     \[
     u_{P}=
     \begin{cases}
      [ \frac{1}{g^{k-1}_{i}} h^{-1}_{ \mathcal{A} \backslash \mathcal{A}_{i} } ](P) , & \text{if~} P \in \mathcal{A}_{i} \text{~for~} i \in [t]; \\
      1, & \text{if~} P \in \mathcal{B}.
     \end{cases}
     \]     
     Finally, define the final code  $\mathcal{C}^F$ as:
     \[
     \mathcal{C}^F = \{ \big(u_P T(f_1, f_2, \cdots, f_{t}) (P)\big)_{P \in  \mathcal{A} \cup \mathcal{B}} :f_1, f_2, \cdots, f_{t} \in \mathbb{F}^{  {<k}}_{q}[x] \}.
     \]
    
     Now we show that this definition is well-defined.
     We need to evaluate $T(f_1, f_2, \cdots, f_{t})$ at the places in $ \mathcal{A} \cup \mathcal{B}$. Note that $\mathbb{F}_q^{<k}[x] = \mathcal{L}((k-1)P_{\infty})$ and $\sigma(\mathcal{L}(G))=\mathcal{L}(\sigma(G))$ \cite[Lemma 1]{10268243} for any automorphism $\sigma \in \text{Aut}(F/\mathbb{F}_q)$ and any divisor $G$. Therefore $\sigma_j(f_j) \in \mathcal{L}((k-1)\sigma_j(P
     _{\infty})) $. After multiplication by $g^{k-1}_j$, the only possible pole of $g^{k-1}_j  \sigma_j(f_j)$ is $P_{\infty}$. Meanwhile, the only possible pole of $ h_{\mathcal{A} \backslash \mathcal{A}_j}$ is $P_{\infty}$. As a result, $T(f_1, f_2, \cdots, f_{t})$ is regular at all places in $ \mathcal{A} \cup \mathcal{B}$, which ensures that $\mathcal{C}^F$ is well-defined. 
     
To illustrate the conversion procedure more clearly, we distinguish the initial codes in the proof. Specifically, we assume that $\mathcal{C}^{I_i}=\mathcal{C}^I=\{ (f_i(P))_{P \in \mathcal{A}_1 \cup \mathcal{B} \cup \mathcal{B}^{'}}: f_i \in \mathbb{F}^{<k}_q[x] \}$ for $i \in [t]$. We now present the conversion map
 \[\phi: \mathcal{C}^{I_1}\times \mathcal{C}^{I_{2}}\times\cdots \times \mathcal{C}^{I_t} \longrightarrow \mathcal{C}^F\] which is defined by\[\phi\left(\big((f_1(P))_{P\in\mathcal{A}_1 \cup \mathcal{B} \cup \mathcal{B}^{'}}, \cdots,(f_t(P))_{P\in\mathcal{A}_1 \cup \mathcal{B} \cup \mathcal{B}^{'}}\big)\right)=\big(u_P T(f_1, f_2, \cdots, f_{t}) (P)\big)_{P \in  \mathcal{A} \cup \mathcal{B}} \]
 for any $f_1,f_2,\cdots, f_{t} \in \mathbb{F}^{<k}_q[x]$.
 
     Note that for any $P_{i,j} \in \mathcal{A}$, 
     \begin{align*} 
         u_{P_{i,j}}T(f_1,f_2,\cdots, f_{t}) (P_{i,j}) &=  [u_{P_{i,j}} \sum^{t}_{j^{'}=1} g^{k-1}_{j^{'}}  h_{\mathcal{A} \backslash \mathcal{A}_{j^{'}}} \sigma_{j^{'}}(f_{j^{'}})] (P_{i,j}) \\
          &= [u_{P_{i,j}} g^{k-1}_j  h_{\mathcal{A} \backslash \mathcal{A}_j} \sigma_j(f_j)] (P_{i,j}) \\
          &= [u_{P_{i,j}} g^{k-1}_j  h_{\mathcal{A} \backslash \mathcal{A}_j} \sigma_j(f_j)] (\sigma_j(P_{i,1})) \\
          &= f_j(P_{i,1}).
     \end{align*}
From the above equation, $\phi$ induces identity maps from  $\mathcal{C}^{I_j}|_{\mathcal{A}_1}$ to $\mathcal{C}^{F}|_{\mathcal{A}_j}$ for $j \in [t]$. Thus, the symbols in $\mathcal{C}^F$ corresponding to the places in $\mathcal{A}$ are the unchanged symbols, and hence the symbols corresponding to places in $\mathcal{B}$ are the written symbols. Thus, the write access cost is $|\mathcal{B}|=l$.  
Moreover, since the initial code is MDS, we can recover all polynomials $f_1,f_2,\cdots, f_{t}$ from  $ u_{P_{i,j}}T(f_1,f_2,\cdots, f_{t})(P_{i,j}) $, which implies that the final code has dimension $kt$. Since $T(f_1, f_2, \cdots, f_{t}) \in \mathcal{L}((k-1 )P_\infty+(kt-k)P_\infty)=\mathcal{L}((kt-1)P_\infty)$, it can have at most $kt-1$ zeros. Therefore, the minimum distance $d_F$ of the final code satisfies $d_F \geq n_F- (kt-1)=l+1 $. By the Singleton bound, this confirms that the final code is an $[n_F,k_F]$ MDS code.
     
     Next, we show that the per-symbol read cost of this construction is $t$.
     In fact, for any written symbol $P_{0,j} \in \mathcal{B}$, 
     \begin{align*}
         u_{P_{0,j}}T(f_1,f_2,\cdots, f_{t}) (P_{0,j}) &=  [ \sum^{t}_{j^{'}=1} g^{k-1}_{j^{'}}  h_{\mathcal{A} \backslash \mathcal{A}_{j^{'}}} \sigma_{j^{'}}(f_{j^{'}})] (P_{0,j}) \\
          &= [ \sum^{t}_{j^{'}=1} g^{k-1}_{j^{'}}  h_{\mathcal{A} \backslash \mathcal{A}_{j^{'}}}](P_{0,j}) [\sigma_{j^{'}}(f_{j^{'}})] (P_{0,j})   \\
          &=  [ \sum^{t}_{j^{'}=1} g^{k-1}_{j^{'}}  h_{\mathcal{A} \backslash \mathcal{A}_{j^{'}}}](P_{0,j}) f_{j^{'}} (\sigma^{-1}_{j^{'}} (P_{0,j})). \\
     \end{align*}
    Hence $u_{P_{0,j}}T(f_1,f_2,\cdots, f_{t}) (P_{0,j})$ is a linear combination of $f_1(P_{0,j})$, $f_2(\sigma^{-1}_2 (P_{0,j}))$,$\cdots$, $f_{t}(\sigma^{-1}_{t} (P_{0,j}))$ for $j \in [l] $.
    We can recover every written symbol by $t$ symbols of initial codes, and hence the total read symbols are equal to $t l$.
    
    In conclusion, by \cite[Theorem 1]{ge2024mds} or Remark \ref{remIII.2}, our construction is access-optimal, and which is also per-symbol read access-optimal defined in Remark \ref{perred}.     
    \end{proof}

\begin{remark} \label{remmds}
~
    \begin{enumerate} [label=\arabic*.]
        \item[(i)] To maximize the number of codewords that can be merged, we exclusively evaluate at completely split rational places. The places associated with the redundancy symbols in the final code are selected from a single complete orbit under the action of $\mathcal{G}$. This choice is motivated by the assumption $l_F \leq \text{min~}\{l_I, k_I\}$, which encourages minimizing the number of such symbols. Naturally, it is also possible to select multiple complete orbits as the places associated with the redundancy symbols; as long as the condition $l_F \leq \text{min~}\{l_I, k_I\}$ is satisfied, the proposed conversion process outperforms the default approach. In fact, redundancy symbols can also be associated with ramified rational places, which means that the number of redundancy symbols is not necessarily a multiple of the size of any subgroup of $\text{PGL}_2(q)$. This provides greater flexibility in parameter selection, without decreasing the maximum number of codewords that can be merged (since the information symbols are still associated with completely split rational places). We will explain this point more precisely in Remark \ref{remram} in the next chapter.
        \item[(ii)] As long as the rational places in at least one orbit are not used for evaluation, we can define $P_\infty$ and $Q_\infty$ in the proof of Theorem \ref{mdsthm} and construct convertible codes as above. However, by using modified algebraic geometry codes \eqref{alcod4}, all rational places can in fact be used for evaluation. We will illustrate this point in Corollary \ref{poleva}.
        \item[(iii)] Through simple modifications, the parameters of the initial codes do not need to be the same, hence we can construct an access-optimal MDS generalized merge-convertible code by our method. We will explain this in Example \ref{mdsexa}.
        \item[(iv)] If a subgroup $\mathcal{G}$ of the multiplicative group $\mathbb{F}_q^*$ is used in the construction, the method for selecting rational places corresponding to information symbols in the access-optimal MDS merge-convertible codes in \cite[Corollary II.2]{kong2024locally} is essentially the same as ours. However, our approach is per-symbol read access-optimal. Furthermore, our construction only requires the existence of a subgroup in the automorphism group of a rational function field, which allows for more flexible and broader achievable parameters compared to those in \cite{kong2024locally}.
    \end{enumerate}
\end{remark}

Here, we present an explicit construction of an access-optimal MDS merge-convertible code using a cyclic subgroup of order dividing $q+1$ in Aut$(F/\mathbb{F}_q)$. This construction explains how all rational places can be utilized to construct convertible codes. 

\begin{corollary} \label{poleva}
    
    Let $l, l^{'}, k$ be positive integers such that $l~|~q+1$, $k \leq \frac{q+1}{l}-1 $, $ t \in [l] $, $l \leq \text{min~}\{ k , l^{'} \} $ and $ k + l^{'} \leq q+1$. Define $n_I = k+l^{'}$, $k_I = k $, $n_F= kt + l $ and $k_F=kt$. Then there exists an access-optimal $(n_I, k_I; n_F,k_F )$  $\mathrm{MDS}$ merge-convertible code over $\mathbb{F}_q$.  

\end{corollary}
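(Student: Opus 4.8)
The plan is to realize Corollary \ref{poleva} as an instance of Theorem \ref{mdsthm} with $\mathcal{G}$ chosen to be a cyclic subgroup of $\mathrm{PGL}_2(q)$ of order $l$, where $l \mid q+1$. By Lemma \ref{subgps}, such a subgroup exists, so the hypothesis of Theorem \ref{mdsthm} on the existence of $\mathcal{G}$ is satisfied. The numerical hypotheses $k \le \frac{q+1}{l}-1$, $t \in [l]$, $l \le \min\{k,l'\}$ and $k+l' \le q+1$ translate directly into the conditions $k \in [m]$, $t \in [l]$, $k+l' \le q+1$, $l \le \min\{k,l'\}$ of Theorem \ref{mdsthm}, provided we can take $m$ as large as $\frac{q+1}{l}-1$. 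So the first step is to verify that for a cyclic group of order $l$ dividing $q+1$, one may choose $m$ with $m+1 = \lfloor \frac{q-2l+3}{l}\rfloor$ at least $\frac{q+1}{l}-1$ — but in fact the sharper point is that such a $\mathcal{G}$ has \emph{no} ramified rational places at all, so all $q+1$ rational places of $F$ split completely and we get $m+1 \le \frac{q+1}{l}$, i.e. $m \le \frac{q+1}{l}-1$, which is exactly the range allowed by the statement.

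The key step is therefore the ramification analysis. A cyclic subgroup $\mathcal{G} \le \mathrm{PGL}_2(q)$ of order $l \mid q+1$ is generated by (the class of) a matrix whose eigenvalues lie in $\mathbb{F}_{q^2}\setminus\mathbb{F}_q$ and are conjugate over $\mathbb{F}_q$ (a ``non-split torus'' element). Over $\overline{\mathbb{F}_q}$ such an element is diagonalizable with two fixed points on the projective line, but these two fixed points form a conjugate pair in $\mathbb{P}^1(\mathbb{F}_{q^2})$ and hence are \emph{not} $\mathbb{F}_q$-rational. Concretely, $F^{\mathcal{G}} = \mathbb{F}_q(x)^{\mathcal{G}}$ is again rational, say $\mathbb{F}_q(y)$ with $y$ an explicit degree-$l$ rational function of $x$ (for instance, up to coordinates, the norm-type map associated to the torus), and the only places of $\mathbb{F}_q(y)$ that ramify in $\mathbb{F}_q(x)$ are the two places coming from the conjugate fixed points, which have residue degree $2$ over $\mathbb{F}_q$ and are thus not rational. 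Consequently every one of the $q+1$ rational places of $F$ lies over a rational place of $F^{\mathcal{G}}$ that splits completely, giving $\frac{q+1}{l}$ completely-split rational places of $F^{\mathcal{G}}$, all of whose fibers are rational. I would either cite this structure from a standard reference on subgroups of $\mathrm{PGL}_2$ acting on the rational function field (e.g. \cite{stichtenoth2009algebraic} together with \cite{de2010rank} / \cite{leemans2009polytopes}), or give the short direct eigenvalue argument.

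Granting the ramification fact, I run the construction of Theorem \ref{mdsthm} verbatim, with one modification flagged in Remark \ref{remmds}(ii): since all $q+1$ rational places are available and there is no ``free'' unused orbit from which to pick $P_\infty$, I instead pick any rational place $P_{\infty}$ of $F$ to play the role it played in the proof — its orbit under $\mathcal{G}$ provides a genuine complete orbit of size $l$ — and then use the \emph{modified} algebraic geometry code construction \eqref{alcod4} at the places where the chosen divisor ($(k-1)P_\infty$, etc.) has support, so that all places including those in the orbit of $P_\infty$ can serve as evaluation points. The rest of the argument — defining $\mathcal{A}_j$, $\mathcal{B}$, $\mathcal{B}'$ as orbit representatives, the transformation map $T(f_1,\dots,f_t) = \sum_j g_j^{k-1} h_{\mathcal{A}\setminus\mathcal{A}_j}\sigma_j(f_j)$, the scaling vector $u$, the verification that the $\mathcal{A}_j$-symbols are unchanged (via $\sigma_j(f)(\sigma_j(P)) = f(P)$), that the $\mathcal{B}$-symbols are per-symbol recoverable from $t$ initial symbols, and the MDS/minimum-distance count $T \in \mathcal{L}((kt-1)P_\infty)$ — carries over word for word from the proof of Theorem \ref{mdsthm}. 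Access-optimality then follows from Theorem \ref{mdsbou} (equivalently \cite[Theorem 1]{ge2024mds}), and per-symbol read access-optimality from Remark \ref{perred}.

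The main obstacle is the ramification claim: one must be sure that a cyclic subgroup of order $l \mid q+1$ really acts with \emph{no} rational ramification points, so that the count of completely-split rational places of $F^{\mathcal{G}}$ reaches $\frac{q+1}{l}$ rather than the weaker $\lfloor\frac{q-2l+3}{l}\rfloor$ coming from the generic Hurwitz estimate in Theorem \ref{mdsthm}; this is what lets $k$ range up to $\frac{q+1}{l}-1$. The secondary technical point is making the ``use all places'' modification via \eqref{alcod4} precise — choosing uniformizers at the support places and checking the resulting code still has the claimed dimension and distance — but this is routine once the framework of \eqref{alcod3}–\eqref{alcod4} is in hand.
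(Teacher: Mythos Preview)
Your proposal is correct and follows essentially the same approach as the paper: choose a cyclic subgroup of order $l\mid q+1$ (the non-split torus), show that no rational place of $F$ is ramified in $F/F^{\mathcal{G}}$ so that all $q+1$ rational places partition into $\tfrac{q+1}{l}$ complete orbits, and then run the construction of Theorem~\ref{mdsthm} using the modified evaluation \eqref{alcod4} at the orbit containing $P_\infty$. The only stylistic difference is that the paper proves the ramification fact by explicitly constructing a generator $\eta$ of the full order-$(q+1)$ cyclic group from a primitive quadratic and applying Hurwitz to locate the unique (degree-$2$) ramified place, whereas you invoke the eigenvalue/fixed-point description of the non-split torus; both arguments yield the same conclusion, and the paper then spends some care verifying that the uniformizer-modified evaluations at $P_{0,1}$ and $P_{0,j}$ still express each written symbol as a linear combination of exactly $t$ initial symbols---a step you label routine but should not skip.
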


\begin{proof}
    
 We prove the result for the specific case $k = \frac{q+1}{l} - 1 \text{~and~}t = l$,
as this choice yields the maximum code length of $q+1$ over the finite field $\mathbb{F}_q$. Let $F = \mathbb{F}_q(x)$
be the rational function field, and let $f(x) = x^2 + ax + b \in \mathbb{F}_q[x]~$be a primitive polynomial of order $q^2 - 1$. Define the automorphism $\eta$ of $F$ by $\eta(x) = \frac{1}{-bx - a}$.
According to \cite[Lemma V.1]{8865641}, $\langle \eta \rangle$ is a cyclic subgroup of Aut$(F/\mathbb{F}_q)$ of order $q+1$, and there exists a unique place $P$ of degree 2 in $F$ that is totally ramified in the extension $F/F^{\langle \eta \rangle}$. Applying the Hurwitz genus formula \eqref{hurw}, we obtain
\[
2q = 2|\langle \eta \rangle| - 2 = \deg \operatorname{Diff}(F/F^{\langle \eta \rangle}).
\]
Since $P$ is totally ramified, we have
\[
d_P(F/F^{<\eta>}) \cdot \deg P = \bigl[e_P(F/F^{<\eta>}) - 1\bigr] \deg P = 2q = \deg \operatorname{Diff}(F/F^{\langle \eta \rangle}).
\]
This shows that $P$ is the only ramified place in the extension $F/F^{\langle \eta \rangle}$.

Consequently, every rational place of $F$ lies above a completely split rational place in the extension $F/F^{\langle \eta \rangle}$, and we conclude that all $q+1$ rational places of $F$ form a single cyclic orbit under the action of $\eta$. Consider the automorphism $\sigma \triangleq \eta^{\frac{q+1}{l}}$, it is clear that $\langle \sigma \rangle$ form a subgroup of Aut$(F/\mathbb{F}_q)$ with order $l$. We can rename all the rational places in $F$ as $P_{i,j}$ for $ i \in [0,k]$ and $ j \in [l]$ such that $\sigma^s\eta^t (P_{0,1})=P_{t,s+1}$ for $ t \in [0,k]$ and $ s \in [0,l-1]$. Formally, let $\sigma_j$ in the proof of Theorem \ref{mdsthm} be defined as $\sigma^{j-1}$ for $j \in [l]$, and choose $P_{0,1} = P_{\infty}$ as in the theorem. Then, we define both the initial and final codes as stated in the proof of Theorem \ref{mdsthm}. It remains to show that by suitably modifying the definitions, both the initial and final codes can be evaluated at $ P_{0,1} $ without affecting the MDS property, while also ensuring that each redundancy symbol in the final code can still be obtained by reading exactly one redundancy symbol from every initial code.

First, we only need to explain how the initial code is evaluated at ${P_{0,1}}$, here we adopt the modified algebraic geometry code described in \eqref{alcod4}, where we fix a uniformizer $\pi_{P_{0,1}}$ for ${P_{0,1}}$ and evaluate at ${P_{0,1}}$ by computing $\pi^{k-1}_{P_{0,1}}f(P_{0,1})$. Next, when evaluating at the point ${P_{0,1}}$ in the final code, we compute $\pi^{kt-1}_{P_{0,1}}T(f_1,f_2,\cdots,f_t)(P_{0,1})$. Since $T(f_1,f_2,\cdots,f_t) $ belongs to the space $\mathcal{L}((kt-1)P_{0,1})$, the definition is well-posed and consistent. Also, by considering how many zeros a codeword can have, as shown in the proof of Theorem \ref{mdsthm}, the modified codes are still MDS codes.

Next, we have 
    \begin{align*}
         \pi^{kt-1}_{P_{0,1}}u_{P_{0,1}}T(f_1,f_2,\cdots, f_{t}) (P_{0,1}) 
         &= \pi^{kt-1}_{P_{0,1}}[ h_{\mathcal{A} \backslash \mathcal{A}_1} \sigma_1(f_1)] (P_{0,1}) + [\pi^{kt-1}_{P_{0,1}} \sum^{t}_{j^{'}=2} g^{k-1}_{j^{'}}  h_{\mathcal{A} \backslash \mathcal{A}_{j^{'}}} \sigma_{j^{'}}(f_{j^{'}})] (P_{0,1}) \\
          &=[ \pi^{kt-1}_{P_{0,1}} h_{\mathcal{A} \backslash \mathcal{A}_1} f_1] (P_{0,1}) + \sum^{t}_{j^{'}=2}  [  \pi^{kt-k}_{P_{0,1}}  h_{\mathcal{A} \backslash \mathcal{A}_{j^{'}}}](P_{0,1}) [ \pi^{ k-1 }_{P_{0,1}} g^{k-1}_{j^{'}} \sigma_{j^{'}}(f_{j^{'}})] (P_{0,1})   \\
          &= [ \pi^{kt-1}_{P_{0,1}} h_{\mathcal{A} \backslash \mathcal{A}_1} f_1] (P_{0,1}) + \sum^{t}_{j^{'}=2}  [  \pi^{kt-k}_{P_{0,1}}  h_{\mathcal{A} \backslash \mathcal{A}_{j^{'}}}](P_{0,1}) [ \pi^{ k-1 }_{P_{0,1}} g^{k-1}_{j^{'}}](P_{0,1})  f_{j^{'}} ( \sigma^{-1}_{j^{'}} (P_{0,1})) . \\
     \end{align*}
     Since $v_{P_{0,1}}(\pi^{kt-1}_{P_{0,1}} h_{\mathcal{A}\backslash \mathcal{A}_1} f_1) \geq 0$, the function can be evaluated at $P_{0,1}$. Moreover, if $v_{P_{0,1}}(\pi^{kt-1}_{P_{0,1}} h_{\mathcal{A} \backslash \mathcal{A}_1} f_1) = 0$, it is clear that the expression $[ \pi^{kt-1}_{P_{0,1}} h_{\mathcal{A} \backslash \mathcal{A}_1} f_1] (P_{0,1})$ is a nonzero scalar multiple of $\pi^{ k-1}_{P_{0,1}} f_1 (P_{0,1})$ and this scalar is independent of the choice of $f_1$. Furthermore, since the product $[  \pi^{kt-k}_{P_{0,1}}  h_{\mathcal{A} \backslash \mathcal{A}_{j^{'}}}](P_{0,1}) [ \pi^{ k-1 }_{P_{0,1}} g^{k-1}_{j^{'}}](P_{0,1}) $ is also a nonzero scalar (the valuations $v_{P_{0,1}}$ of both functions are zero), it follows that $\pi^{kt-1}_{P_{0,1}}u_{P_{0,1}}T(f_1,f_2,\cdots, f_{t}) (P_{0,1})$ is indeed a linear combination of \\
       $\pi^{ k-1}_{P_{0,1}} f_1 (P_{0,1}), f_{2} ( \sigma^{-1}_{2} (P_{0,1})),\cdots, f_{t} ( \sigma^{-1}_{t} (P_{0,1}))  $.
     
    Also, for $j \in [2,t] $ we have
    \begin{align*}
         u_{P_{0,j}}T(f_1,f_2,\cdots, f_{t}) (P_{0,j}) &= [ \sum^{t}_{j^{'}=1} g^{k-1}_{j^{'}}  h_{\mathcal{A} \backslash \mathcal{A}_{j^{'}}} \sigma_{j^{'}}(f_{j^{'}})] (P_{0,j}) \\
          &= \sum^{t}_{j^{'}=1, j^{'}\neq j}[  g^{k-1}_{j^{'}}  h_{\mathcal{A} \backslash \mathcal{A}_{j^{'}}}](P_{0,j}) [\sigma_{j^{'}}(f_{j^{'}})] (P_{0,j})   
          +   [  g^{k-1}_{j}  h_{\mathcal{A} \backslash \mathcal{A}_j}](P_{0,j}) [\sigma_{j}(f_{j})] (P_{0,j})   \\
          &=   \sum^{t}_{j^{'}=1, j^{'}\neq j}[  g^{k-1}_{j^{'}}  h_{\mathcal{A} \backslash \mathcal{A}_{j^{'}}}](P_{0,j}) f_{j^{'}} (\sigma^{-1}_{j^{'}}(P_{0,j}))  
          +    h_{\mathcal{A} \backslash \mathcal{A}_j}(P_{0,j}) [ g^{k-1}_{j} \sigma_{j}(f_{j})] (P_{0,j}). 
     \end{align*}
    
    Similarly, it can be seen that $u_{P_{0,j}}T(f_1,f_2,\cdots, f_{t}) (P_{0,j})$ is a linear combination of  $f_{j^{'}}(\sigma^{-1}_{j^{'}}(P_{0,j}))$ for $j^{'} \in  \{1,2,\cdots,t\}\backslash \{j\}$ and $ [g^{k-1}_{j} \sigma_{j}(f_{j})] (P_{0,j})$. Since $v_{P_{0,j}}(g^{k-1}_{j} \sigma_{j}(f_{j})) \geq 0$, the function can be evaluated at $P_{0,j}$. Moreover, if $v_{P_{0,j}}(g^{k-1}_{j} \sigma_{j}(f_{j})) = 0$, the resulting expression is a nonzero scalar multiple of $\pi^{ k-1}_{P_{0,1}} f_j (P_{0,1})$ and this scalar is independent of the choice of $f_j$. We conclude  that $u_{P_{0,j}}T(f_1,f_2,\cdots, f_{t}) (P_{0,j})$ is a linear combination of  $f_{j^{'}}(\sigma^{-1}_{j^{'}}(P_{0,j}))$ for $j^{'} \in  \{1,2,\cdots,t\}\backslash \{j\}$ and $\pi^{ k-1}_{P_{0,1}} f_j (P_{0,1}) $.
    
    The rest of the proof is similar to Theorem \ref{mdsthm}.
 \end{proof}

We give an example here to illustrate our construction. By the way, we explain that the parameters of initial codes need not to be the same. 
\begin{example} \label{mdsexa}
    For $q=23$, we provide an explicit construction of an access-optimal MDS generalized merge-convertible code.
    First of all, we choose a quadratic primitive polynomial $x^2-2x+5=0$ in $\mathbb{F}_{23}[x]$, let $\eta$ be the automorphism of $\mathbb{F}_{23}$ which sends $x$ to $\frac{1}{-2x+5}$, then $ \langle \eta \rangle$ forms a group of order 24, which acts cyclically on rational places of $\mathbb{F}_{23}$. We choose subgroup $\mathcal{G}=\langle \eta^6 \rangle$ of $\langle \eta \rangle$, which is a cyclic group of order 4. Consider the field extension $\mathbb{F}_{23} (x)/\mathbb{F}_{23} (x)^{\mathcal{G}}$, 
    let
    \[z = x+\sigma^6(x)
    +\sigma^{12}(x)+\sigma^{18}(x) = x + \frac{3x-1}{5x+1}+\frac{4x-4}{-3x-4} + \frac{-8x+15}{-6x-1} = \frac{x^4+8x^2+4x+7}{x^3+4x^2+11x+21}. \]
    Clearly, $z$ is fixed by $\mathcal{G}$, thus we have $\mathbb{F}_{23}(z) \subset \mathbb{F}_{23} (x)^{\mathcal{G}} $. Also, since $[\mathbb{F}_{23}(x):\mathbb{F}_{23}(z)]= \text{~max~} \{ \text{deg~} (x^4+8x^2+4x+7), \text{~deg~}(x^3+4x^2+11x+21) \} = 4$, 
     we have $\mathbb{F}_{23} (x)^{\mathcal{G}}=\mathbb{F}_{23}(z)$. 
     
     Six rational places of $\mathbb{F}_q (z)$, that is, the infinite place ${\infty}$ of $z$ and the zero places of ${z-12, z-3, z-10, z-17, z -8}$, completely split into $\{ P_{\infty}, P_{9}, P_{14}, P_{19} \}$, $\{ P_{20}, P_{5}, P_{6}, P_{4} \}$, $\{ P_{2}, P_{16}, P_{18}, P_{13} \}$, $\{ P_{21}, P_{7}, P_{17}, P_{11} \}$, $\{ P_{12}, P_{3}, P_{15}, P_{10} \}$,\\ $\{ P_{0}, P_{8}, P_{1}, P_{22} \}$ in $\mathbb{F}_{23} (x)$ respectively. 
     
     Let $\mathcal{B} = \{ P_{\infty}, P_{9}, P_{14}, P_{19}\}$, $\mathcal{A}_1 = \{ P_{20}, P_{2}, P_{21}, P_{12}, P_{0} \}$, $\mathcal{A}_2 = \{ P_{5}, P_{16}, P_{7}, P_{3}, P_{8} \}$, \\
     $\mathcal{A}_3 = \{ P_{6}, P_{18}, P_{17}, P_{15}, P_{1} \}$, $\mathcal{A}_4 = \{ P_{4}, P_{13}, P_{11}, P_{10}, P_{22} \}$, we define the initial code  $\mathcal{C}^I$ as:
     \[
     \mathcal{C}^{I_i}=\{ \big(f(P)\big)_{P \in \mathcal{A}_1 \cup \mathcal{B}} : f\in \mathbb{F}^{ < 5}_{23}[x] \} \text{~for~} i \in [1,3],
     \]
     \[
     \mathcal{C}^{I_4}=\{ \big(f(P)\big)_{P \in (\mathcal{A}_1 \backslash P_{20})  \cup \mathcal{B}}  : f\in \mathbb{F}^{ < 4}_{23}[x] \},
     \]
     define the transformation map $T$ from $\mathbb{F}^{ < 5}_{23}[x] \times \mathbb{F}^{ < 5}_{23}[x] \times \mathbb{F}^{ < 5}_{23}[x] \times \mathbb{F}^{ < 4}_{23}[x] $ to $\mathbb{F}_{23} [x]$ as:
     \[(f_1, f_2, f_3 ,f_4) \longmapsto h_{ \mathcal{A} \backslash (A_1 \cup P_4) }f_1 +  h_{ \mathcal{A} \backslash (\mathcal{A}_2 \cup P_4) } (5x+1)^4 \eta^{6}(f_2) + h_{ \mathcal{A} \backslash (\mathcal{A}_3 \cup P_4) } (-3x-4)^4 \eta^{12}(f_3) + h_{ \mathcal{A} \backslash \mathcal{A}_4 } (-6x-1)^4 \eta^{18}(f_4) \]
     and define the coefficient vector $ u = (u_P)_{P \in \mathcal{A}_1\cup \mathcal{A}_2 \cup  \mathcal{A}_{3} \cup  \mathcal{A}_{4} \cup \mathcal{B}}  $ as follow:
     \[
     u_{P}=
     \begin{cases}
      [h^{-1}_{ \mathcal{A} \backslash (\mathcal{A}_1 \cup P_4) }](P), & \text{if~} P \in \mathcal{A}_1; \\
      [\frac{1}{(5x+1)^4} h^{-1}_{ \mathcal{A} \backslash (\mathcal{A}_2 \cup P_4) } ](P), & \text{if~} P \in \mathcal{A}_2; \\
      [ \frac{1}{(-3x-4)^4} h^{-1}_{ \mathcal{A} \backslash (\mathcal{A}_3 \cup P_4) } ](P), & \text{if~} P \in \mathcal{A}_3; \\
      [ \frac{1}{(-6x-1)^4} h^{-1}_{ \mathcal{A} \backslash \mathcal{A}_4 } ](P) , & \text{if~} P \in \mathcal{A}_4 \backslash P_4; \\
      1, & \text{if~} P \in \mathcal{B}.
     \end{cases}
     \]     
     then define the final code  $\mathcal{C}^F$ as:
     \[
     \mathcal{C}^F = \{ \big( u_P T(f_1, f_2, f_3, f_4) (P) \big)_{P \in  \mathcal{A}_1\cup \mathcal{A}_2 \cup  \mathcal{A}_{3} \cup  (\mathcal{A}_{4}\backslash P_4) \cup \mathcal{B}}  :f_1, f_2, f_3 \in \mathbb{F}^{ < 5}_{23}[x], f_4 \in \mathbb{F}^{ < 4}_{23}[x] \}.
     \]
     
     It can be easily verified that $(\mathcal{C}^{I_1}, \mathcal{C}^{I_2}, \mathcal{C}^{I_3}, \mathcal{C}^{I_4}; \mathcal{C}^{F})$ and the surjective map defined by $T$ and $u$ form an access-optimal MDS generalized merge-convertible code.
     Specifically, if we fix a uniformizer $\pi_{p_{\infty}}$ at place $P_{\infty}$, we see that $u_P T(f_1, f_2, f_3, f_4) (P_\infty)$ is a linear combination of $\pi_{p_{\infty}}^4 f_1(P_\infty)$, $f_2(P_{19})$, $f_3(P_{14})$ and $f_4(P_{9})$; $u_P T(f_1, f_2, f_3, f_4) (P_9)$ is a linear combination of $f_1(P_9)$, $\pi_{p_{\infty}}^4 f_2(P_{\infty})$, $f_3(P_{19})$ and $f_4(P_{14})$; $u_P T(f_1, f_2, f_3, f_4) (P_{14})$ is a linear combination of $f_1(P_{14})$, $f_2(P_{9})$, $ \pi_{p_{\infty}}^4 f_3(P_{\infty})$ and $f_4(P_{19})$; $u_P T(f_1, f_2, f_3, f_4) (P_{19})$ is a linear combination of $f_1(P_{19})$, $f_2(P_{14})$, $f_3(P_{9})$ and $\pi_{p_{\infty}}^3 f_4(P_{\infty})$.
    Hence, the write access cost is $4$, read access cost for each redundancy symbol is $4$, total real access cost equal to $16$, this is an access-optimal and per-symbol read access-optimal MDS generalized merge-convertible code.
    
\end{example}

\section{\texorpdfstring{Construction of Access-Optimal  Generalized Merge-Convertible Code Between $(r,\delta)$-LRCs }{Construction of Access-Optimal Generalized Merge-Convertible Code between (r,δ) -LRCs}}\label{sec5}

In this section, we focus on the construction of access-optimal $(t,1)_q$ $(r,\delta)$ generalized merge-convertible codes, which extend the framework introduced in Section \ref{sec4}. We assume that all initial codes have identical parameters for simplicity. We use the superscript $I$ to denote the initial code and the superscript $F$ to represent the final code throughout this section. We will later clarify that the parameters of the initial codes may differ in Remark \ref{rdelrem}.

We first present the main theorem of this section.
\begin{theorem} \label{rdelthm}
    Assume that there exists a subgroup $\mathcal{G}$ of $\mathrm{PGL}_2(q)$ of order $(r+\delta-1)l$, and $\mathcal{G}$ has a subgroup $\mathcal{H}$ of order $r+\delta -1$. Let $m$ be a positive integer satisfying $m+1 \leq \lfloor \frac{q-2((r+\delta-1)l)+3}{(r+\delta-1)l}  \rfloor $. For some positive integers $k, t, l^{'}$ satisfying $k \in [m] $, $ t \in [l] $, $(k+l^{'})(r+\delta-1) \leq q+3-2(r+\delta-1)$ and $l \leq \text{min~}\{k, l^{'} \}$, define $n_I= (k+l^{'})(r+\delta-1)$, $k_I=kr$, $n_F = (kt+l)(r+\delta-1)$ and $k_F=tk r$. Then there exists an access-optimal  $(n_I, k_I; n_F, k_F)$  $(r,\delta)$ generalized merge-convertible code with read access cost $t r l$ and write access cost $l(r+\delta-1)$.
    
\end{theorem}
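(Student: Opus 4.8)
The strategy is to transplant the construction of Theorem~\ref{mdsthm} to the locally repairable setting: the larger subgroup $\mathcal{G}$ organizes the conversion (as before), while the subgroup $\mathcal{H}$ installs the $(r,\delta)$-locality by replacing single evaluation points with $\mathcal{H}$-orbits, which become the local recovery sets. Let $F=\mathbb{F}_q(x)$ and consider the tower $F^{\mathcal{G}}\subseteq F^{\mathcal{H}}\subseteq F$; by Galois theory $\operatorname{Gal}(F/F^{\mathcal{H}})=\mathcal{H}$ and $\operatorname{Gal}(F/F^{\mathcal{G}})=\mathcal{G}$, and by Lüroth's theorem $F^{\mathcal{H}}=\mathbb{F}_q(y)$ is again rational. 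Exactly as in the proof of Theorem~\ref{mdsthm}, the Hurwitz genus formula~\eqref{hurw} applied to $F/F^{\mathcal{G}}$ shows that at most $2(r+\delta-1)l-2$ rational places of $F$ ramify over $F^{\mathcal{G}}$, so at least $q+3-2(r+\delta-1)l$ rational places lie over completely split places of $F^{\mathcal{G}}$; the hypothesis on $m$ then supplies $m+1$ complete $\mathcal{G}$-orbits together with a spare place $P_\infty$. For a completely split place $Q$ of $F^{\mathcal{G}}$, the $(r+\delta-1)l$ places of $F$ above it are partitioned by the unramified extension $F/F^{\mathcal{H}}$ into $l$ fibers of size $r+\delta-1$, each fiber being an $\mathcal{H}$-orbit; these are the local groups.

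Next I would fix the two ``Tamo--Barg type'' evaluation spaces on the rational function field adapted to $P_\infty$ and to the $\mathcal{H}$-orbit structure: a space $V\subseteq F$ of dimension $kr$ for the initial codes and a space $V_F\subseteq F$ of dimension $tkr$ for the final code, each consisting of functions with pole only at $P_\infty$, built from degree-$<r$ combinations in $F/F^{\mathcal{H}}$ with $\mathbb{F}_q(y)$-coefficients of bounded pole order (in the MDS proof these degenerate to $\mathbb{F}_q^{<k}[x]$ and $\mathbb{F}_q^{<tk}[x]$; when $\mathcal{H}$ fixes no rational place one uses modified algebraic geometry codes at $P_\infty$ as in Corollary~\ref{poleva}). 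Picking $k+1$ completely split places $P_0,\dots,P_k$ of $F^{\mathcal{G}}$, I would let $\mathcal{A}_1$ be one $\mathcal{H}$-orbit over each of $P_1,\dots,P_k$ ($k$ local groups), $\mathcal{B}$ the whole $\mathcal{G}$-orbit over $P_0$ ($l$ local groups), and $\mathcal{B}'$ a further $l'-l$ disjoint local groups, and set $\mathcal{C}^{I}=\{(f(P))_{P\in\mathcal{A}_1\cup\mathcal{B}\cup\mathcal{B}'}:f\in V\}$; restriction of $V$ to a single $\mathcal{H}$-orbit is an $[r+\delta-1,r,\delta]$ Reed--Solomon code, and a standard argument shows $\mathcal{C}^I$ is an optimal $[n_I,kr;(r,\delta)]$-LRC meeting~\eqref{Sbound2}. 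With $\sigma_1=\mathrm{id},\sigma_2,\dots,\sigma_l\in\mathcal{G}$ the coset representatives of $\mathcal{H}$ carrying the first $\mathcal{H}$-orbit over each $P_i$ to the $j$-th, put $\mathcal{A}_j=\sigma_j(\mathcal{A}_1)$, $\mathcal{A}=\mathcal{A}_1\cup\dots\cup\mathcal{A}_t$, and define
\[
T(f_1,\dots,f_t)=\sum_{j=1}^{t} g_j^{\,m}\,h_{\mathcal{A}\setminus\mathcal{A}_j}\,\sigma_j(f_j),
\]
where $h_{\mathcal{A}\setminus\mathcal{A}_j}$ annihilates $\mathcal{A}\setminus\mathcal{A}_j$, $g_j$ is a fixed function with $(g_j)=\sigma_j(P_\infty)-P_\infty$, and $m$ is the largest pole order at $P_\infty$ of a function in $V$; together with a coefficient vector $u=(u_P)_{P\in\mathcal{A}\cup\mathcal{B}}$ normalizing the values on $\mathcal{A}$, set $\mathcal{C}^F=\{(u_P\,T(f_1,\dots,f_t)(P))_{P\in\mathcal{A}\cup\mathcal{B}}:f_j\in V\}$.

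The verification splits into four parts. First, $T$ is well defined because each summand has $P_\infty$ as its only pole, the key point being that $\mathcal{A}\setminus\mathcal{A}_j$ is a union of $\mathcal{H}$-orbits, hence a union of level sets of $y$, so $h_{\mathcal{A}\setminus\mathcal{A}_j}$ is essentially a polynomial in $y$; combined with the fact that $\sigma_j$ permutes $F^{\mathcal{H}}=\mathbb{F}_q(y)$, this forces $g_j^{\,m}h_{\mathcal{A}\setminus\mathcal{A}_j}\sigma_j(f_j)\in V_F$, so $T:V^t\to V_F$ is a linear isomorphism. Second, the transport identity $\sigma_j(f)(\sigma_j(P))=f(P)$ gives $u_P\,T(f_1,\dots,f_t)(P)=f_j(\sigma_j^{-1}(P))$ for $P\in\mathcal{A}_j$, so all $tk(r+\delta-1)$ symbols indexed by $\mathcal{A}$ are unchanged, $\mathcal{W}=\mathcal{B}$, and the write cost is $|\mathcal{B}|=l(r+\delta-1)$. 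Third, $\mathcal{C}^F$ is monomially equivalent to $\mathcal{C}_{\mathcal{L}}(\mathcal{A}\cup\mathcal{B},V_F)$, whose local groups are the $tk+l$ $\mathcal{H}$-orbits of $\mathcal{A}\cup\mathcal{B}$ and whose dimension is $tkr$ (the message is recoverable from the unchanged part $\mathcal{A}$), so it is an optimal $(r,\delta)$-LRC meeting~\eqref{Sbound2}. Fourth, for $P\in\mathcal{B}$ the symbol $u_P\,T(f_1,\dots,f_t)(P)$ is a fixed linear combination of $f_1(\sigma_1^{-1}(P)),\dots,f_t(\sigma_t^{-1}(P))$; within each of the $l$ $\mathcal{H}$-orbits of $\mathcal{B}$ it suffices to recompute $r$ symbols directly and restore the remaining $\delta-1$ from the MDS parity of $\mathcal{C}^F$ on that orbit, and since $\sigma_j\in\mathcal{G}$ gives $\sigma_j^{-1}(\mathcal{B})=\mathcal{B}$ with distinct orbits mapping to distinct orbits, these reads use exactly $r$ distinct symbols per orbit per initial code, i.e.\ $rl$ symbols from each $\mathcal{C}^{I_i}$, for a total read cost $trl$. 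Matching $l(r+\delta-1)$ and $trl$ against Corollary~\ref{rdelcor} (with $k_i=k$, $\ell=l$, using $l\le\min\{k,l'\}$ to check $d_F\le n_{I_i}-k_{I_i}+1$) then yields access-optimality. I expect the main obstacle to be the third part: proving that $\mathcal{C}_{\mathcal{L}}(\mathcal{A}\cup\mathcal{B},V_F)$ is genuinely an \emph{optimal} $(r,\delta)$-LRC and that $T$ really surjects onto the prescribed $V_F$, since this is exactly where the interaction between the automorphisms $\sigma_j$ and the $\mathcal{H}$-orbit (i.e.\ $y$-level-set) structure must be pinned down; this is cleanest when $\mathcal{H}$ is normal in $\mathcal{G}$, which holds for the affine and dihedral subgroups used in the explicit examples. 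A secondary technicality is the case where $\mathcal{H}$ fixes no rational place, handled via the modified algebraic geometry codes of Corollary~\ref{poleva}.
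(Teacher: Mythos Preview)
Your overall architecture matches the paper's: the tower $F^{\mathcal{G}}\subseteq F^{\mathcal{H}}\subseteq F$, the Hurwitz count of completely split $\mathcal{G}$-orbits, a Tamo--Barg-type evaluation space $\mathrm{span}\{x^iy^j\}$, and transport via coset representatives $\sigma_j$ with pole-shifting factors and a normalizing vector $u$. The gap is in the transformation map.

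You assert that the functions in $V$ have ``pole only at $P_\infty$'', but the space $\mathrm{span}\{x^i y^j\}$ has this property only when $P_\infty$ is \emph{totally ramified} in $F/F^{\mathcal{H}}$ (equivalently, $\mathcal{H}$ fixes $P_\infty$). In general, $y\in F^{\mathcal{H}}$ has pole divisor $Q_\infty$ in $F^{\mathcal{H}}$, so in $F$ it has poles along the entire conorm $\mathrm{Con}_{F/F^{\mathcal{H}}}(Q_\infty)$, i.e.\ at \emph{every} place of $F$ above $Q_\infty$. Hence $\sigma_j(f_j)$ has poles supported on $\sigma_j\bigl(\mathrm{Con}_{F/F^{\mathcal{H}}}(Q_\infty)\bigr)$, not just on $\sigma_j(P_\infty)$, and your single factor $g_j^{m}$ with $(g_j)=\sigma_j(P_\infty)-P_\infty$ cancels only one of these poles; the summand $g_j^{m}\, h_{\mathcal{A}\setminus\mathcal{A}_j}\,\sigma_j(f_j)$ need not be regular on $\mathcal{A}\cup\mathcal{B}$ and does not land in $V_F$. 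The paper repairs this by using \emph{two} correction factors,
\[
T(f_1,\dots,f_t)=\sum_{j=1}^{t} g_{j,1}^{\,r-1}\,g_{j,2}^{\,k-1}\,(h_{\mathcal{A}\setminus\mathcal{A}_j}\!\circ y)\,\sigma_j(f_j),
\]
where $(g_{j,1})=\sigma_j(P_\infty)-P_\infty$ shifts the $x$-type pole and $(g_{j,2})=\sigma_j\bigl(\mathrm{Con}_{F/F^{\mathcal{H}}}(Q_\infty)\bigr)-\mathrm{Con}_{F/F^{\mathcal{H}}}(Q_\infty)$ shifts the $y$-type pole divisor. Your single-factor map would only be correct when $\mathcal{H}$ has a rational fixed point that you can take as $P_\infty$; this is not assumed in the theorem and fails, for instance, for the cyclic group of order dividing $q+1$ in Corollary~\ref{poleva} (there the modified AG codes address evaluation \emph{at} $P_\infty$, not the multiplicity of pole locations).

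The remaining verification steps you outline---unchanged symbols via $\sigma_j(f)(\sigma_j(P))=f(P)$, the degree bound $T(\cdot)\in\mathcal{L}\bigl((r-1)P_\infty+(kt-1)\,\mathrm{Con}_{F/F^{\mathcal{H}}}(Q_\infty)\bigr)$ giving Singleton-type optimality, and the $trl$ read cost via locality on $\mathcal{B}$---are the same as in the paper. One minor point: the paper obtains the read-cost reduction from $tl(r+\delta-1)$ to $trl$ by invoking the locality of the \emph{initial} code on each $\mathcal{H}$-orbit of $\mathcal{B}$, which avoids your appeal to ``distinct orbits mapping to distinct orbits'' under $\sigma_j^{-1}$ (a claim that needs $\mathcal{H}\trianglelefteq\mathcal{G}$).
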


\begin{proof}
    The proof is similar to Theorem \ref{mdsthm}. Let $F$ denote a rational function field over $\mathbb{F}_q$, and consider the field extension $F/F^{\mathcal{G}}$. Since $m+1 \leq \lfloor \frac{q-2((r+\delta-1)l)+3}{(r+\delta-1)l}  \rfloor $ and by \eqref{hurw},
    we can choose $k+1$ rational places $\{ P_0, P_1, P_2, \cdots, P_k \}$ in $F^{\mathcal{G}}$ that split completely in the extension $F/F^{\mathcal{G}}$, it follows that these $k+1$ rational places also split completely in the extension $F^{\mathcal{H}}/F^{\mathcal{G}}$. For $i \in [0,k]$, denote by $\{ P_{i,1}, P_{i,2}, \cdots, P_{i,l} \}$ the set of rational places in $F^{\mathcal{H}}$  lying above $P_i$, and for $ i\in[0,k]$, $ j \in [l]$ denote by $\{ P_{i,j,1}, P_{i,j,2}, \cdots, P_{i,j,r+\delta-1} \}$ the set of rational places in $F^\mathcal{G}$ lying above $P_{i,j}$. Since the Galois group acts transitively, after a suitable reordering of the places, for  $ j \in [l]$ we may denote by $\sigma_j$ the unique element in  $\mathcal{G} \backslash \mathcal{H}$ satisfying
    $\sigma_j(P_{i,1,s})=P_{i,j,s}$ for all $i \in [0,k]$ and $s \in [r+\delta-1]$. Moreover, we can select a rational place  $P_{\infty} \in F$ that does not lie above any of the $k+1$  places $\{ P_0, P_1, P_2, \cdots, P_k \}$ in $F^{\mathcal{G}}$. Since $F$ is a rational function field, there exists a function $x \in F$ with pole divisor $(x)_{\infty} = P_{\infty}$, so that $F=\mathbb{F}_q(x)$.  The intersection $   P_{\infty} \cap F^{\mathcal{H}}$ is a rational place of $F^{\mathcal{H}}$, denote it by $Q_{\infty} $. Since $F^{\mathcal{H}}$ is also  rational, there exists a function $z$ in $F^{\mathcal{H}}$ with $(z)_{\infty} = Q_{\infty}$, implying that $F^{\mathcal{H}}=\mathbb{F}_q(z)$.
    
    Let $\mathcal{B}_{0,j}=\{P_{0,j,1},P_{0,j,2},\cdots,P_{0,j,r+\delta-1 }\}$ for $j \in [l]$,  $\mathcal{A}_{i,j}=\{P_{i,j,1},P_{i,j,2}
    ,\cdots,P_{i,j,r+\delta-1 }\}$ for $i \in [k]$ and $j \in [l]$, $\mathcal{A}_j = \bigcup\limits_{1\leq i\leq k} A_{i,j}$,  $\mathcal{B} = \bigcup\limits_{1 \leq j \leq l} \mathcal{B}_{0,j} $ and $\mathcal{A}=\bigcup\limits_{1\leq j\leq t} \mathcal{A}_{j}$, we arbitrary choose $l^{'}-l$ evaluation blocks in $\mathcal{A}_{i,j}$ except from $\mathcal{A}_{1} $ as $\mathcal{B}^{'}$, and define the initial code as 
    \[
    \mathcal{C}^I=\{ \big(f(P)\big)_{P \in \mathcal{A}_1 \cup \mathcal{B} \cup \mathcal{B}^{'}} : f\in \text{ span }\{ x^iz^j  \} \text{~for~} i \in[0,r-1], j \in[0,k-1] \},
    \]
    it is clear that  $\mathcal{C}^I$ is an optimal $(n_I, k_I;(r,\delta))$-LRC. 
    
    From $t$ functions $f_1,f_2,\cdots, f_{t}$ in span $\{x^iz^j \}$, we now define the transformation map as 
    \[
    T(f_1,f_2,\cdots, f_{t})=   \sum^{t}_{j=1} g^{r-1}_{j,1} g^{k-1}_{j,2}  (h_{\mathcal{A} \backslash \mathcal{A}_j} \circ z) \sigma_j(f_j)
    \]
where $g_{j,1}$ is a fixed function with $(g_{j,1}) = \sigma_j(P_{\infty}) - P_{\infty} $, $g_{j,2}$ is a fixed function with $(g_{j,2}) = \sigma_j(\text{Con~}_{F/F^{\mathcal{H}}}(Q_{\infty})) - \text{Con~}_{F/F^{\mathcal{H}}}(Q_{\infty}) $ (see the definition in \cite[Definition 3.1.8]{stichtenoth2009algebraic}) and $(h_{\mathcal{A} \backslash \mathcal{A}_i} \circ z)(P^{'}) = \prod\limits_{\alpha \in \{ z(P) \text{~for~} P\in \mathcal{A} \backslash \mathcal{A}_i \} }(z(P^{'})-\alpha)$ for any rational place $P^{'}$ except for pole places of $z$. In particular, we choose $g_{1,1}=g_{1,2}=1$.
    Then we define the coefficient vector $ u = (u_P)_{P \in \mathcal{A}_1\cup \mathcal{A}_2 \cup  \cdots \cup  \mathcal{A}_{t} \cup \mathcal{B}}  $ as follow:
     \[
     u_{P}=
     \begin{cases}
      
      [ \frac{1}{g^{r-1}_{i,1}} \frac{1}{g^{k-1}_{i,2}} (h_{ \mathcal{A} \backslash \mathcal{A}_{i} } \circ z)^{-1} ](P) , & \text{if } P \in \mathcal{A}_{i} \text{~for~}i \in [t]; \\
      1, & \text{if } P \in \mathcal{B}.
     \end{cases}
     \]     
    then define the final code  $\mathcal{C}^F$ as:
     \[
     \mathcal{C}^F = \{ \big(u_P T(f_1, f_2, \cdots, f_{t}) (P)\big)_{P \in  \mathcal{A} \cup \mathcal{B}} : f_1, f_2, \cdots, f_{t} \in \text{span}\{ x^iz^j  \} \text{~for~} i \in[0,r-1], j \in[0,k-1] \}.
     \]
     
     In the following part, we show that this definition is well-defined.
     We need to valuate $T(f_1, f_2, \cdots, f_{t})$ at places in $ \mathcal{A} \cup \mathcal{B}$, however the only possible poles for $\sigma_j(f_j)$ are $\sigma_j(P_{\infty})$ and $\sigma_j(\text{Con~}_{F/F^{\mathcal{H}}}(Q_{\infty}))$, after multiply with $g^{r-1}_{j,1} g^{k-1}_{j,2}$, the only possible poles of $g^{r-1}_{j,1} g^{k-1}_{j,2}  (h_{\mathcal{A} \backslash \mathcal{A}_j} \circ z) \sigma_j(f_j)$ are $P_{\infty}$ and $\text{Con~}_{F/F^{\mathcal{H}}}(Q_{\infty})$.  Since $P_{\infty}$ lie over $Q_{\infty}$ and $P_{\infty}$ does not lie over ${P_0, P_1, \cdots, P_k}$, we can evaluate $T(f_1, f_2, \cdots, f_{t})$ at places in $ \mathcal{A} \cup \mathcal{B}$, the definition of $\mathcal{C}^F$ is well-defined.

     To illustrate the conversion procedure more clearly, we distinguish the initial codes in the proof. Specifically, we assume that $\mathcal{C}^{I_i}=\mathcal{C}^I=\{ (f_i(P))_{P \in \mathcal{A}_1 \cup \mathcal{B} \cup \mathcal{B}^{'}}: f_i\in \text{ span }\{ x^{i'}z^{j'}  \} \text{~for~} i' \in[0,r-1], j' \in[0,k-1] \}$ for $i \in [t]$. We now present the conversion map
 \[\phi: \mathcal{C}^{I_1}\times \mathcal{C}^{I_{2}}\times\cdots \times \mathcal{C}^{I_t} \longrightarrow \mathcal{C}^F\] defined by\[\phi\left(\big((f_1(P))_{P\in\mathcal{A}_1 \cup \mathcal{B} \cup \mathcal{B}^{'}}, \cdots,(f_t(P))_{P\in\mathcal{A}_1 \cup \mathcal{B} \cup \mathcal{B}^{'}}\big)\right)=\left(u_P T(f_1, f_2, \cdots, f_{t}) (P)\right)_{P \in  \mathcal{A} \cup \mathcal{B}}, \]
 for any $f_1,f_2,\cdots, f_{t} \in \text{ span } \{x^iz^j\} $.

     For $P_{i,j,k} \in \mathcal{A}$, 
     \begin{align*}
         u_{P_{i,j,k}}T(f_1,f_2,\cdots, f_{t}) (P_{i,j,k}) &=  [ \sum^{t}_{j^{'}=1} u_{P_{i,j^{'},k}} g^{r-1}_{j^{'},1} g^{k-1}_{j^{'},2}  (h_{\mathcal{A} \backslash \mathcal{A}_{j^{'}}} \circ z) \sigma_{j^{'}}(f_{j^{'}})] (P_{i,j,k}) \\
          &= [u_{P_{i,j,k}} g^{r-1}_{j,1} g^{k-1}_{j,2}  (h_{\mathcal{A} \backslash \mathcal{A}_j} \circ z) \sigma_j(f_j)] (P_{i,j,k}) \\
          &= [u_{P_{i,j,k}} g^{r-1}_{j,1} g^{k-1}_{j,2}  (h_{\mathcal{A} \backslash \mathcal{A}_j} \circ z)  \sigma_j(f_j)] (\sigma_j(P_{i,1,k})) \\
          &= f_j(P_{i,1,k}).
     \end{align*}
    From the above equation, $\phi$ induces identity maps from $\mathcal{C}^{I_j}|_{\mathcal{A}_1}$ to $\mathcal{C}^F|_{\mathcal{A}_j}$ for $j \in [t]$. Thus, the symbols in $\mathcal{C}^F$ corresponding to the places in $\mathcal{A}$ are the unchanged symbols, and hence the symbols corresponding to places in $\mathcal{B}$ are the written symbols. Thus, the write access cost is $|\mathcal{B}|=l(r+\delta-1)$. 
   Moreover, since $\mathcal{C}^I|_{\mathcal{A}_1}$ has dimension $kr$, and we can recover all the polynomial $f_1,f_2,\cdots, f_{t}$ from  $ u_{P_{i,j,k}}T(f_1,f_2,\cdots, f_{t}) $, hence dimension of the final code is $t kr$. Also, we have $(\sigma_j(f_j)) + (r-1)(\sigma_j(P_{\infty})) + (k-1)(\sigma_j(\text{Con}_{F/F^{\mathcal{H}}}(Q_{\infty}))) \geq 0 $ as divisor, hence $( g^{r-1}_{j,1} g^{k-1}_{j,2}  (h_{\mathcal{A} \backslash \mathcal{A}_j} \circ z) \sigma_j(f_j) ) \\
    + $ $ (kt-k) (\text{Con}_{F/F^{\mathcal{H}}}(Q_{\infty}))+ (r-1)P_{\infty} +(k-1) (\text{Con}_{F/F^{\mathcal{H}}}(Q_{\infty})) \geq 0 $ as divisor, we have $T(f_1, f_2, \cdots, f_{t}) \in \mathcal{L}(G) $ with $\text{deg}(G)= (r-1)+(kt-1)(r+\delta-1) $, it can have at most $(r-1)+(kt-1)(r+\delta-1)$ zeros, hence the minimal distance $d_F$ of the final code is at least $n_F- (r-1)-(kt-1)(r+\delta-1) = l(r+\delta-1)+\delta $. By Singleton-type bound, we have $d_F \leq (kt+l)(r+\delta-1)-t kr+1-(t k-1)(\delta-1) = (r+\delta-1)l+\delta$, hence the final code is an optimal  $(n_F,k_F; (r,\delta))$-LRC.
     
     Next, we show that this construction has read access cost $tl r $.
     For $P_{0,j,k} \in \mathcal{B}$, 
     \begin{align*}
         u_{P_{0,j,k}}T(f_1,f_2,\cdots, f_{t}) (P_{0,j,k}) &=  [ \sum^{t}_{j^{'}=1} g^{r-1}_{j^{'},1} g^{k-1}_{j^{'},2}  (h_{\mathcal{A} \backslash \mathcal{A}_{j^{'}}} \circ z) \sigma_{j'}(f_{j^{'}})] (P_{0,j,k}) \\
          &=  \sum^{t}_{j^{'}=1} [g^{r-1}_{j^{'},1} g^{k-1}_{j^{'},2}   (h_{\mathcal{A} \backslash \mathcal{A}_{j^{'}}} \circ z) ](P_{0,j,k}) [\sigma_{j'}(f_{j^{'}})] (P_{0,j,k})   \\
          &=    \sum^{t}_{j^{'}=1} [g^{r-1}_{j^{'},1} g^{k-1}_{j^{'},2}   (h_{\mathcal{A} \backslash \mathcal{A}_{j^{'}}} \circ z) ](P_{0,j,k}) f_{j^{'}} (\sigma^{-1}_{j'} (P_{0,j,k})).
     \end{align*}
    Hence $u_{P_{0,j,k}}T(f_1,f_2,\cdots, f_{t}) (P_{0,j,k})$ is a linear combination of $f_1(P_{0,j,k})$, $f_2(\sigma^{-1}_2 (P_{0,j,k}))$,$\cdots$, $f_{t}(\sigma^{-1}_t (P_{0,j,k}))$ for $j \in [l]$.
    We can recover every written symbol by $t$ symbols, and the total per-symbol read access cost equal to $t l(r+\delta-1)$.
    Moreover, due to local properties, we can recover any symbol from $r$ fixed symbols locally, hence the read access cost of this construction can be $t lr$, so it is access-optimal by Corollary \ref{rdelcor}.       
    \end{proof}

\begin{remark} \label{rdelrem}
    Each of the first three items in Remark \ref{remmds} admits a counterpart under the $(r,\delta)$ setting. Since items two and three were already illustrated with examples in the preceding chapter, we will not elaborate on them with further examples here. In the subsequent Remark \ref{remram}, we demonstrate that ramified rational places may also be used for evaluation.
\end{remark}

\begin{corollary} \label{rdelcor1}
      
Let \(\mathbb{F}_q\) be a finite field with \(q = p^s\), where \(p\) is a prime. Suppose that \(u\) is a common divisor of \(q - 1\) and \(p^v - 1\) for some \( v \in [s] \), and let \(u'\) and \(v'\) be positive integers with \(u' \mid u\), \(v' \mid v\)  and\(~ u^{'}|~p^{v^{'}}-1\). Then there exists a subgroup \(\mathcal{G}\) of \(\operatorname{PGL}_2(q)\) of order \(up^v\) which contains a subgroup \(\mathcal{H}\) of order \(u'p^{v'}\).

Let \(r\) and \(\delta\) be positive integers satisfying
\[
r + \delta - 1 = u'p^{v'},
\]
Also, let \(m\) be a positive integer such that
\[
m + 1 \le \left\lfloor \frac{q - p^v}{up^v} \right\rfloor.
\]
For some positive integers \(k\), \(t\), and \(l'\) satisfying
\[
k \in [m],\quad t \in [ \frac{u}{u'}p^{v-v'}],\quad (k + l')u'p^{v'} \le q - p^v,\quad \text{and} \quad \frac{u}{u'}p^{v-v'} \le \min\{k, l'\},
\]
define
\[
n_I = (k + l')u'p^{v'},\quad k_I = kr,\quad n_F = (kt + l')u'p^{v'},\quad \text{and} \quad k_F = tk r.
\]
Then there exists an access‐optimal \((n_I, k_I; n_F, k_F)\) \((r,\delta)\) merge-convertible code with read access cost  \(t r\, \frac{u}{u'}p^{v-v'}\) and write access cost \(up^v\).
      
\end{corollary}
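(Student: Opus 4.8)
The plan is to present Corollary \ref{rdelcor1} as an explicit instance of Theorem \ref{rdelthm}, choosing for $\mathcal{G}$ and $\mathcal{H}$ the ``affine-type'' subgroups of $\mathrm{PGL}_2(q)$ supplied by the second item of Lemma \ref{subgps}, and sharpening the one counting step in the proof of Theorem \ref{rdelthm} by computing the relevant ramification exactly instead of through the generic Hurwitz estimate \eqref{hurw}. Once this is done, literally everything else in the proof of Theorem \ref{rdelthm}---the transformation map $T$, the coefficient vector $u$, the initial and final codes, the verification of the optimal $(r,\delta)$-LRC parameters, and the computation of the read/write access costs---applies verbatim, with the parameter ``$l$'' of Theorem \ref{rdelthm} taken to be $\frac{u}{u'}p^{v-v'}$ and ``$r+\delta-1$'' taken to be $u'p^{v'}$. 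So there are really only two things to check: the existence of a nested pair $\mathcal{H}\subset\mathcal{G}\le\mathrm{PGL}_2(q)$ with $|\mathcal{G}|=up^{v}$ and $|\mathcal{H}|=u'p^{v'}$, and a lower bound of $q-p^{v}$ on the number of rational places of $F=\mathbb{F}_q(x)$ lying over rational places of $F^{\mathcal{G}}$ that split completely in $F/F^{\mathcal{G}}$.

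For the first point, since $v\mid s$ we fix the tower of subfields $\mathbb{F}_{p^{v'}}\subseteq\mathbb{F}_{p^{v}}\subseteq\mathbb{F}_q$. Because $u\mid p^{v}-1$, the cyclic group $\mathbb{F}_{p^{v}}^{\ast}$ has an element $\zeta$ of order $u$; because $u'\mid p^{v'}-1$ and $\mathbb{F}_{p^{v}}^{\ast}$ has a \emph{unique} subgroup of order $u'$, that subgroup already lies in $\mathbb{F}_{p^{v'}}^{\ast}$, so $\zeta^{u/u'}\in\mathbb{F}_{p^{v'}}$. Multiplication by $\zeta$ (resp.\ by $\zeta^{u/u'}$) maps $\mathbb{F}_{p^{v}}$ to itself (resp.\ $\mathbb{F}_{p^{v'}}$ to itself), so the group $\mathcal{G}$ generated by the translations $x\mapsto x+\beta$ ($\beta\in\mathbb{F}_{p^{v}}$) and by $x\mapsto\zeta x$ is a subgroup of $\operatorname{Aut}(F/\mathbb{F}_q)\cong\mathrm{PGL}_2(q)$ isomorphic to $\mathbb{F}_{p^{v}}\rtimes C_{u}$ of order $up^{v}$, and $\mathcal{H}:=\langle\, x\mapsto x+\gamma\ (\gamma\in\mathbb{F}_{p^{v'}}),\ x\mapsto\zeta^{u/u'}x\,\rangle$ is a subgroup of $\mathcal{G}$ of order $u'p^{v'}$. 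This is exactly the realization behind Lemma \ref{subgps}; the arithmetic hypotheses $u'\mid u$, $v'\mid v$, $u'\mid p^{v'}-1$ are precisely what makes $\mathcal{H}$ fit inside $\mathcal{G}$. By L\"uroth's theorem $F^{\mathcal{G}}$ and $F^{\mathcal{H}}$ are rational, as needed in the proof of Theorem \ref{rdelthm}.

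For the second point, a place $P$ of $F$ lies over a rational place of $F^{\mathcal{G}}$ that splits completely in $F/F^{\mathcal{G}}$ exactly when its decomposition group, i.e.\ its stabilizer in $\mathcal{G}$, is trivial. The infinite place $P_{\infty}$ is fixed by all of $\mathcal{G}$, hence excluded. For a finite place $P_{\alpha}$ with $\alpha\in\mathbb{F}_q$, the automorphism $x\mapsto\zeta^{j}x+\beta$ fixes $P_{\alpha}$ iff $\beta=\alpha(1-\zeta^{j})$; when $\alpha\in\mathbb{F}_{p^{v}}$ this is solvable for every $j$, so $P_{\alpha}$ has nontrivial stabilizer, while when $\alpha\notin\mathbb{F}_{p^{v}}$ it forces $j=0$ (using $1-\zeta^{j}\in\mathbb{F}_{p^{v}}^{\ast}$ for $j\neq0$), so $P_{\alpha}$ has trivial stabilizer. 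Hence at least $q-p^{v}$ of the $q+1$ rational places of $F$ are usable; since $k\le m$ and $m+1\le\lfloor(q-p^{v})/(up^{v})\rfloor$, we may choose the required $k+1$ completely split rational places of $F^{\mathcal{G}}$ (each contributing one orbit of $up^{v}$ rational places of $F$) while keeping $P_{\infty}$ in reserve. With these two points in hand, the proof of Theorem \ref{rdelthm} goes through unchanged and produces an access-optimal $(n_I,k_I;n_F,k_F)$ $(r,\delta)$ merge-convertible code with write access cost $l(r+\delta-1)=up^{v}$ and read access cost $trl=tr\,\frac{u}{u'}p^{v-v'}$, access-optimality being guaranteed by Corollary \ref{rdelcor}. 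The only mildly delicate part of the whole argument is the stabilizer bookkeeping in the second point---in particular making sure that \emph{no} finite place with $\alpha\notin\mathbb{F}_{p^{v}}$ acquires a nontrivial stabilizer---together with the verification in the first point that $u'\mid p^{v'}-1$, $v'\mid v$, $u'\mid u$ are exactly the conditions needed to nest $\mathcal{H}$ in $\mathcal{G}$.
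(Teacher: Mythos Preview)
Your overall strategy matches the paper's: realize $\mathcal{G}$ and $\mathcal{H}$ as nested affine-type subgroups of $\mathrm{PGL}_2(q)$, count completely split rational places directly, and then invoke Theorem \ref{rdelthm}. The stabilizer computation you give is exactly the mechanism behind the splitting count the paper cites from \cite{8865641}.

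There is, however, a genuine gap. You write ``since $v\mid s$ we fix the tower of subfields $\mathbb{F}_{p^{v'}}\subseteq\mathbb{F}_{p^{v}}\subseteq\mathbb{F}_q$'', but the hypothesis is only $v\in[s]$; nothing forces $v\mid s$. For instance $p=2$, $s=3$, $v=2$, $u=1$ satisfies all assumptions of the corollary while $\mathbb{F}_4\not\subseteq\mathbb{F}_8$. In that situation your translation group $\mathbb{F}_{p^{v}}$ is not a subgroup of $(\mathbb{F}_q,+)$ at all, so your $\mathcal{G}$ is not defined. The same issue recurs for $\mathcal{H}$ via $\mathbb{F}_{p^{v'}}$.

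The paper repairs this as follows. Set $l=\min\{t>0:u\mid p^{t}-1\}$; from $u\mid p^{s}-1$ and $u\mid p^{v}-1$ one gets $l\mid\gcd(v,s)$, so $\mathbb{F}_{p^{l}}\subseteq\mathbb{F}_q$ and $l\mid v$. Now take $W$ to be any $\frac{v}{l}$-dimensional $\mathbb{F}_{p^{l}}$-subspace of $\mathbb{F}_q$ and put $\mathcal{G}=\{x\mapsto ax+b:a\in H,\ b\in W\}$ with $H\le\mathbb{F}_{p^{l}}^{\ast}$ of order $u$. This has order $up^{v}$ even when $v\nmid s$, and because $H\subseteq\mathbb{F}_{p^{l}}^{\ast}$ and $W$ is $\mathbb{F}_{p^{l}}$-linear, your stabilizer argument goes through verbatim with $W$ in place of $\mathbb{F}_{p^{v}}$: for $\alpha\notin W$ and $a\neq1$ one has $1-a\in\mathbb{F}_{p^{l}}^{\ast}$, so $\alpha(1-a)\in W$ would force $\alpha\in W$. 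The subgroup $\mathcal{H}$ is built the same way with $l'=\min\{t>0:u'\mid p^{t}-1\}$, $H'\le H$ of order $u'$, and $W'\subseteq W$ an $\mathbb{F}_{p^{l'}}$-subspace of dimension $\frac{v'}{l'}$ (here $l'\mid l$ and $l'\mid v'$ follow from $u'\mid u$ and $u'\mid p^{v'}-1$). Once you replace the subfields $\mathbb{F}_{p^{v}},\mathbb{F}_{p^{v'}}$ by these subspaces $W,W'$, the rest of your argument is correct and coincides with the paper's.
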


\begin{proof}
    
     We only need to establish the existence of $\mathcal{G}$ and $\mathcal{H}$ with the desired orders and determine the number of available rational places by analyzing their splitting properties. Once this is done, we can apply Theorem \ref{rdelthm}.

Since \( u \mid q-1 \) (resp. \( u' \mid q-1 \)), there exists a subgroup \( H \) (resp. \( H' \)) of the multiplicative group \( \mathbb{F}_q^* \) of order \( u \) (resp. \( u' \)). Given that \( u \mid (p^v - 1) \) (resp. \( u' \mid (p^{v'} - 1) \)), the field \( \mathbb{F}_p(H) \) (resp. \( \mathbb{F}_p(H') \)) is contained in \( \mathbb{F}_{p^v} \) (resp. \( \mathbb{F}_{p^{v'}} \)).  

Define  
\[
l = \min \{ t > 0 : u \mid (p^t - 1) \} \quad (\text{resp. } l' = \min \{ t > 0 : u' \mid (p^t - 1) \}).
\]  
Then, we have \( \mathbb{F}_p(H) = \mathbb{F}_{p^l} \) (resp. \( \mathbb{F}_p(H') = \mathbb{F}_{p^{l'}} \)), and it follows that  
\[
l \mid \gcd(v, s) \quad (\text{resp. } l' \mid \gcd(v', s)).
\]  

Viewing \( \mathbb{F}_q \) as a \(\frac{s}{l}\)-dimensional vector space over \( \mathbb{F}_{p^l} \), we select a \(\frac{v}{l}\)-dimensional subspace \( W \subseteq \mathbb{F}_q \) over \( \mathbb{F}_{p^l} \). Then, the set  
\[
\mathcal{G} = \left\{ \begin{bmatrix} a & b \\ 0 & 1 \end{bmatrix} : a \in H, \, b \in W \right\}
\]  
forms a subgroup of \( \mathrm{PGL}_2(q) \) of order \( u p^v \).  

Since \( W \) is a \(\frac{v}{l}\)-dimensional vector space over \( \mathbb{F}_{p^l} \), it can also be regarded as a vector space of dimension \(\frac{v}{l'}\) over \( \mathbb{F}_{p^{l'}} \).
Thus, the subgroup  
\[
\mathcal{G}' = \left\{ \begin{bmatrix} a & b \\ 0 & 1 \end{bmatrix} : a \in H', \, b \in W \right\}
\]  
has order \( u' p^v \) and is a subgroup of \( \mathcal{G} \).  

Now, choosing a \(\frac{v'}{l'}\)-dimensional subspace \( W' \subseteq W \) over \( \mathbb{F}_{p^{l'}} \), we define  
\[
\mathcal{H} = \left\{ \begin{bmatrix} a & b \\ 0 & 1 \end{bmatrix} : a \in H', \, b \in W' \right\}.
\]  
By construction, \( \mathcal{H} \) is a subgroup of \( \mathcal{G} \) of order \( u' p^{v'} \). Hence, we find the groups $\mathcal{G}$ and $\mathcal{H}$ of the desired orders. 

Let $F$ be the rational function field $\mathbb{F}_q(x)$, consider the field extension $F/F^{\mathcal{G}}$, from Hurwitz genus formula \eqref{hurw}, \cite[Proposition IV.2]{8865641} investigates the splitting behavior of all rational places in $F$. Specifically, there exists a rational place $P_{\infty}$ of $F$ which totally ramified in $F/F^{\mathcal{G}}$, and there is a rational place of $F^{\mathcal{G}}$ which splits into $p^v$ rational places of $F$, each place has ramification index $u$, and there are $\frac{q-p^v}{up^v}$ rational places of $F^\mathcal{G}$ that split completely in $F/F^{\mathcal{G}}$. Therefore, the corollary follows from the proof of Theorem \ref{rdelthm}.    
\end{proof}

\begin{remark}
    
Take \(\delta=2\). Then, by Corollary \ref{rdelcor1}, we have constructed access-optimal LRC merge-convertible codes. Suppose we choose a subgroup \(\mathcal{G}\) of the multiplicative group \(\mathbb{F}_q^*\) of order \(u\). Since \(\mathcal{H}\) is a subgroup of \(\mathcal{G}\), it can also be viewed as a subgroup of \(\mathbb{F}_q^*\). In fact, the method for selecting the rational places corresponding to the information symbols in the access-optimal LRC merge-convertible codes in \cite[Corollary III.2]{kong2024locally} is essentially the same as ours. Consequently, the parameters of the convertible code obtained in this setting agree with those given in \cite[Corollary III.2]{kong2024locally}. However, since \cite{kong2024locally} does not construct convertible codes via automorphisms of the rational function field, the per-symbol read access cost in our codes is lower. Moreover, our construction only requires the existence of two nested subgroups in the automorphism group of a rational function field, which makes the achievable parameters in our setting more flexible than those in \cite{kong2024locally}. In particular, as we will show in Corollary \ref{rdelcor2} and Remark \ref{remram}, by using dihedral groups to construct convertible codes in even characteristic, we obtain access-optimal LRC merge-convertible codes with new parameters that have not been previously reported.

\end{remark}

In the following Corollary \ref{rdelcor2}, we construct convertible codes using dihedral groups over finite fields of even characteristic. We focus on the even characteristic case because, for odd $q$, a subgroup of $\mathrm{PGL}_2(q)$ isomorphic to a dihedral group has order $2u$, which must divide $q+1$ or $q-1$ (see \cite{cameron20063} for the proof). In such case, one can instead use a cyclic subgroup of $\mathrm{PGL}_2(q)$ to construct convertible codes with the same parameters.

\begin{corollary} \label{rdelcor2}

    Let $\mathbb{F}_q$ be a finite field of even characteristic. Suppose that $u,v$ are two positive integers satisfying  $u~|~q-1 (\text{resp. }  u~|~q+1  ) $ and $v~|~2u$. Then there exists a subgroup $\mathcal{G}$ of \(\operatorname{PGL}_2(q)\) of order \(2u\) which contains a subgroup \(\mathcal{H}\) of order \(v\).

Let \(r\) and \(\delta\) be positive integers satisfying
\[
r + \delta - 1 = v,
\]
Also, let \(m\) be a positive integer such that
\[
m + 1 \le \left\lfloor \frac{q - 1-u}{2u} \right\rfloor (\text{resp. } m + 1 \le \left\lfloor \frac{q + 1-u}{2u} \right\rfloor   ).
\]
For some positive integers \(k\), \(t\), and \(l'\) satisfying
\[
k \in  [m],\quad  t \in [ \frac{2u}{v}],\quad (k + l')v \le q - 1-u \quad (\text{resp. }  (k + l')v \le q + 1-u  )  ,\quad \text{and} \quad \frac{2u}{v} \le \min\{k, l'\},
\]
define
\[
n_I = (k + l')v,\quad k_I = kr,\quad n_F = (kt + l')v,\quad \text{and} \quad k_F = kt r.
\]
Then there exists an access‐optimal \((n_I, k_I; n_F, k_F)\) \((r,\delta)\) merge-convertible code with read access cost  \(t r\, \frac{2u}{v}\) and write access cost \(2u\).
    
\end{corollary}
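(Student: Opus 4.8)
The plan is to specialize Theorem~\ref{rdelthm} to the case where the ambient group is a dihedral subgroup of $\mathrm{PGL}_2(q)$ of order $2u$, taking the index parameter $\ell = 2u/v$ so that $r+\delta-1 = v$ and $|\mathcal{G}| = v\ell = 2u$. Three ingredients are needed: (a) existence of a nested pair $\mathcal{H}\subset\mathcal{G}$ with $|\mathcal{G}| = 2u$ and $|\mathcal{H}| = v$; (b) a lower bound of $(m+1)\cdot 2u$, plus one extra, on the number of rational places of $F=\mathbb{F}_q(x)$ lying over completely split rational places of $F^{\mathcal{G}}$; and (c) a check that the parameters and access costs output by Theorem~\ref{rdelthm} are exactly those stated. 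For (a), Lemma~\ref{subgps} supplies a dihedral subgroup $\mathcal{G}\le\mathrm{PGL}_2(q)$ of order $2u$ whenever $u\mid q-1$ or $u\mid q+1$; and since $v\mid 2u$, a dihedral group of order $2u$ has a subgroup of order $v$ (a cyclic subgroup of the rotation part when $v\mid u$, and a dihedral subgroup of order $v=2w$ with $w\mid u$ otherwise), which we take as $\mathcal{H}$. Part (c) is a routine substitution: with $\ell = 2u/v$ we read off $t\in[\ell]=[2u/v]$, $\ell\le\min\{k,l'\}$, $n_I=(k+l')v$, $k_I=kr$, $k_F=ktr$, write access cost $\ell(r+\delta-1)=2u$, and read access cost $tr\ell = tr\,\tfrac{2u}{v}$.

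The real content is (b), a ramification analysis of $F/F^{\mathcal{G}}$ in characteristic $2$. Both $F$ and $F^{\mathcal{G}}$ are rational (the latter by L\"uroth's theorem), so $g(F)=g(F^{\mathcal{G}})=0$ and the Hurwitz genus formula~\eqref{hurw} forces $\deg\operatorname{Diff}(F/F^{\mathcal{G}}) = 2|\mathcal{G}|-2 = 4u-2$. Since $q$ is even, $q\mp 1$ is odd, hence $u$ is odd: the cyclic ``rotation'' subgroup $C\le\mathcal{G}$ of order $u$ is tamely ramified, while each of the $u$ reflections in $\mathcal{G}$ has order $2=p$ and is wildly ramified. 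Every order-$2$ element of $\mathrm{PGL}_2(q)$ in characteristic $2$ is unipotent, hence $\mathrm{PGL}_2(q)$-conjugate to $x\mapsto x+1$; so each reflection fixes exactly one rational place of $F$, with inertia group of order $2$ and different exponent $2$ there. For $u\ge 3$ these $u$ fixed places are distinct (two distinct involutions fixing a common place would generate an abelian $2$-group inside a Borel, contradicting the dihedral structure), and they form a single $\mathcal{G}$-orbit contributing $u\cdot 2=2u$ to $\deg\operatorname{Diff}$. It remains to locate the ramification coming from $C$ and verify it accounts for the remaining $2u-2$.

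Here the two cases diverge. If $u\mid q-1$, then $C$ is split and fixes two rational places, forming one $\mathcal{G}$-orbit of size $2$ with ramification index $u$ and different contribution $2(u-1)$; adding the reflection orbit gives $2(u-1)+2u = 4u-2$, so there is no further ramification and all ramified places are rational, $u+2$ in number. Hence $q+1-(u+2)=q-1-u$ rational places of $F$ lie over completely split rational places of $F^{\mathcal{G}}$, i.e.\ $(q-1-u)/(2u)$ complete orbits, and $m+1\le\lfloor(q-1-u)/(2u)\rfloor$ lets us choose $k+1\le m+1$ of these orbits and still keep one rational place free to serve as $P_\infty$ in the proof of Theorem~\ref{rdelthm}. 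If instead $u\mid q+1$, then $C$ is nonsplit and fixes a single place of degree $2$ (not rational) with ramification index $u$ and different contribution $2(u-1)$, again summing to $4u-2$; thus the only ramified rational places are the $u$ reflection-fixed ones, leaving $q+1-u$ rational places lying over completely split rational places of $F^{\mathcal{G}}$, i.e.\ $(q+1-u)/(2u)$ orbits, and $m+1\le\lfloor(q+1-u)/(2u)\rfloor$ suffices. In either case, Theorem~\ref{rdelthm} applied with this $\mathcal{G}$ and $\mathcal{H}$ yields the desired access-optimal $(n_I,k_I;n_F,k_F)$ $(r,\delta)$ merge-convertible code.

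I expect the main obstacle to be the characteristic-$2$ ramification bookkeeping: confirming that each wildly ramified involution contributes exactly $2$ to $\deg\operatorname{Diff}$ (so that the contributions add up to the value $4u-2$ dictated by Hurwitz, which in turn certifies that no ramification hides at non-rational places), and cleanly separating the split from the nonsplit configuration of the rotation subgroup $C$. One could instead invoke a ready-made description of the ramification of dihedral covers of $\mathbb{P}^1$ in even characteristic, in the spirit of the reference used for Corollary~\ref{rdelcor1}, but the direct argument above appears to be the shortest self-contained route.
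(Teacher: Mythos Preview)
Your proof is correct and reaches the same conclusion as the paper, but via a genuinely different ramification analysis. The paper factors the extension as $F/F^{C}/F^{\mathcal{G}}$ (with $C$ the cyclic rotation subgroup of order $u$), invokes Corollary~\ref{poleva} for the top step $F/F^{C}$ to see that all $q+1$ rational places of $F$ lie over $(q+1)/u$ completely split rational places of $F^{C}$, and then uses Hurwitz on the degree-$2$ bottom step $F^{C}/F^{\mathcal{G}}$ to locate a single wildly ramified rational place; a parity argument ($2u \nmid q+1$ since $q$ is even) forces exactly one of the $(q+1)/u$ places to sit over it, yielding the count $q+1-u$. The paper only spells out the $u\mid q+1$ case and declares the other ``similar.''

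Your approach instead analyzes $F/F^{\mathcal{G}}$ in one shot by classifying fixed points of dihedral elements on $\mathbb{P}^1$: unipotency of involutions in characteristic $2$ gives $u$ distinct reflection-fixed rational places (your Borel argument for distinctness is clean), the rotation subgroup $C$ contributes either two rational fixed places ($u\mid q-1$, split torus) or one degree-$2$ place ($u\mid q+1$, nonsplit torus), and Hurwitz then forces the wild different exponent at each reflection place to be exactly $2$. This is more self-contained (no appeal to Corollary~\ref{poleva}) and handles both cases symmetrically, at the cost of the extra fixed-point bookkeeping. One small point worth making explicit in a write-up: the reason no \emph{other} ramified places can appear is that a place is ramified iff some nontrivial element of $\mathcal{G}$ fixes it, and you have enumerated all such fixed places; this is what justifies reading off the wild exponent from the Hurwitz total rather than computing it directly.
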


\begin{proof}

    For $u ~|~ q+1$, let $F$ be the rational function field $\mathbb{F}_q(x)$, and let $f(x) = x^2 + ax + b \in \mathbb{F}_q[x]~$be a primitive polynomial of order $q^2 - 1$. Define two automorphisms of $F$ by $\eta(x)=\frac{1}{bx+a}$ and $\tau(x) = \frac{1}{bx}$. It is straightforward to verify that $\eta\tau\eta=\tau$. Hence, the group $\mathcal{G}=\langle \eta^{\frac{q+1}{u}}, \tau \rangle$ is isomorphic to a dihedral group of order $2u$. For any $v|2u$, there exists a subgroup $\mathcal{H} \subset \mathcal{G}$ of order $v$. This follows from the facts that $ \langle \eta^{\frac{q+1}{u}} \rangle$ is cyclic and that $ \tau\eta^j={\eta^{-j}}\tau $ for any positive integer $j$.
    Thus, we have identified groups $\mathcal{H}$ and $\mathcal{G}$ of the required orders. Now consider the field extension  $ F^{\langle \eta^{\frac{q+1}{u}} \rangle} / F^\mathcal{G} $. Since the extension degree $[F^{\langle \eta^{\frac{q+1}{u}} \rangle} : F^\mathcal{G} ]=2$, any ramified place in this extension must be ramified wildly. Moreover, since $\text{deg~Diff}(F^{\langle \eta^{\frac{q+1}{u}} \rangle} / F^\mathcal{G})=2 $, there is exactly one rational place of $F^\mathcal{G}$ that is ramified, and its ramification index must be $2$.
    Next, we consider the extension $F/F^{\langle \eta^{\frac{q+1}{u}} \rangle}$. As shown in Corollary \ref{poleva}, all rational places of $F$ lying above $\frac{q+1}{u}$ completely split rational places in $F^{\langle \eta^{\frac{q+1}{u}} \rangle}$. Among these $\frac{q+1}{u}$ rational places of $F^{\langle \eta^{\frac{q+1}{u}} \rangle}$, there must be one lying above the unique ramified rational place in  $F^\mathcal{G}$; otherwise, all rational places of $F$ would lie above completely split rational places in $F^\mathcal{G}$, which would imply $2u ~|~ q+1$, contradicting with $q$ even. In summary, there are $q+1-u$  rational places in $F$ that split completely in the extension $F/F^{\mathcal{G}}$, and there are $u$ rational places in $F$ that are ramified in this extension, each with ramification index $2$. The result for the case $u ~|~ q+1$ then follows from Theorem \ref{rdelthm}.

     For $u ~|~ q-1$, the proof is similar, we omit it.
    \end{proof}

\begin{remark} \label{remram}
~
    \begin{itemize}
        \item[(i)] let $\delta=2$, if we choose $\mathcal{H}$ to be a subgroup of $\mathcal{G}$ with even order in Corollary \ref{rdelcor2}, then the local group have even size. Hence, we can obtain access-optimal LRC merge-convertible codes with new parameters.
        \item[(ii)] If we take $\mathcal{H}$ to be a cyclic subgroup of order $v$ in $\mathcal{G}$ in the proof of Corollary \ref{rdelcor2}, then, by the analysis in Corollary \ref{rdelcor2}, we have $k=\frac{q+1-u}{2u}$ rational places $\{ P_1, P_2, \cdots, P_k \}$ in $F^{\mathcal{G}}$ that split completely in the extension $F/F^{\mathcal{G}}$. For $i \in [k]$, let $\{ P_{i,1}, P_{i,2}, \cdots, P_{i,l} \}$ denote the set of rational places in $F^{\mathcal{H}}$  lying above $P_i$ where $l=\frac{2u}{v}$, and let $\{ P_{i,j,1}, P_{i,j,2}, \cdots, P_{i,j,r+\delta-1} \}$ denote the set of rational places in $F^\mathcal{G}$ lying above $P_{i,j}$ where $r+\delta-1=v$. After a suitable reordering of the places, for  $ j \in [l]$ we may denote by $\sigma_j$ the unique element in  $\mathcal{G} \backslash \mathcal{H}$ satisfying
    $\sigma_j(P_{i,1,s})=P_{i,j,s}$ for all $i \in [k]$ and $s \in [r+\delta-1]$.   
    There are $u$ rational places remaining in $F$.  After a suitable reordering of the places, we can denote these $u$ rational places as $\{ P_{0,j,s} \}$ for $j\in [\frac{l}{2}],s \in [r+\delta-1]$, such that $\{P_{0,j,s}\}$ with fixed $j$ for $s \in [r+\delta-1]$ form a complete orbit under the action of $\mathcal{H}$, and it can be shown that there are exactly two elements in $\sigma_j$ for $j\in[l]$ that send $P_{0,1,s}$ to $P_{0,j,s}$ for all $s \in [r+\delta-1]$.
     Let  $\mathcal{A}_{i,j}=\{P_{i,j,1},P_{i,j,2},\cdots,P_{i,j,r+\delta-1 }\}$ for $i \in [k]$ and $j \in [l]$,
    $\mathcal{B}_{0,j}=\{P_{0,j,1},P_{0,j,2},\cdots,P_{0,j,r+\delta-1 }\}$ for $j \in [\frac{l}{2}]$,
    $\mathcal{A}_j = \bigcup\limits_{1\leq i\leq k} \mathcal{A}_{i,j}$, $\mathcal{A}=\bigcup\limits_{1\leq j\leq t} \mathcal{A}_{j}$ and $\mathcal{B}=\bigcup\limits_{1\leq j\leq \frac{l}{2}} \mathcal{B}_{0,j}$. Formally, if we take $P_{0,1,1}$ as $P_{\infty}$ in the proof of Theorem \ref{rdelthm} and use similar method in Corollary \ref{poleva}  to evaluate functions at their poles, a convertible code can be defined following the construction in Theorem \ref{rdelthm}. Such convertible code has fewer redundancy symbols, which is preferable. This is because when the number of redundancy symbols exceeds that of information symbols, the default method achieves the lower bound of access cost, as stated in Corollary \ref{rdelcor}.

    \end{itemize}
   
\end{remark}

\section{\texorpdfstring{Construction of Access-Optimal Generalized Merge-Convertible Code From MDS Codes To $(r,\delta)$-LRC}{Construction of Access-Optimal Generalized Merge-Convertible Code From MDS Codes To (r,δ) -LRC}} \label{sec6}

In this section, we present the construction of an access-optimal generalized merge-convertible code that converts MDS codes into an optimal $(r,\delta)$-LRC.

\subsection{GRS codes and generalized Vandermonde matrix}
Our construction is based on generalized Reed-Solomon (GRS) codes and the generalized Vandermonde matrix. Thus, we first briefly review some related results.

Suppose $1 \leq k <n$. Let $\alpha_1, \alpha_{2},\cdots, \alpha_n$ be $n$ distinct elements of $\mathbb{F}_q$ and $\mathbf{v}=(v_1,v_2,\cdots,v_n) \in (\mathbb{F}^*_q)^n$. 

\begin{definition}[Generalized Vandermonde matrix and GRS codes]
The generalized Vandermonde matrix  associated to $\alpha_1, \alpha_{2},\cdots, \alpha_n$ and $\mathbf{v}$ is defined as: 
\[V_k(\alpha_1, \alpha_{2},\cdots, \alpha_n; \mathbf{v}) \triangleq\left(\begin{array}{cccc}
     v_1& v_2& \cdots &v_n \\
    v_1\alpha_{1}& v_2\alpha_{2}&\cdots& v_n\alpha_{n} \\
   \vdots & \vdots & \ddots & \vdots \\
     v_1\alpha^{k-1}_{1}& v_2\alpha^{k-1}_{2} & \cdots & v_n\alpha^{k-1}_{n}\\
\end{array}\right) \in \mathbb{F}^{k \times n}_q.\]
When $\mathbf{v} = \mathbf{1}_n$, the all-one vector of length $n$, we abbreviate $V_k(\alpha_1, \alpha_2, \dots, \alpha_n; \mathbf{v})$ by $V_k(\alpha_1, \alpha_2, \dots, \alpha_n)$ for simplicity.

    The $k$-dimensional generalized Reed-Solomon (GRS) code associated to $\alpha_1, \alpha_{2},\cdots, \alpha_n$ and $\mathbf{v}$, denoted by \\ $\mathrm{GRS}_k(\alpha_1, \alpha_{2},\cdots, \alpha_n; \mathbf{v})$, is defined as the linear code with generator matrix $V_{k}(\alpha_1, \alpha_{2},\cdots, \alpha_n; \mathbf{v})$. We abbreviate  \\$\mathrm{GRS}_k(\alpha_1, \alpha_2, \dots, \alpha_n; \mathbf{1}_n)$ by $\mathrm{GRS}_k(\alpha_1, \alpha_2, \dots, \alpha_n)$ for simplicity.
    
    Each element $\alpha_i$ is called a locator of the Vandermonde matrix $V_{k}(\alpha_1, \alpha_2, \dots, \alpha_n; \mathbf{v})$ and of the GRS code \\ $\mathrm{GRS}_k(\alpha_1, \alpha_{2},\cdots, \alpha_n; \mathbf{v})$. The set  $\{\alpha_1, \alpha_{2},\cdots, \alpha_n\}$ is called the locator set. 
\end{definition}

The following result concerns the existence of a certain parity-check matrix of GRS codes, which will be used in our construction.
\begin{lemma}\label{VI.1}
    Suppose $n \geq m$ and let $\Gamma =\{\alpha_{i_1}, \alpha_{i_2},\cdots, \alpha_{i_{m}}\}\subseteq \{\alpha_1, \alpha_{2},\cdots, \alpha_n\}$, then there exists a vector $\mathbf{v}\in (\mathbb{F}^*_q)^{n}$ such that the GRS code $\mathrm{GRS}_{k}(\alpha_1, \alpha_{2},\cdots, \alpha_n; \mathbf{v})$ has a parity-check matrix $H$ satisfying $H|_{\Gamma}=V_{n-k}(\alpha_{i_1}, \alpha_{i_2},\cdots, \alpha_{i_{m}})$. 
\end{lemma}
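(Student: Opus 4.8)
The plan is to exploit the classical description of the dual of a GRS code as another GRS code sharing the same locators, and then to choose the multiplier vector so that the induced parity-check matrix is \emph{unweighted} on the prescribed coordinate subset $\Gamma$. Recall that for pairwise distinct $\alpha_1,\dots,\alpha_n\in\mathbb{F}_q$ and $\mathbf{v}=(v_1,\dots,v_n)\in(\mathbb{F}_q^*)^n$, a parity-check matrix of $\mathrm{GRS}_k(\alpha_1,\dots,\alpha_n;\mathbf{v})$ is $V_{n-k}(\alpha_1,\dots,\alpha_n;\mathbf{v}^{\perp})$, where
\[
v_i^{\perp}=\frac{1}{v_i\prod_{j\neq i}(\alpha_i-\alpha_j)},\qquad i\in[n].
\]
Thus it suffices to pick $\mathbf{v}$ so that $v_{i_s}^{\perp}=1$ for every $s\in[m]$: then the restriction of $V_{n-k}(\alpha_1,\dots,\alpha_n;\mathbf{v}^{\perp})$ to the columns indexed by $\Gamma$ is exactly $V_{n-k}(\alpha_{i_1},\alpha_{i_2},\dots,\alpha_{i_m})$.

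Concretely, I would define $\mathbf{v}=(v_1,\dots,v_n)$ by
\[
v_i=\begin{cases}\dfrac{1}{\prod_{j\neq i}(\alpha_i-\alpha_j)}, & i\in\{i_1,\dots,i_m\},\\[2mm] 1,& \text{otherwise}.\end{cases}
\]
Since the $\alpha_j$ are pairwise distinct, each product $\prod_{j\neq i}(\alpha_i-\alpha_j)$ lies in $\mathbb{F}_q^*$, so $\mathbf{v}\in(\mathbb{F}_q^*)^n$ is well defined; note also $v_i^{\perp}\neq0$ automatically for all $i$, so $H\triangleq V_{n-k}(\alpha_1,\dots,\alpha_n;\mathbf{v}^{\perp})$ genuinely has full rank $n-k$ and is a parity-check matrix of $\mathrm{GRS}_k(\alpha_1,\dots,\alpha_n;\mathbf{v})$. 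Substituting into the formula for $v_i^{\perp}$ gives $v_{i_s}^{\perp}=1$ for each $s\in[m]$, hence $H|_{\Gamma}=V_{n-k}(\alpha_{i_1},\dots,\alpha_{i_m})$, which is what we want.

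The only substantive ingredient is the GRS-duality identity above, so I would include a brief justification: for a row $\mathbf{f}$ of $V_k(\alpha_1,\dots,\alpha_n;\mathbf{v})$ and a row $\mathbf{g}$ of $V_{n-k}(\alpha_1,\dots,\alpha_n;\mathbf{v}^{\perp})$, the Euclidean inner product is $\sum_{i=1}^n v_iv_i^{\perp}\alpha_i^{a+b}=\sum_{i=1}^n \alpha_i^{a+b}\big/\prod_{j\neq i}(\alpha_i-\alpha_j)$ with $0\le a\le k-1$ and $0\le b\le n-k-1$; this is the leading ($x^{n-1}$) coefficient in the Lagrange interpolation of the polynomial $x^{a+b}$ of degree $a+b\le n-2$, hence $0$. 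Since $\mathrm{GRS}_k$ and $\mathrm{GRS}_{n-k}(\cdot;\mathbf{v}^{\perp})$ are thus orthogonal and have complementary dimensions $k$ and $n-k$, they are mutually dual. (Any alternative normalization of the duality multiplier differing by a global nonzero scalar can be absorbed into the choice of $v_{i_s}$ and changes nothing.) The hypothesis $n\ge m$ plays no essential role beyond guaranteeing that $\Gamma$ is a genuine subset of the locator set, so I do not anticipate any real obstacle here.
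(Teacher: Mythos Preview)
Your proof is correct and uses essentially the same idea as the paper, namely the GRS duality fact that the dual of a GRS code on the same locators is again a GRS code with explicitly computable multipliers. The paper's version is marginally simpler: rather than setting $v_i$ differently according to whether $i\in\Gamma$, it just takes $\mathbf{v}$ so that the \emph{entire} parity-check matrix is the unweighted $V_{n-k}(\alpha_1,\dots,\alpha_n)$ (equivalently $v_i^{\perp}=1$ for all $i$), after which the restriction to $\Gamma$ is immediate.
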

\begin{proof}
  Let $\mathcal{C}=\mathrm{GRS}_{n-k}(\alpha_{1}, \alpha_{2},\cdots, \alpha_{n})$, which has a generator matrix $H=V_{n-k}(\alpha_{1}, \alpha_{2},\cdots, \alpha_{n})$. Then, by \cite[Proposition 5.2]{roth2006introduction}, there exists a vector $\mathbf{v} \in (\mathbb{F}_q^*)^n$, such that $\mathcal{C}^{\top}=\mathrm{GRS}_{k}(\alpha_{1}, \alpha_{2},\cdots, \alpha_{n};\mathbf{v})$. Consequently, \\$\mathrm{GRS}_{k}(\alpha_{1}, \alpha_{2},\cdots, \alpha_{n};\mathbf{v})$ has a parity-check matrix $H$. Clearly, we have $H|_{\Gamma}=V_{n-k}(\alpha_{i_1}, \alpha_{i_2},\cdots, \alpha_{i_{m}})$.
\end{proof}

\subsection{Convert MDS codes to an LRC}
To present our access-optimal convertible code from MDS codes to an LRC, we first construct $t$ initial $[n_{I_i},k_{I_i}]$-MDS codes $\mathcal{C}^{I_{i}}$ for $i \in [t]$, and a final code $\mathcal{C}^F$ which is an optimal $(n_F, k_F, d_F;(r, \delta))$-LRC. To simplify our discussion, we assume that all the initial codes have the same dimension, i.e., $k_{I_i} = k_{I}$ for $i \in [t]$, and hence $k_F = tk_{I}$. Furthermore, we assume that $d_F \leq k_{I}$ and $d_F \leq n_{I_i} - k_{I} + 1$. Otherwise, by Theorem \ref{III.2}, the read cost for each initial code is at least $k_{I}$. Thus, using the default approach, the conversion procedure becomes trivial.

Denote $\langle tk_{I} \rangle_r$ as the integer $b$ with $tk_{I} \equiv b\pmod r$ and $b \in [r]$. If $\langle tk_{I} \rangle_r \geq k_I+1$, then $\lceil \frac{tk_{I}}{r} \rceil-\lceil \frac{(t-1)k_{I}}{r} \rceil=0$ and $\lfloor \frac{k_{I}-d_F+1}{r+\delta-1}  \rfloor=0$.

Recall that $\mathcal{U}_i$ and $\mathcal{R}_i$ denote the index sets of unchanged symbols and read symbols, respectively, during the conversion procedure. By Theorem \ref{III.1},  we have
\[|\mathcal{U}_i| \leq k_{I}+  n_F-k_F-d_F+1-\left( \left\lceil \frac{k_F-k_{I}}{r} \right\rceil -1\right)(\delta-1)= k_{I}+\left( \left\lceil \frac{tk_{I}}{r} \right\rceil-\left\lceil \frac{(t-1)k_I}{r} \right\rceil\right)(\delta-1)=k_{I}.\]
By Theorem \ref{III.2}, we have
\[|\mathcal{R}_i| \geq k_{I} - (|\mathcal{U}_i\backslash \mathcal{R}_i|-d_F+1) +  (\delta-1) \left\lfloor \frac{|\mathcal{U}_i\backslash \mathcal{R}_i|-d_F+1}{r+\delta-1}  \right\rfloor \geq d_F-1+(\delta-1) \left\lfloor \frac{k_{I}-d_F+1}{r+\delta-1}  \right\rfloor=d_F-1.\]

Therefore, when $d_F\leq k_{I}$, $d_F\leq n_{I_i}-k_{I}+1$ and $\langle tk_{I} \rangle_r \geq k_{I}+1$, the convertible code is access-optimal if $|\mathcal{U}_i|=k_{I}$ and $|\mathcal{R}_i|=d_F-1$ for each $i \in [t]$.

\textbf{The parameter setting}: Suppose $\delta \geq 2$, $t=st'$ and $r=sk_{I}+a$ with $s \geq 2$, $a \geq 1$ and $at' \leq k_{I}-\delta$. We let 
\[k_F=tk_I  \textnormal{ and } n_F=t'(r+\delta-1),\]
then it can be verified that 
        \[d_F=n_F-k_F+1-\left(\left\lceil \frac{k_F}{r}\right\rceil-1\right)(\delta-1)=at'+\delta\leq k_{I},\]
 and $\langle tk_{I} \rangle_r=sk_{I}+a-at'\geq (s-1)k_{I}+a+\delta>k_{I}+1.$

\textbf{Construction of the final code $\mathcal{C}^F$}: Suppose $n_F-(t'-1)(\delta-1) \leq q$. 
Let $\alpha_{i,j}$, $\beta_{i',j'}$ and $\gamma_{\ell}$ ($i=1, \cdots, t;  j=1, \cdots, k_{I}; i'=1,\cdots,t'; j'=1,\cdots,a$; $\ell=1,\cdots, \delta-1$) be $tk_{I}+at'+\delta-1=t'(sk_{I}+a)+\delta-1=n_F-(t'-1)(\delta-1)$ distinct elements of $\mathbb{F}_q$. Let
        \[H^F=\left(\begin{array}{cccc}
    A_{1} & &  &\\
    & A_{2} &  & \\
     &  &\ddots & \\
     && &A_{t'} \\
     B_1 &B_2  &\cdots& B_{t'}
\end{array}\right),\]
where $A_{i}=V_{\delta-1}(\alpha_{s(i-1)+1,1}, \cdots, \alpha_{s(i-1)+1,k_I}, \cdots, \alpha_{si,1}, \cdots, \alpha_{si,k_I}, \beta_{i,1}, \cdots, \beta_{i,a},\gamma_1, \cdots,\gamma_{\delta-1})\in \mathbb{F}^{(\delta-1)\times (r+\delta-1)}_q$, and

\setlength{\arraycolsep}{0.9pt}
\[B_{i}=\left(\begin{array}{ccccccccccccc}
    \alpha^{\delta-1}_{s(i-1)+1,1}& \cdots& \alpha^{\delta-1}_{s(i-1)+1,k_I} & \cdots & \alpha^{\delta-1}_{si,1}& \cdots& \alpha^{\delta-1}_{si,k_I} & \beta^{\delta-1}_{i,1} & \cdots & \beta^{\delta-1}_{i,a} &\gamma^{\delta-1}_1 &\cdots &\gamma^{\delta-1}_{\delta-1}  \\
    \alpha^{\delta}_{s(i-1)+1,1}& \cdots& \alpha^{\delta}_{s(i-1)+1,k_I} & \cdots & \alpha^{\delta}_{si,1}& \cdots& \alpha^{\delta}_{si,k_I} & \beta^{\delta}_{i,1} & \cdots & \beta^{\delta}_{i,a} &\gamma^{\delta}_1 &\cdots &\gamma^{\delta}_{\delta-1}\\
   \vdots & \ddots & \vdots & \ddots & \vdots &  \ddots & \vdots &  \vdots& \ddots &  \vdots & \vdots &\ddots & \vdots \\
     \alpha^{at'+\delta-2}_{s(i-1)+1,1}& \cdots& \alpha^{at'+\delta-2}_{s(i-1)+1,k_I} & \cdots & \alpha^{at'+\delta-2}_{si,1}& \cdots& \alpha^{at'+\delta-2}_{si,k_I} & \beta^{at'+\delta-2}_{i,1} & \cdots & \beta^{at'+\delta-2}_{i,a} &\gamma^{at'+\delta-2}_1 &\cdots &\gamma^{at'+\delta-2}_{\delta-1}\\
\end{array}\right) \in \mathbb{F}^{at'\times (r+\delta-1)}_q\]
\setlength{\arraycolsep}{1.0pt}
for $i \in [t']$.

 Let $\mathcal{C}^F$ be the linear code with parity-check matrix $H^F$.
Then, we have 
\begin{theorem}
Suppose $s \geq 2$, $a \geq 1$ with $at' \leq k_I-\delta$ and $t'(sk_I+a)+\delta-1 \leq q$, then the code $\mathcal{C}^F$ constructed above is an optimal $\big(n_F=t'(sk_{I}+a+\delta-1), ~k_F=st'k_I,~ d_F=at'+\delta;~(r=sk_{I}+a,\delta) \big)$-$\mathrm{LRC}$.
\end{theorem}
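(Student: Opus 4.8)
The plan is to verify three things about $\mathcal{C}^F$: that it has the claimed length and dimension, that it has the $(r,\delta)$-locality with the stated parameters, and that its minimum distance equals $at'+\delta$ (so that it meets the Singleton-type bound \eqref{Sbound2} and is therefore optimal). First I would count rows and columns of $H^F$: it is block-structured with $t'$ diagonal blocks $A_i$ of size $(\delta-1)\times(r+\delta-1)$ plus a bottom strip of blocks $B_i$ contributing $at'$ rows, so $H^F$ has $(t'-1)(\delta-1)+\big((\delta-1)+at'\big)=t'(\delta-1)+at'$ rows and $t'(r+\delta-1)=n_F$ columns. I would then argue $H^F$ has full row rank (see below), so $k_F=n_F-\operatorname{rank}(H^F)=t'(r+\delta-1)-t'(\delta-1)-at'=t'(r-a)=t'sk_I=st'k_I$, as claimed. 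The locality is the easy part: for each $i\in[t']$ the $i$-th block of columns, restricted to the rows of $A_i$ and $B_i$, has parity-check matrix the Vandermonde-type matrix $\begin{pmatrix}A_i\\ B_i\end{pmatrix}=V_{at'+\delta-1}(\cdot)$ on $r+\delta-1$ distinct locators; since $at'+\delta-1 < r+\delta-1$, this restricted code is an $[r+\delta-1,\, r-a+\dots]$ GRS-type code, in fact MDS with minimum distance exactly $at'+\delta$, which is $\ge\delta$, so each coordinate lies in a recovery set of size $r+\delta-1$ whose restriction has distance $\ge\delta$. Thus $\mathcal{C}^F$ is an $(r,\delta)$-LRC with $\lceil k_F/r\rceil = t'$, and the Singleton-type bound gives $d_F\le n_F-k_F+1-(t'-1)(\delta-1)=at'+\delta$.

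The substantive step is proving the lower bound $d_F\ge at'+\delta$, equivalently that every $at'+\delta-1$ columns of $H^F$ are linearly independent. I would take any set $S$ of $at'+\delta-1$ column indices and let $m_i$ be the number of chosen columns falling in block $i$, so $\sum_{i=1}^{t'} m_i = at'+\delta-1$. The columns in block $i$ that are selected form a submatrix whose nonzero rows (the $A_i$ rows and the $B_i$ rows) are of the shape $(v\,\alpha^j)$ for $j$ ranging over $\{0,1,\dots,\delta-2\}$ within $A_i$ and $\{\delta-1,\dots,at'+\delta-2\}$ within $B_i$ — i.e. a consecutive block of powers. The key linear-algebra fact is a lemma on "generalized Vandermonde'' / lacunary matrices: if one picks, for each block, the exponent ranges $[0,\delta-2]$ (shared across all blocks, living in the bottom-overlapping $B$-rows... actually the $\gamma$-dependent rows are the shared rows), then a determinant argument shows the overall matrix is nonsingular as long as each block contributes at most $at'+\delta-1$ columns and the total is $at'+\delta-1$. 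I would make this precise by expressing the relevant square submatrix (after selecting a maximal independent row set) as a block matrix whose diagonal blocks are square Vandermonde matrices and whose coupling comes only through the $\delta-1$ shared top rows, and then invoke the standard result that such a "Vandermonde with $\delta-1$ common rows'' configuration has nonzero determinant precisely when the exponents are distinct and the locators within each block are distinct — which holds here because all the $\alpha,\beta,\gamma$ are distinct elements of $\mathbb{F}_q$ and the exponents $0,1,\dots,at'+\delta-2$ are consecutive. This also yields the full-row-rank claim for $H^F$ used above (take $S$ to be a maximal independent set of columns).

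The hard part will be the determinant/rank argument for arbitrary column subsets: one must handle the case distribution on how the $at'+\delta-1$ chosen columns split among the $t'$ blocks, and in the unbalanced cases (some $m_i$ large, some small or zero) argue that the shared $A$-rows can still be used to "borrow'' rank. The cleanest route is probably an inductive argument on $t'$, peeling off one block at a time: if block $i$ has $m_i\le at'+\delta-1$ selected columns, the square Vandermonde in its private rows plus a suitable number of the shared rows is invertible, and a Schur-complement / row-reduction step removes block $i$ and its contribution to the shared rows, reducing to a smaller instance; one needs to track that the shared rows remaining still have enough "room'' ($\delta-1$ of them) throughout. An alternative is to quote directly a known optimal-$(r,\delta)$-LRC construction via parity-check matrices (the construction here is essentially the Tamo–Barg / Kamath et al. style parity-check construction specialized to these parameters) and verify the hypotheses; but giving the self-contained determinant argument is preferable. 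Once $d_F\ge at'+\delta$ is established, combining with the Singleton-type upper bound shows $d_F=at'+\delta$ and that $\mathcal{C}^F$ is optimal, completing the proof.
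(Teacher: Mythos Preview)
Your high-level plan is right: length, dimension, and $(r,\delta)$-locality are routine, the upper bound $d_F\le at'+\delta$ is the Singleton-type bound, and the crux is that any $at'+\delta-1$ columns of $H^F$ are linearly independent. But the execution of that last step has a real gap, and there is one structural confusion that would need fixing first.

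On the structure: the $\delta-1$ rows of each $A_i$ are the \emph{private} rows (identically zero on every column outside block $i$), while the $at'$ rows carrying the $B_i$'s are the \emph{shared} rows. You have this reversed in places (``$\delta-1$ shared top rows'', ``the $\gamma$-dependent rows are the shared rows''). The $\gamma$'s are shared \emph{locators} (columns), not rows: the $\gamma$-columns of different blocks agree on the shared $B$-rows but live in different private $A_i$-rows.

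On the argument: your inductive Schur-complement plan does not close. If you pivot on an $m_i\times m_i$ submatrix for one block using its $\delta-1$ private rows together with $m_i-(\delta-1)$ of the shared rows, the Schur complement modifies the remaining shared-row entries in the other blocks' columns, and those entries are no longer Vandermonde. So the smaller instance is not of the same form and the induction hypothesis does not apply; ``tracking that $\delta-1$ shared rows remain'' is not enough, because the issue is the \emph{type} of the remaining entries, not just their count. The ``standard result on Vandermonde with $\delta-1$ common rows'' you invoke is not standard, and as stated (with rows and columns swapped) it is not the right statement.

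The paper's proof uses a feature of the construction you do not mention: the $\delta-1$ locators $\gamma_1,\dots,\gamma_{\delta-1}$ are common to all $t'$ blocks. The argument runs as follows. First reduce to the case where every involved block contributes at least $\delta-1$ selected columns (a block with fewer is disposed of directly using its private Vandermonde rows). In each such block pick $\delta-1$ of the selected columns, forming an invertible $(\delta-1)\times(\delta-1)$ Vandermonde block $A'_i$ in the private rows, and use \emph{column} operations within each block to clear the private-row part of the remaining selected columns. It then suffices to show that the residual matrix $\big(B''_i-B'_i(A'_i)^{-1}A''_i\big)_i$ in the shared rows has full column rank. Here the shared-$\gamma$ trick enters: stacking all the $\binom{A'_i}{B'_i}$ and $\binom{A''_i}{B''_i}$ side by side and deleting duplicate-locator columns (the only possible duplicates being the $\gamma$'s) yields a single $(at'+\delta-1)$-row Vandermonde matrix with at most $at'+\delta-1$ columns, hence of full column rank; applying the same column operations to it gives the desired conclusion. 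This is a global reduction to one Vandermonde matrix rather than an induction, and it is exactly the shared-locator structure that makes it work.
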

\begin{proof}
    The code length, dimension, and locality of $\mathcal{C}^F$ are obvious. By the Singleton-type bound \eqref{Sbound2},
    \[d(\mathcal{C}^F) \leq n_F-k_F+1-\left(\left\lceil \frac{k_F}{r}\right\rceil-1\right)(\delta-1)=at'+\delta.\]
    So we only need to show that any submatrix $M$ of $H_F$, consisting of any $at'+\delta-1$ columns of $H^F$ has full column rank.  Suppose the columns of $M$ come from $m$ repair groups. If one group contains fewer than $\delta-1$ columns, then these columns are linearly independent of the others, since any $\delta-1$ columns of $A_{i}$ are linearly independent. Therefore, we can assume that each of the $m$ groups contains at least $\delta-1$ columns of $M$.  Note that the locator sets of $A_1, A_2, \cdots, A_{t'}$ share a common subset $\{\gamma_1, \cdots, \gamma_{\delta-1}\}$. 
    Thus, we can rearrange the columns of $M$ and delete the zero rows to obtain the following matrix:
    \[M_1=\left(\begin{array}{cccc|cccc}
    A'_{1} & & & &A''_{1} & &  &\\
    & A'_{2} &  && & A''_{2} &  &\\
     &  &\ddots &  & & &\ddots &\\
     && &A'_{m} &&& &A''_{m}\\
     B'_1 &B'_2  &\cdots& B'_{m}&B''_1 &B''_2  &\cdots& B''_{m}
\end{array}\right),\]
   where each $A'_{i}$ is a $(\delta-1)\times(\delta-1)$ Vandermonde matrix and hence is invertible, and $A''_{1}, \cdots, A''_{m}$ are Vandermonde matrices with disjoint locator sets. We use $A'_i$ to eliminate $A''_i$ via column operations, yielding the following matrix:
   \[M_2=\left(\begin{array}{c|c|c|c|c|c|c|c}
    A'_{1} & & & &\mathbf{0} & &  &\\
    & A'_{2} &  && & \mathbf{0} &  &\\
     &  &\ddots &  & & &\ddots &\\
     && &A'_{m} &&& &\mathbf{0}\\
     B'_1 &B'_2  &\cdots& B'_{m}&B''_1-B'_1(A'_{1})^{-1}A''_{1} &B''_2-B'_2(A'_{2})^{-1}A''_{2}  &\cdots& B''_{m}-B'_m(A'_{m})^{-1}A''_{m}
\end{array}\right)\]
Then we only need to show that the matrix
\[\left(\begin{array}{c|c|c|c}
    B''_1-B'_1(A'_{1})^{-1}A''_{1} &B''_2-B'_2(A'_{2})^{-1}A''_{2}  &\cdots& B''_{m}-B'_m(A'_{m})^{-1}A''_{m}
\end{array}\right)\]
is of full column rank. Consider the matrix
\[N=\left(\begin{array}{cccc|cccc}
    A'_{1} &  A'_{2}& \cdots &A'_{m} &A''_{1}&A''_{2} 
     &\cdots &A''_{m}\\
     B'_1 &B'_2  &\cdots& B'_{m}&B''_1 &B''_2  &\cdots& B''_{m}
\end{array}\right).\]
Note that $\left(\begin{array}{c}
    A'_{i}\\
     B'_i
\end{array}\right)$, $i=1,2,\cdots, m$  are Vandermonde matrices. By removing their repeated locators, we obtain a new matrix:
\[N_1=\left(\begin{array}{c|cccc}
    U &A''_{1}&A''_{2} 
     &\cdots &A''_{m}\\
    V&B''_1 &B''_2  &\cdots& B''_{m}
\end{array}\right),\]
where $\left(\begin{array}{c}
    U\\
    V
\end{array}\right)$, $\left(\begin{array}{c}
    A''_{i}\\
     B''_i
\end{array}\right)$, $i=1,2,\cdots, m$ are Vandermonde matrices with distinct locators. Then $N_1$ is a Vandermonde matrix. Note that $N_1$ has $at'+\delta-1$ rows and at most $at'+\delta-1$ columns. Thus $N_1$ is of full column rank. By the construction, $\left(\begin{array}{c}
    A'_{i}\\
     B'_i
\end{array}\right)$ is still a submatrix of $\left(\begin{array}{c}
    U\\
    V
\end{array}\right)$. So we can use the same column operations to eliminate the matrix $A''_i$ and obtain a new matrix 
\[N_2=\left(\begin{array}{c|c|c|c|c}
    U &\mathbf{0}&\mathbf{0} 
     &\cdots &\mathbf{0}\\
    V& B''_1-B'_1(A'_{1})^{-1}A''_{1} &B''_2-B'_2(A'_{2})^{-1}A''_{2}  &\cdots& B''_{m}-B'_m(A'_{m})^{-1}A''_{m}
\end{array}\right),\]
which is of full column rank.
We then deduce that $\left(\begin{array}{c|c|c|c}
    B''_1-B'_1(A'_{1})^{-1}A''_{1} &B''_2-B'_2(A'_{2})^{-1}A''_{2}  &\cdots& B''_{m}-B'_m(A'_{m})^{-1}A''_{m}
\end{array}\right)$ 
is of full column rank. The proof is completed.
\end{proof}
\begin{remark}
    It is worth noting that our construction is inspired by the work of \cite{jin2019explicit} and \cite{xing2018construction}; however, it differs from theirs, as it does not satisfy the condition stated in \cite[Theorem III.3]{xing2018construction}.
\end{remark}

We use the element $\alpha_{i,j}$ ($i\in [t],~ j \in [k_{I}]$) as the index of the corresponding column of $H^F$. Let $\mathcal{U}_{i}=\{\alpha_{i,1}, \cdots, \alpha_{i,k_{I}}\}$, which represents the unchanged symbols of our final code during the conversion procedure. The remaining columns of $H^F$ are indexed by $\mathcal{W}$, which correspond to the newly written symbols of our final code in the conversion procedure.

Note that $H^F|_{\mathcal{U}_i}$ has $at'+\delta-1=d_F-1$ nonzero rows. Let $\mathcal{T}_i$ denote the set of row indices corresponding to its zero rows. We use $\mathbf{0}_{\mathcal{T}_i \times k_{I}}$ to denote these zero rows, and let $\widehat{H}_i \in \mathbb{F}^{(d_F-1) \times k_I}_q$ denote the submatrix of $H^F|_{\mathcal{U}_i}$ consisting of its nonzero rows. Then 
\[\widehat{H}_i=V_{at'+\delta-1}(\alpha_{i,1}, \cdots, \alpha_{i,k_I}).\]
For the convenience of our construction, we set \[H^F|_{\mathcal{U}_i}=\left(\begin{array}{cc}
     \widehat{H}_i  \\
     \mathbf{0}_{\mathcal{T}_i \times k_I} 
\end{array}\right).\] However, it should be noted that this notation is not rigorous and is only used for convenience. In fact, the zero entries are not necessarily located in the lower part of the matrix. Specifically, the rows indexed by $\mathcal{T}_i$ are zero rows, while the remaining part corresponds to $\widehat{H}_i$.

\textbf{Construction of the $t=st'$ intial MDS codes}: For any $i \in [t]$, suppose $  n_{I_i} \in  [k_{I}+at'+\delta-1 , q]$, and let \\
${\alpha_{i,1}, \cdots, \alpha_{i,k_{I}}, \xi_{i,1}, \cdots, \xi_{i,n_{I_i}-k_{I}}}$ be distinct elements in $ \mathbb{F}_q$. By Lemma~\ref{VI.1}, there exists a vector  \\
$\mathbf{w}_i=(w_{i,1}, \cdots, w_{i,k_{I}+at'+\delta-1}) \in (\mathbb{F}_q^*)^{k_{I}+at'+\delta-1}$ such that the GRS code $\mathrm{GRS}_{k_{I}}(\alpha_{i,1}, \cdots, \alpha_{i,k_{I}}, \xi_{i,1}, \cdots, \xi_{i,at'+\delta-1}; \mathbf{w}_i)$ has a parity-check matrix $\overline{H}_i \in \mathbb{F}_q^{(at'+\delta-1) \times (k_{I}+at'+\delta-1)}$ satisfying $\overline{H}_i|_{\mathcal{U}_i} = V_{at'+\delta-1}(\alpha_{i,1}, \cdots, \alpha_{i,k_{I}})$.

Then we define 
\[\mathcal{C}^{I_{i}}=\mathrm{GRS}_{k_{I}}(\alpha_{i,1}, \cdots, \alpha_{i,k_{I}},  \xi_{i,1}, \cdots, \xi_{i,n_{I_i}-k_{I}}; (\mathbf{w}_i,1,\cdots,1)).\] 
$\mathcal{C}^{I_{i}}$ is an $[n_{I_i},k_{I}]$-MDS code.
Let $\mathcal{R}_i=\{\xi_{i,1}, \cdots, \xi_{i,at'+\delta-1}\}$, which are the read symbols in the conversion procedure. Then, the puncture code $\mathcal{C}^{I_{i}}|_{\mathcal{U}_i \cup \mathcal{R}_i}=\mathrm{GRS}_{k_{I}}(\alpha_{i,1}, \cdots, \alpha_{i,k_{I}}, \xi_{i,1}, \cdots, \xi_{i,at'+\delta-1}; \mathbf{w}_i)$ and $\overline{H}_i$ is its parity-check matrix, which satisfies that 
\[\overline{H}_i|_{\mathcal{U}_i}=V_{at'+\delta-1}(\alpha_{i,1}, \cdots, \alpha_{i,k_{I}})=\widehat{H}_i.\]

We now present the main theorem of this section.
\begin{theorem}\label{thmVI.2}
Suppose $s \geq 2$, $a \geq 1$ with $at' \leq k_{I}-\delta$, $t'(sk_{I}+a)+\delta-1 \leq q$, and $k_{I}+at'+\delta-1 \leq n_{I_i} \leq q$.
    The initial $[n_{I_i},k_{I}]$-$\mathrm{MDS}$ codes $\mathcal{C}^{I_{i}}$  ($i\in [t])$ and the final optimal $\big(n_F=t'(sk_{I}+a+\delta-1), ~k_F=st'k_{I},~ d_F=at'+\delta; ~(r=sk_{I}+a,\delta) \big)$-$\mathrm{LRC}$ $\mathcal{C}^F$ constructed above form an access-optimal generalized merge-convertible code from $\mathrm{MDS}$ codes to an optimal $(r, \delta)$-$\mathrm{LRC}$.
\end{theorem}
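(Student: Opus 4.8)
The plan is to write down an explicit $\mathbb{F}_q$-linear conversion map $\phi\colon \mathcal{C}^{I_1}\times\cdots\times\mathcal{C}^{I_t}\twoheadrightarrow\mathcal{C}^F$ whose unchanged symbols of $\mathcal{C}^{I_i}$ are exactly those indexed by $\mathcal{U}_i$, whose read symbols of $\mathcal{C}^{I_i}$ are exactly those indexed by $\mathcal{R}_i$, and whose written symbols are exactly those indexed by $\mathcal{W}$, as fixed in the construction. Access-optimality is then immediate: the discussion preceding the parameter setting shows, via Theorem~\ref{III.1} and Theorem~\ref{III.2} (using $d_F\le k_I$, $d_F\le n_{I_i}-k_I+1$ and $\langle tk_I\rangle_r\ge k_I+1$), that every convertible code with these parameters obeys $|\mathcal{U}_i|\le k_I$ and $|\mathcal{R}_i|\ge d_F-1$, so $\rho_w\ge n_F-k_F$ and $\rho_r\ge t(d_F-1)$ unconditionally. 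The map will keep the unchanged coordinates, $c^F|_{\mathcal{U}_i}:=c^{I_i}|_{\mathcal{U}_i}$, and recover $c^F|_{\mathcal{W}}$ from the parity-check relations $H^F(c^F)^\top=\mathbf{0}$.

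The first and hardest step is to prove that $\bigcup_{i\in[t]}\mathcal{U}_i$ is an information set of $\mathcal{C}^F$, equivalently that the square matrix $H^F|_{\mathcal{W}}$ is invertible. Let $y$ lie in its kernel and write $y=(y^{(1)},\dots,y^{(t')})$ with $y^{(i')}$ indexed by the locators $\beta_{i',1},\dots,\beta_{i',a},\gamma_1,\dots,\gamma_{\delta-1}$ of block $i'$; associate with $y^{(i')}$ the rational function $g_{i'}(X)=\sum_{j=1}^{a}\frac{y^{(i')}_{\beta_{i',j}}}{1-\beta_{i',j}X}+\sum_{\ell=1}^{\delta-1}\frac{y^{(i')}_{\gamma_\ell}}{1-\gamma_\ell X}$, whose coefficient $[X^p]g_{i'}$ equals $\sum_j y^{(i')}_{\beta_{i',j}}\beta_{i',j}^{\,p}+\sum_\ell y^{(i')}_{\gamma_\ell}\gamma_\ell^{\,p}$. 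The rows of $H^F|_{\mathcal{W}}$ coming from $A_{i'}$ involve only $y^{(i')}$ and force $[X^p]g_{i'}=0$ for $p=0,\dots,\delta-2$, i.e. $g_{i'}=O(X^{\delta-1})$; the bottom $at'$ rows coming from $B_1,\dots,B_{t'}$ force $\sum_{i'}[X^p]g_{i'}=0$ for $p=\delta-1,\dots,at'+\delta-2$, hence $\sum_{i'}g_{i'}=O(X^{at'+\delta-1})$. Writing $g_{i'}=\frac{X^{\delta-1}M_{i'}(X)}{\prod_{j}(1-\beta_{i',j}X)\prod_{\ell}(1-\gamma_\ell X)}$ with $\deg M_{i'}\le a-1$, and using that the factor $\prod_{\ell}(1-\gamma_\ell X)$ is shared by every block, one finds that $\sum_{i'}g_{i'}$ equals $X^{\delta-1}\sum_{i'}M_{i'}(X)\prod_{i''\neq i'}\prod_{j}(1-\beta_{i'',j}X)$ divided by $\prod_{\ell}(1-\gamma_\ell X)\prod_{i'}\prod_{j}(1-\beta_{i',j}X)$, whose constant term is nonzero; the numerator here is a polynomial of degree at most $at'+\delta-2$, so being $O(X^{at'+\delta-1})$ it vanishes identically, giving $\sum_{i'}M_{i'}(X)\prod_{i''\neq i'}\prod_{j}(1-\beta_{i'',j}X)=0$. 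As the $\beta_{i',j}$ are pairwise distinct, the polynomials $\prod_{j}(1-\beta_{i',j}X)$, $i'\in[t']$, are pairwise coprime; reducing the last identity modulo $\prod_{j}(1-\beta_{i',j}X)$ yields $M_{i'}\equiv0$, hence $M_{i'}=0$ by the degree bound, hence $g_{i'}\equiv0$, and since its poles are distinct $y^{(i')}=\mathbf{0}$. Thus $y=\mathbf{0}$. I expect this step, in which the particular shape of $H^F$ (a single shared $\gamma$-block feeding the common part of every local group) is essential, to be the main obstacle.

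Granting that $H^F|_{\mathcal{W}}$ is invertible, the map $\phi$ is completed as follows. Restricting $c^{I_i}\in\mathcal{C}^{I_i}$ to the punctured code $\mathcal{C}^{I_i}|_{\mathcal{U}_i\cup\mathcal{R}_i}=\mathrm{GRS}_{k_I}(\alpha_{i,1},\dots,\alpha_{i,k_I},\xi_{i,1},\dots,\xi_{i,at'+\delta-1};\mathbf{w}_i)$ and using its parity-check matrix $\overline{H}_i$, which satisfies $\overline{H}_i|_{\mathcal{U}_i}=\widehat{H}_i$, one gets $\widehat{H}_i(c^{I_i}|_{\mathcal{U}_i})^\top=-\overline{H}_i|_{\mathcal{R}_i}(c^{I_i}|_{\mathcal{R}_i})^\top=:z_i$, a vector depending only on the $d_F-1$ coordinates of $\mathcal{C}^{I_i}$ in $\mathcal{R}_i$. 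Since $H^F|_{\mathcal{U}_i}$ coincides with $\widehat{H}_i$ on the row block $\mathcal{S}_i$ formed by the rows of $A_{\lceil i/s\rceil}$ together with all rows of the $B$-part, and vanishes on the remaining rows $\mathcal{T}_i$, we have $H^F|_{\mathcal{U}_i}(c^F|_{\mathcal{U}_i})^\top=\widehat{z}_i$, where $\widehat{z}_i$ is the vector with $z_i$ placed in the rows $\mathcal{S}_i$ and zero elsewhere; in particular it depends on $c^{I_i}$ only through $z_i$. We therefore set $c^F|_{\mathcal{W}}:=-\big(H^F|_{\mathcal{W}}\big)^{-1}\sum_{i\in[t]}\widehat{z}_i$, which is computable from the coordinates in $\mathcal{R}_1,\dots,\mathcal{R}_t$ alone, and then $H^F(c^F)^\top=\sum_i\widehat{z}_i+H^F|_{\mathcal{W}}(c^F|_{\mathcal{W}})^\top=\mathbf{0}$, so that $c^F\in\mathcal{C}^F$.

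Finally I would close the routine points. The map $\phi$ is $\mathbb{F}_q$-linear between spaces of equal dimension $tk_I=k_F$, and it is surjective: for any $c^F\in\mathcal{C}^F$, MDS-ness of $\mathcal{C}^{I_i}$ gives the unique $c^{I_i}$ with $c^{I_i}|_{\mathcal{U}_i}=c^F|_{\mathcal{U}_i}$, and $\phi(c^{I_1},\dots,c^{I_t})$ agrees with $c^F$ on the information set $\bigcup_i\mathcal{U}_i$, hence equals $c^F$; so $\phi$ is a bijection. In this conversion the coordinates in $\mathcal{U}_i$ are precisely the unchanged symbols and those in $\mathcal{R}_i$ precisely the read symbols, giving $\rho_w=n_F-\sum_i|\mathcal{U}_i|=n_F-k_F$ and $\rho_r=\sum_i|\mathcal{R}_i|=t(d_F-1)$, which match the lower bounds recalled in the first paragraph. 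Hence the constructed convertible code is access-optimal.
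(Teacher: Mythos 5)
Your proof is correct and follows the same high-level structure as the paper: set up the conversion map so that $c^F|_{\mathcal{U}_i} = c^{I_i}|_{\mathcal{U}_i}$, compute $c^F|_{\mathcal{W}}$ from the parity-check relations of $\mathcal{C}^F$ together with the restricted parity checks $\overline{H}_i$ of the punctured initial GRS codes, and observe that the resulting access costs $\rho_w = n_F - k_F$, $\rho_r = t(d_F-1)$ match the bounds from Theorems~\ref{III.1} and~\ref{III.2}. The one place where you genuinely diverge from the paper is in proving that $H^F|_{\mathcal{W}}$ is invertible. The paper does this by explicit block column operations: it first uses the shared $V_{\delta-1}(\gamma_1,\dots,\gamma_{\delta-1})$ factor to decouple one block, then reduces the determinant to $(\det P)^{t'-1}$ times the determinant of a single $(at'+\delta-1)\times(at'+\delta-1)$ Vandermonde matrix. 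You instead take an element $y$ of the kernel, encode each block $y^{(i')}$ as a rational function $g_{i'}(X)=\sum \frac{y^{(i')}_\alpha}{1-\alpha X}$, translate the $A$-rows into ``$g_{i'}=O(X^{\delta-1})$'' and the $B$-rows into ``$\sum_{i'} g_{i'}=O(X^{at'+\delta-1})$'', and then conclude $g_{i'}=0$ via a degree count on the common numerator and coprimality of the $\prod_j(1-\beta_{i',j}X)$. Both arguments are correct and exploit the same feature of $H^F$ (the $\gamma$-block shared by all local groups); the paper's version is more hands-on linear algebra while yours is a generating-function argument in the style of BCH/alternant dual-pair proofs, which some readers will find more transparent about \emph{why} the shared $\gamma$ columns matter. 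Either is a legitimate proof of the theorem.
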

\begin{proof}
To prove the conclusion, we first need to give the conversion map
     \[\phi: \mathcal{C}^{I_1} \times \cdots \times \mathcal{C}^{I_{t}} \longrightarrow \mathcal{C}^{F}\]
   such that $\phi(\mathbf{c}_1,\mathbf{c}_2,\cdots,\mathbf{c}_{t})=\mathbf{c}_F=(\mathbf{c}_F|_{\mathcal{U}_1},\cdots, \mathbf{c}_F|_{\mathcal{U}_t}, \mathbf{c}_F|_{\mathcal{W}}) \in \mathcal{C}^{F}$ with $\mathbf{c}_i \in \mathcal{C}^{I_i}$ and 
$\mathbf{c}_F|_{\mathcal{U}_i}=\mathbf{c}_i|_{\mathcal{U}_i}$. Note that $|\mathcal{R}_i|=at'+\delta-1=d_F-1$. Thus, the only thing we need to show is that the written symbols $\mathbf{c}_F|_{\mathcal{W}}$ can be generated by the read symbols $\mathbf{c}_i|_{\mathcal{R}_i}$, $i \in [t]$.
     From $\mathbf{0}=\mathbf{c}_F\cdot (H^F)^\top=(\mathbf{c}_F|_{\mathcal{U}_1},\cdots,\mathbf{c}_F|_{\mathcal{U}_t},\mathbf{c}_F|_{\mathcal{W}})\cdot(H^F|_{\mathcal{U}_1},\cdots,H^F|_{\mathcal{U}_t}, H^F|_{\mathcal{W}})^\top$, we have
     \begin{eqnarray*}
      \mathbf{c}_F|_{\mathcal{W}}\cdot (H^F|_{\mathcal{W}})^\top&=& -\sum_{i=1}^t\mathbf{c}_F|_{\mathcal{U}_i}\cdot(H^F|_{\mathcal{U}_i})^\top\\
      &=& -\sum_{i=1}^t\mathbf{c}_i|_{\mathcal{U}_i}\cdot(H^F|_{\mathcal{U}_i})^\top\\
      &=&-\sum_{i=1}^t\mathbf{c}_i|_{\mathcal{U}_i}\cdot\left(\begin{array}{cc}
     \widehat{H}_i  \\
     \mathbf{0}_{\mathcal{T}_i \times k_{I}}
\end{array}\right)^\top\\
      &=&-\sum_{i=1}^t\big(\mathbf{c}_i|_{\mathcal{U}_i}\cdot(\overline{H}_i|_{\mathcal{U}_i})^\top, \mathbf{c}_i|_{\mathcal{U}_i}\cdot\mathbf{0}_{\mathcal{T}_i \times k_{I}}^\top\big)\\
      &=&\sum_{i=1}^t\big(\mathbf{c}_i|_{\mathcal{R}_i}\cdot(\overline{H}_i|_{\mathcal{R}_i})^\top, \mathbf{0}_{\mathcal{T}_i}\big),
     \end{eqnarray*}
     where the first term of the last equality holds since $\mathbf{c}_i|_{\mathcal{U}_i}\cdot(\overline{H}_i|_{\mathcal{U}_i})^\top+\mathbf{c}_i|_{\mathcal{R}_i}\cdot(\overline{H}_i|_{\mathcal{R}_i})^\top=\mathbf{c}_i|_{\mathcal{U}_i\cup \mathcal{R}_i}\cdot(\overline{H}_i)^\top=\mathbf{0},$ while the second term of the last equality holds because the columns of $\mathbf{0}_{\mathcal{T}_i \times k_{I}}^\top$  are indexed by $\mathcal{T}_i$ and we use $\mathbf{0}_{\mathcal{T}_i}$ to denote the zero row vector indexed by $\mathcal{T}_i$.
     Finally, we only need to show that the $t'(a+\delta-1)\times t'(a+\delta-1)$ matrix $H^F|_{\mathcal{W}}$ is invertible. Actually, by the construction of $H^F$, we have 
     \[H^F|_{\mathcal{W}}=\left(\begin{array}{c|c|c|c}
    P_{1} & &  &\\
    & P_{2} &  & \\
     &  &\ddots & \\
     && &P_{t'} \\
     Q_1 &Q_2  &\cdots& Q_{t'}
\end{array}\right),\]
where $P_{i}=V_{\delta-1}(\beta_{i,1},\cdots, \beta_{i,a}, \gamma_1,\cdots, \gamma_{\delta-1})$ and 
\[Q_{i}=\left(\begin{array}{cccccc}
 \beta^{\delta-1}_{i,1} & \cdots & \beta^{\delta-1}_{i,a} &\gamma^{\delta-1}_1 &\cdots &\gamma^{\delta-1}_{\delta-1}\\
     \beta^{\delta}_{i,1} & \cdots & \beta^{\delta}_{i,a} &\gamma^{\delta}_1 &\cdots &\gamma^{\delta}_{\delta-1}  \\
     \vdots& \ddots &  \vdots & \vdots &\ddots & \vdots \\
      \beta^{at'+\delta-2}_{i,1} & \cdots & \beta^{at'+\delta-2}_{i,a} &\gamma^{at'+\delta-2}_1 &\cdots &\gamma^{at'+\delta-2}_{\delta-1}\\
\end{array}\right).\]
Then  $H^F|_{\mathcal{W}}$ is  equivalent to    \[H_1=\left(\begin{array}{c|c|c|c}
    P_{1} & &  &\\
    & P_{2} &  & \\
     &  &\ddots & \\
     P_{1}&P_{2}& \cdots&P_{t'} \\
     Q_1 &Q_2  &\cdots& Q_{t'}
\end{array}\right).\]
Denote $M_{i}=\left(\begin{array}{c}
    P_{i} \\
     Q_{i}\\
\end{array}\right)$, then 
\[H_1=\left(\begin{array}{c|c|c|c|c}
    P_{1} & &  & &\\
    & P_{2} &  & &\\
     &  &\ddots & &\\
     & & &P_{t'-1} & \\
     M_1 &M_2  &\cdots& M_{t'-1}&M_{t'}
\end{array}\right).\]
Note that $M_{i}=V_{at'+\delta-1}(\beta_{i,1},\cdots, \beta_{i,a}, \gamma_1, \cdots, \gamma_{\delta-1})=\big(V_{at'+\delta-1}(\beta_{i,1},\cdots, \beta_{i,a}),  V_{at'+\delta-1}(\gamma_1, \cdots, \gamma_{\delta-1})\big)$. For convenience, we denote $(M'_{i}=V_{at'+\delta-1}(\beta_{i,1},\cdots, \beta_{i,a})$, $T=V_{at'+\delta-1}(\gamma_1, \cdots, \gamma_{\delta-1})$, $P'_i=V_{\delta-1}(\beta_{i,1},\cdots, \beta_{i,a})$ and $P=V_{\delta-1}(\gamma_1,\cdots, \gamma_{\delta-1})$. Then
$M_{i}=\big(M'_{i},  T\big)$, $P_{i}=(P'_{i}, P)$ and
\[H_1=\left(\begin{array}{cc|cc|c|cc|cc}
    P'_{1}&P & & && & & &\\
    && P'_{2}&P & & & & &\\
     & && &\ddots & & &\\
      & &&&  &P'_{t'-1}&P &&\\
     M'_{1}&T&M'_{2}&T& \cdots&M'_{t'-1}&T &M'_{t'}&T\\
\end{array}\right).\]
We use the rightmost $T$ to eliminate all the other $T$ by performing column operations on $H_1$, resulting in the following matrix:
\[H_2=\left(\begin{array}{cc|cc|c|cc|cc}
    P'_{1}&P & & && & & &\\
    && P'_{2}&P & & & & &\\
     & && &\ddots & & &\\
      & &&&  &P'_{t'-1}&P &&\\
     M'_{1}&\mathbf{0}&M'_{2}&\mathbf{0}& \cdots&M'_{t'-1}&\mathbf{0}&M'_{t'}&T\\
\end{array}\right).\]
Then 
\[\det(H_2)=\big(\det(P)\big)^{t'-1}\cdot \det(M'_1|M'_2|\cdots|M'_{t'-1}|M_{t'}|T).\]
Note that $P=V_{\delta-1}(\gamma_1,\cdots, \gamma_{\delta-1})$ is a $(\delta-1)\times(\delta-1) $ Vandermonde matrix of full rank and $(M'_1|M'_2|\cdots|M'_{t'-1}|M_{t'}|T)=V_{at'+\delta-1}(\beta_{1,1},\cdots, \beta_{1,a}, \cdots, \beta_{t',1},\cdots, \beta_{t',a},\gamma_1, \cdots, \gamma_{\delta-1})$ is a $(at'+\delta-1)\times(at'+\delta-1) $ Vandermonde matrix of full rank. Thus $H_2$ is invertible, which implies that $H^F|_{\mathcal{W}}$ is also invertible. Now we complete the proof.
\end{proof}

\begin{remark}
To the best of our knowledge, our construction is the first access-optimal generalized merge-convertible code that converts MDS codes into an optimal LRC. Since the initial MDS code we construct can be viewed as an optimal LRC with locality $k_I$, and the final code has locality $sk_I+a$, this construction also represents the first access-optimal generalized locally repairable convertible code, in which the initial and final codes have different localities.
\end{remark}
\section{Conclusion}\label{sec7}
In this paper, we focus exclusively on constructing access-optimal generalized merge-convertible codes. We begin by establishing a new lower bound on the access cost when the final code is an $(r,\delta)$-LRC, which subsumes all previously known bounds in the merge conversion setting. We then propose a construction of access-optimal and per-symbol read access-optimal convertible codes from MDS codes to an MDS code by using the automorphism group of a rational function field. This approach is further extended to construct access-optimal convertible codes from $(r,\delta)$-LRCs to an $(r,\delta)$-LRC. Finally, we provide the first explicit construction of convertible codes that enable merge conversion from MDS codes to an $(r,\delta)$-LRC.

It is worth noting that this work focuses exclusively on generalized merge-convertible codes. Investigating lower bounds on access cost when the initial code is an $(r,\delta)$-LRC in the splitting regime, as well as developing explicit constructions of convertible codes that achieve these bounds, remains an interesting avenue for future research.



\bibliographystyle{IEEEtran}
\bibliography{ref}

\end{document}